\newtheorem{theorem}{Theorem}[section]
\newtheorem{lemma}[theorem]{Lemma}
\newtheorem{conjecture}[theorem]{Conjecture}
\begin{document}

\title{More Tight Bounds for Active Self-Assembly Using an Insertion Primitive\thanks{Some of the results in this work have been published as~\cite{Malchik-2014a}.}}

\author{Benjamin Hescott\footnote{Tufts University, Department of Computer Science, \texttt{hescott@cs.tufts.edu}} 
\and Caleb Malchik\footnote{Tufts University, Department of Computer Science, \texttt{caleb.malchik@tufts.edu}}  
\and Andrew Winslow\footnote{Universit\'e Libre de Bruxelles, D\'{e}partement d'Informatique, \texttt{awinslow@ulb.ac.be}}
}

\date{}

\maketitle

\begin{abstract}
We prove several limits on the behavior of a model of self-assembling particles introduced by Dabby and Chen (SODA 2013), called \emph{insertion systems}, where monomers insert themselves into the middle of a growing linear polymer.
First, we prove that the expressive power of these systems is equal to context-free grammars, answering a question posed by Dabby and Chen.

Second, we give tight bounds on the maximum length and minimum expected time of constructed polymers in systems of three increasingly restricted classes.
We prove that systems of $k$ monomer types can deterministically construct polymers of length $n = 2^{\Theta(k^{3/2})}$ in $O(\log^{5/3}(n))$ expected time.
We also prove that if non-deterministic construction of a finite number of polymers is permitted, then the expected construction time can be reduced to $O(\log^{3/2}(n))$ at the trade-off of decreasing the length to $2^{\Theta(k)}$.
If the system is allowed to construct an infinite number of polymers, then constructing polymers of unbounded length in $O(\log{n})$ expected time is possible.
We follow these positive results with a set of lower bounds proving that these are the best possible polymer lengths and expected construction times.
\end{abstract}

\section{Introduction}

In this work we study a theoretical model of \emph{algorithmic self-assembly}, in which simple particles aggregate in a distributed manner to carry out complex functionality.
Perhaps the the most well-studied theoretical model of algorithmic self-assembly is the \emph{abstract Tile Assembly Model (aTAM)} of Winfree~\cite{Winfree-1998a} consisting of square \emph{tiles} irreversibly attach to a growing polyomino-shaped assembly according to matching edge colors.
This model is capable of Turing-universal computation~\cite{Winfree-1998a}, self-simulation~\cite{Doty-2012b}, and efficient assembly of general (scaled) shapes~\cite{Soloveichik-2007a} and squares~\cite{Adleman-2001a,Rothemund-2000a}.
Despite this power, the model is incapable of assembling some shapes efficiently; a single row of $n$ tiles requires $n$ distinct tile types and $\Omega(n)$ expected assembly time~\cite{Adleman-2001a} and any shape with $n$ tiles requires $\Omega(\sqrt{n})$ expected time to assemble~\cite{Keenan-2013a}.

Such a limitation may not seem so significant, except that a wide range of biological systems form complex assemblies in time polylogarithmic in the assembly size, as Dabby and Chen~\cite{Dabby-2013a} and Woods et al.~\cite{Woods-2013b} observe.
These biological systems are capable of such growth because their particles (e.g.\ living cells) \emph{actively} carry out geometric reconfiguration.
In the interest of both understanding naturally occurring biological systems and creating synthetic systems with additional capabilities, several models of \emph{active self-assembly} have been proposed recently.
These include the graph grammars of Klavins et al.~\cite{Klavins-2004b,Klavins-2004a}, the \emph{nubots} model of Woods et al.~\cite{Chen-2014a,Chen-2013a,Woods-2013b}, and the insertion systems of Dabby and Chen~\cite{Dabby-2013a}. 
Both graph grammars and nubots are capable of a topologically rich set of assemblies and reconfigurations, but rely on stateful particles forming complex bond arrangements.
In contrast, insertion systems consist of stateless particles forming a single chain of bonds.
Indeed, all insertion systems are captured as a special case of nubots in which a linear polymer is assembled via parallel insertion-like reconfigurations, as in Theorem 5.1 of~\cite{Woods-2013a}. 
The simplicity of insertion systems makes their implementation in matter a more immediately attainable goal; Dabby and Chen~\cite{Dabby-2013b,Dabby-2013a} describe a direct implementation of these systems in DNA. 

We are careful to make a distinction between \emph{active self-assembly} where assemblies undergo reconfiguration, and \emph{active tile self-assembly}~\cite{Gautam-2013a,Hendricks-2013a,Jonoska-2014a,Jonoska-2014b,Keenan-2013b,Majumder-2008a,Padilla-2012a,Padilla-2014a}, where tile-based assemblies change their bond structure.
Active self-assembly enables exponential assembly rates by enabling insertion of new particles throughout the assembly, while active tile self-assembly does not: assemblies formed consist of rigid tiles and the $\Omega(\sqrt{n})$ expected-time lower bound of Keenan, Schweller, Sherman, and Zhong~\cite{Keenan-2013a} still applies.

\subsection{Our results}

We prove two types of results on the behavior of insertion systems.
We start by considering what languages can be \emph{expressed} by insertion systems, i.e. correspond to a set of polymers constructed by some insertion system.
Dabby and Chen prove that only context-free languages are expressible by insertion systems, and ask whether every context-free language is indeed expressed by some insertion system. 
We answer this question in the affirmative, and as a consequence prove that the languages expressible by insertion systems are exactly the context-free languages.

After achieving a tight bound on the expressive power of insertion systems, we turn to considering the efficiency of insertion systems, both with regards to the number of monomer types used and the expected time to construct polymers.
Dabby and Chen prove that insertion systems with $k$ monomer types can deterministically construct polymers of length $n = 2^{\Theta(\sqrt{k})}$ in $O(\log^3{n})$ expected time.
In Section~\ref{sec:positive-results} we describe three constructions in this vein.
First, we improve on the result of Dabby and Chen, proving that deterministic construction of polymers with length $n = 2^{\Theta(k^{3/2})}$ in $O(\log^{5/3}(n))$ expected time is possible (Theorem~\ref{thm:types-extreme-ub}).
Second, we prove that allowing non-deterministic construction of a finite set of polymers enables constructing polymers of length $n = 2^{\Theta(k)}$ in $O(\log^{3/2}(n))$ expected time (Theorem~\ref{thm:speed-extreme-ub}).
Third, we briefly describe a 2-monomer-type system constructing polymers of all lengths $n \geq 3$ in $O(\log{n})$ expected time, which can easily be seen to be optimal for unrestricted insertion systems.

In Section~\ref{sec:negative-results}, we that prove these systems are each optimal with regards to both polymer length and expected construction time.
First, we prove that deterministically constructing a polymer of length $n$ takes $\Omega(\log^{5/3}(n))$ expected time (Theorem~\ref{thm:deterministic-lb}), matching the construction time of Theorem~\ref{thm:types-extreme-ub} and proving that no trade-off between monomer types and construction is possible for deterministic systems.

Next, we prove that constructing a polymer of length $n$ in a system constructing a finite set of polymers, including deterministic systems, requires $\Omega(\log^{2/3}(n))$ monomer types and, if $\Theta(\log^{2/3}(n))$ types are used, $\Omega(\log^{5/3}(n))$ expected time (Theorem~\ref{thm:types-extreme-lb}).
Both of these bounds match those achieved by the construction of Theorem~\ref{thm:types-extreme-ub}.

Finally, we prove that constructing a polymer of length $n$ in a system constructing a finite set of polymers requires $\Omega(\log^{3/2}(n))$ expected time and, if $\Theta(\log^{3/2}(n))$ expected time is achieved, $\Omega(\log{n})$ monomer types (Theorem~\ref{thm:speed-extreme-lb}). 
Again, both of these bounds match those achieved by the construction of Theorem~\ref{thm:speed-extreme-ub}.
 
Taken together, these results give an asymptotically tight characterization of maximum length and minimum expected time of polymer constructions for three general classes of insertion systems: deterministic construction, construction of a finite set of polymers, and unrestricted construction.
Our lower bounds also imply a length and time tradeoff for systems constructing a finite set of polymers: constructing a polymer of length $n$ using $k$ monomer types takes $\Omega(\log^2(n)/\sqrt{k})$ (Lemma~\ref{lem:trade-off-lb}).

\section{Definitions}

Section~\ref{sec:grammar-defns} defines standard context-free grammars, as well as a special type called \emph{pair grammars}, used in Section~\ref{sec:expressive-power}.
Section~\ref{sec:is-defns} defines insertion systems, with a small number of modifications from the definitions given in~\cite{Dabby-2013a} designed to ease readability.
Section~\ref{sec:expressive-power-defn} formalizes the notion of expressive power used in~\cite{Dabby-2013a}.

\subsection{Grammars}
\label{sec:grammar-defns}

A \emph{context-free grammar} $\mathcal{G}$ is a 4-tuple $\mathcal{G} = (\Sigma, \Gamma, \Delta, S)$.
The sets $\Sigma$ and $\Gamma$ are the \emph{terminal} and \emph{non-terminal symbols} of the grammar.
The set $\Delta$ consists of \emph{production rules} or simply \emph{rules}, each of the form $L \rightarrow R_1 R_2 \cdots R_j$ with $L \in \Gamma$ and $R_i \in \Sigma \cup \Gamma$.
Finally, the symbol $S \in \Gamma$ is a special \emph{start symbol}.
The \emph{language of $\mathcal{G}$}, denoted $L(\mathcal{G})$, is the set of finite strings that can be \emph{derived} by starting with $S$, and repeatedly replacing a non-terminal symbol found on the left-hand side of some rule in $\Delta$ with the sequence of symbols on the right-hand side of the rule.
The \emph{size} of $\mathcal{G}$ is $|\Delta|$, the number of rules in $\mathcal{G}$.
If every rule in $\Delta$ is of the form $L \rightarrow R_1 R_2$ or $L \rightarrow t$, with $R_1 R_2 \in \Gamma$ and $t \in \Sigma$, then the grammar is said to be in \emph{Chomsky normal form}.

An \emph{integer-pair grammar}, used in Section~\ref{sec:expressive-power}, is a context-free grammar in Chomsky normal form such that each non-terminal symbol is an integer pair $(a, d)$, and each production rule has the form $(a, d) \rightarrow (a, b) (c, d)$ or $(a, d) \rightarrow t$.

\subsection{Insertion systems}
\label{sec:is-defns}

An \emph{insertion system} in the active self-assembly model of Dabby and Chen~\cite{Dabby-2013a} carries out the construction of a linear \emph{polymer} consisting of constant length \emph{monomers}.
A polymer grows incrementally by the insertion of a monomer at an \emph{insertion site} between two existing monomers in the polymer, according to complementary bonding sites between the monomer and the insertion site.

An insertion system $\mathcal{S}$ is defined as a 4-tuple $\mathcal{S} = (\Sigma, \Delta, Q, R)$.
The first element, $\Sigma$, is a set of symbols.
Each symbol $s \in \Sigma$ has a \emph{complement} $s^*$.
We denote the complement of a symbol $s$ as $\overline{s}$, i.e. $\overline{s} = s^*$ and $\overline{s^*} = s$.
The set $\Delta$ is a set of \emph{monomer types}, each assigned a \emph{concentration}.
Each monomer is specified by a quadruple $(a, b, c, d)^+$ or $(a, b, c, d)^-$, where $a, b, c, d \in \Sigma \cup \{s^* : s \in \Sigma\}$, and each concentration is a real number between~0 and~1.
The sum of all concentrations in $\Delta$ must be at most~1.
The two symbols $Q = (a, b)$ and $R = (c, d)$ are special two-symbol monomers that together form the \emph{initiator} of $\mathcal{S}$.
It is required that either $\overline{a} = d$ or $\overline{b} = c$.
The \emph{size} of $\mathcal{S}$ is $|\Delta|$, the number of monomer types in $\mathcal{S}$.

A \emph{polymer} is a sequence of monomers $Q m_1 m_2 \dots m_n R$ where $m_i \in \Delta$ such that for each pair of adjacent monomers $(w, x, a, b) (c, d, y, z)$, either $\overline{a} = d$ or $\overline{b} = c$.
The \emph{length} of a polymer is the number of monomers, including $Q$ and $R$, it contains.
Each pair of adjacent monomer ends $(a, b) (c, d)$ form an \emph{insertion site}.
Monomers can be inserted into an insertion site $(a, b) (c, d)$ according to the following rules (see Figure~\ref{fig:figure}):

\begin{enumerate}
\item If $\overline{a} = d$, then any monomer $(\overline{b}, e, f, \overline{c})^+$ can be inserted.
\item If $\overline{b} = c$, then any monomer $(e, \overline{a}, \overline{d}, f)^-$ can be inserted.\footnote{In~\cite{Dabby-2013a}, this rule is described as a monomer $(\overline{d}, f, e, \overline{a})^-$ that is inserted into the polymer as $(e, \overline{a}, \overline{d}, f)$.}
\end{enumerate}

\begin{figure}[ht]
\centering
\includegraphics[scale=1.0]{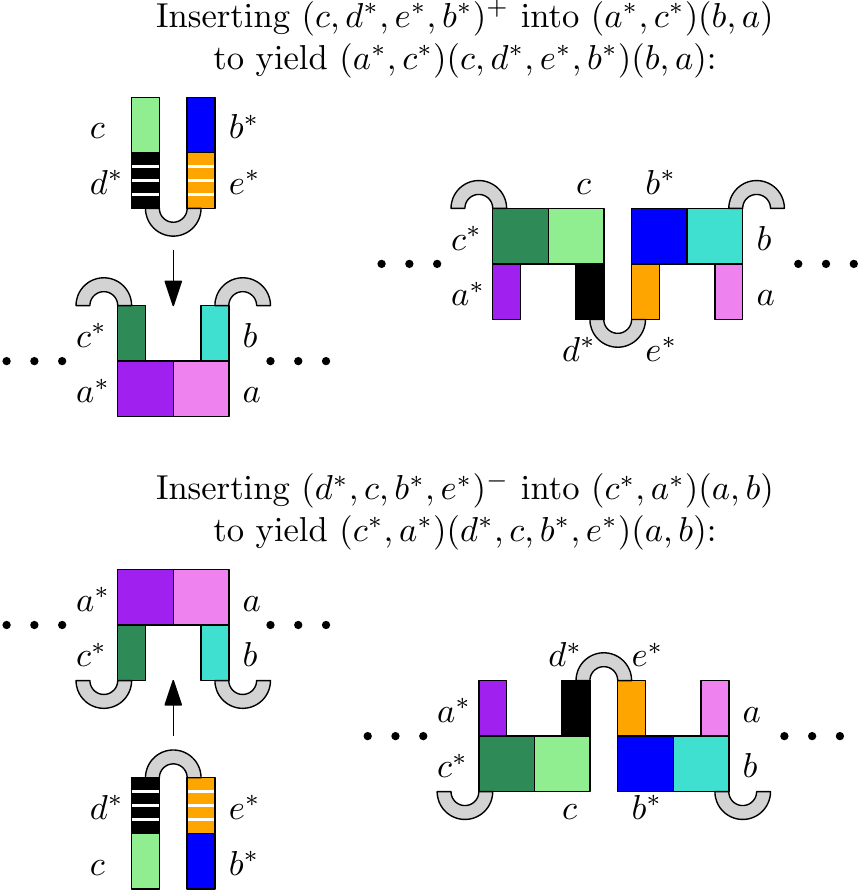}
\caption{A pictorial interpretation of the two insertion rules for monomers. Loosely based on Figure~2 and corresponding DNA-based implementation of~\cite{Dabby-2013a}.}
\label{fig:figure}
\end{figure}

A monomer is inserted after time $t$, where $t$ is an exponential random variable with rate equal to the concentration of the monomer type.
The set of all polymers \emph{constructed} by an insertion system is recursively defined as any polymer constructed by inserting a monomer into a polymer constructed by the system, beginning with the initiator.
Note that the insertion rules guarantee by induction that for every insertion site $(a, b) (c, d)$, either $\overline{a} = d$ or $\overline{b} = c$. 

We say that a polymer is \emph{terminal} if no monomer can be inserted into any insertion site in the polymer, and that an insertion system \emph{deterministically constructs} a polymer $P$ if every polymer constructed by the system is either $P$ or is non-terminal and has length less than that of $P$ (i.e. can become $P$).
The \emph{string representation} of a polymer is the sequence of symbols found on the polymer from left to right, e.g. $(a, b) (b^*, a, d, c) (c^*, a)$ has string representation $abb^*adcc^*a$.
We call the set of string representations of all terminal polymers of an insertion system $\mathcal{S}$ the \emph{language} of $\mathcal{S}$, denoted $L(\mathcal{S})$. 


\subsection{Expressive power}
\label{sec:expressive-power-defn}

Intuitively, a system \emph{expresses} another if the terminal polymers or strings created by the system ``look'' like the terminal polymers or strings created by the other system. 
In the simplest instance, an integer-pair grammar $\mathcal{G}'$ is said to \emph{express} a context-free grammar $\mathcal{G}$ if $L(\mathcal{G}') = L(\mathcal{G})$.
Similarly, a grammar $\mathcal{G}$ is said to \emph{express} an insertion system $\mathcal{S}$ if $L(\mathcal{S}) = L(\mathcal{G})$, i.e. if the set of string representations of the terminal polymers of $\mathcal{S}$ equals the language of $\mathcal{G}$. 

An insertion system $\mathcal{S} = (\Sigma', \Delta', Q', R')$ is said to express a grammar $\mathcal{G} = (\Sigma, \Gamma, \Delta, S)$ if there exists a function $g : \Sigma' \cup \{s^* : s \in \Sigma'\} \rightarrow \Sigma \cup \{\varepsilon\}$ such that $\{g(s_1') g(s_2') \dots g(s_n') : s_1' s_2' \dots s_n' \in L(\mathcal{S})\} = L(\mathcal{G})$.
More precisely, we require that there exists a fixed integer $\kappa$ such that for any substring $s_{i+1}' s_{i+2}' \dots s_{i+\kappa}'$ in a string in $L(\mathcal{S})$, $\{g(s_{i+1}'), g(s_{i+2}'), \dots, g(s_{i+\kappa}')\} \neq \{\varepsilon\}$. 
That is, the insertion system symbols mapping to grammar terminal symbols are evenly distributed throughout the polymer.
The requirement of a fixed integer $\kappa$ prevents the possibility of a polymer containing arbitrarily long and irregular regions of ``garbage'' monomers.

\section{The Expressive Power of Insertion Systems}
\label{sec:expressive-power}

Dabby and Chen proved that any insertion system has a context-free grammar expressing it.
They construct such a grammar by creating a non-terminal for every possible insertion site 
and a production rule for every monomer type insertable into the site.
For instance, the insertion site $(a,b)(c^*,a^*)$ and monomer type $(b^*, d^*, e, c)^+$ induce non-terminal symbol $A_{(a, b)(c^*, a^*)}$ and production rule $A_{(a, b)(c^*, a^*)} \rightarrow A_{(a,b)(b^*, d^*)} A_{(e, c)(c^*,a^*)}$.
Here we give a reduction in the other direction, resolving in the affirmative the question posed by Dabby and Chen of whether context-free grammars and insertion systems have the same expressive power:

\begin{theorem}
\label{thm:IS-express-CFG}
For every context-free grammar $G$, there exists an insertion system that expresses $G$.
\end{theorem}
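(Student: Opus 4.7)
The plan is to factor the construction through an integer-pair grammar, the normal form that naturally matches how insertion sites split: when a monomer is inserted at a site $(a,b)(c,d)$, the two resulting sub-sites inherit $(a,b)$ on the left and $(c,d)$ on the right, exactly mirroring how an integer-pair production $(a,d) \to (a,b)(c,d)$ propagates the first coordinate to the left child and the second to the right. I will first convert $G$ to Chomsky normal form, next convert the CNF grammar into an equivalent integer-pair grammar $G'$, and finally build an insertion system $\mathcal{S}$ whose monomer types correspond to productions of $G'$. The mapping $g$ of the ``express'' definition will send each terminal-marker symbol to its terminal in $\Sigma$ and all other symbols (spacers, non-terminal identifiers, and scaffolding) to $\varepsilon$, so that $g(L(\mathcal{S})) = L(G') = L(G)$.

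For the insertion-system stage, for each integer $a$ occurring in $G'$ I introduce a symbol $\sigma_a$ (with complement $\sigma_a^*$); I also use a universal spacer pair $\#, \#^*$ and a terminal-marker symbol $[t]$ for each $t \in \Sigma$. The canonical site associated with integer-pair non-terminal $(a,d)$ is $(\sigma_a, \#)(\#^*, \sigma_d)$, which satisfies the rule-2 insertion condition $\overline{\#} = \#^*$. For each binary production $(a,d) \to (a,b)(c,d)$ I include a rule-2 monomer of the form $(\sigma_c, \sigma_a^*, \sigma_d^*, \sigma_b)^-$; the forced 2nd and 3rd coordinates $\sigma_a^*, \sigma_d^*$ meet the rule-2 insertion requirement, while the free 1st and 4th coordinates carry the production's data. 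The resulting sub-sites $(\sigma_a, \#)(\sigma_c, \sigma_a^*)$ and $(\sigma_d^*, \sigma_b)(\#^*, \sigma_d)$ are not themselves canonical, so I follow each production monomer with a short prescribed sequence of ``cleanup'' monomers whose only purpose is to insert further scaffolding (all erased by $g$) that re-normalises each sub-site into canonical shape $(\sigma_{a'}, \#)(\#^*, \sigma_{d'})$. Terminal productions $(a,d) \to t$ are handled by a monomer that places the marker $[t]$ and creates ``dead'' sub-sites whose coordinates fail every monomer's insertion condition in $\mathcal{S}$. The initiator $Q R$ is chosen to realise the canonical site of $G'$'s start non-terminal.

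The main obstacle is two-fold. First, the CNF-to-integer-pair reduction is non-trivial because integer-pair productions rigidly propagate coordinates down the spines of the derivation tree; I plan to index integer-pair non-terminals by pairs of CNF non-terminals together with auxiliary markers, so that both coordinates of a pair jointly carry enough derivation context to simulate arbitrary CNF productions, at the cost of a polynomial blowup. Second, the cleanup monomers must activate only at the intended intermediate sub-sites and nowhere else, since a cleanup monomer inserted at an unintended location would produce a spurious polymer in $L(\mathcal{S})$; I plan to stratify the spacer symbols into disjoint ``phases,'' one per cleanup step, so that each stage has a unique insertion signature triggerable only at the post-production sites it is meant to normalise. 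Finally, the $\kappa$-boundedness condition on the distribution of non-$\varepsilon$ symbols is immediate, since each production of $G'$ contributes only $O(1)$ non-$\varepsilon$ symbols (zero for binary productions, one terminal marker for terminal productions), and these appear in a fixed pattern throughout any constructed polymer.
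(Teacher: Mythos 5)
Your two-stage decomposition (CFG $\to$ integer-pair grammar $\to$ insertion system) is exactly the paper's, and your insertion-system architecture --- one production monomer per binary rule, cleanup monomers to renormalise the two sub-sites into canonical form, dead sites for terminal rules, and an erasing map $g$ --- also mirrors the paper's construction. However, there are two concrete gaps.

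First, the CFG-to-integer-pair step is the crux of the first half and you never actually construct it. You correctly identify the obstruction --- a child pair is forced to inherit one coordinate from its parent, so you cannot freely re-index --- but ``index integer-pair non-terminals by pairs of CNF non-terminals together with auxiliary markers'' does not explain how the inherited coordinate is reconciled with the intended CNF non-terminal. The paper's resolution is a specific arithmetic device: encode $A_i$ by \emph{every} pair $(a,d)$ with $a+d \equiv i \pmod{n}$, so that whatever left coordinate $a$ is inherited from the parent, the free coordinate can always be chosen to realise the desired non-terminal, and the correspondence between partial derivations is a bijection. Without this (or an equivalent) mechanism, the first half of your argument is an unproven claim rather than a reduction.

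Second, your production monomer $(\sigma_c, \sigma_a^*, \sigma_d^*, \sigma_b)^-$ places its free symbols on the wrong sides. After insertion the left sub-site is $(\sigma_a,\#)(\sigma_c,\sigma_a^*)$ and the right sub-site is $(\sigma_d^*,\sigma_b)(\#^*,\sigma_d)$: the left sub-site exposes $c$ but must be renormalised to the left child $(a,b)$, and symmetrically on the right. The cleanup monomer for the left sub-site must therefore supply $\sigma_b$, yet the site does not determine $b$: two productions $(a,d)\to(a,b)(c,d)$ and $(a,d')\to(a,b')(c,d')$ yield identical left sub-sites, so both cleanup monomers are insertable and the wrong one creates a child $(a,b')$ not licensed by any rule applied at that node, so the system constructs terminal polymers whose image under $g$ lies outside $L(G)$. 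Your phase-stratified spacers do not help here, since the ambiguity lives in the grammar data rather than the scaffolding. The fix is to swap the free coordinates so that $\sigma_b$ faces the left sub-site and $\sigma_c$ the right one --- this is precisely what the paper's monomer $(s_a, s_b, s_c^*, s_d^*)^+$ accomplishes, after which each cleanup monomer is determined by the symbols visible at the sub-site it repairs.
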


The primary difficulty in proving Theorem~\ref{thm:IS-express-CFG} lies in developing a way to simulate the ``complete'' replacement that occurs during derivation with the ``incomplete'' replacement that occurs at an insertion site during insertion.
For instance, $bcAbc \Rightarrow bcDDbc$ via a production rule $A \rightarrow DD$ and $A$ is completely replaced by $DD$.
On the other hand, inserting a monomer $(b^*, d, d, c)^+$ into a site $(a, b) (c^*, a^*)$ yields the consecutive sites $(a, b) (b^*, d)$ and $(d, c) (c^*, a^*)$, with $(a, b) (c^*, a^*)$ only partially replaced -- the left side of the first site and the right side of second site together form the initial site.
This behavior constrains how replacement can be captured by insertion sites, and the $\kappa$ parameter of the definition of expression (Section~\ref{sec:expressive-power-defn}) prevents eliminating the issue via additional insertions.

We overcome this difficulty by proving Theorem~\ref{thm:IS-express-CFG} in two steps. 
First, we prove that integer-pair grammars, a constrained type of grammar with incomplete replacements, are able to express context-free grammars (Lemma~\ref{lem:PG-express-CFG}).
Second, we prove integer-pair grammars can be expressed by insertion systems (Lemma~\ref{lem:IS-express-PG}).

\begin{lemma}
\label{lem:PG-express-CFG}
For every context-free grammar $\mathcal{G}$, there exists an integer-pair grammar that expresses $\mathcal{G}$.
\end{lemma}

\begin{proof}
Let $\mathcal{G} = (\Sigma, \Gamma, \Delta, S)$. 
Let $n = |\Gamma|$.
Start by putting $\mathcal{G}$ into Chomsky normal form and then relabeling the non-terminals of $\mathcal{G}$ to $A_0, A_1, \dots, A_{n-1}$, with $S = A_0$.

Now we define an integer-pair grammar $\mathcal{G}' = (\Sigma', \Gamma', \Delta', S')$ such that $L(\mathcal{G}') = L(\mathcal{G})$.
Let $\Sigma' = \Sigma$ and $\Gamma' = \{(a, d) : 0 \leq a,d < n \}$.
For each production rule $A_i \rightarrow A_j A_k$ in $\Delta$, add to $\Delta'$ the set of rules $(a, d) \rightarrow (a, b) (c, d)$, with $0 \leq a < n$, $d = (i - a) \bmod n$, $b = (j - a) \bmod n$, and $c = (k - d) \bmod n$.
For each production rule $A_i \rightarrow t$ in $\Delta$, add to $\Delta'$ the set of rules $(a, d) \rightarrow t$, with $0 \leq a < n$ and $d = (i - a) \bmod n$.
Let $S' = (0, 0)$.

We claim that a partial derivation $P'$ of $\mathcal{G}'$ exists if and only if the partial derivation $P$ obtained by replacing each non-terminal $(a, d)$ in $P'$ with $A_{(a + d) \bmod n}$ is a partial derivation of $\mathcal{G}$.
By construction, a rule $(a, d) \rightarrow (a, b) (c, d)$ is in $\Delta'$ if and only if the rule $A_{(a + d) \bmod n} \rightarrow A_{(a + b) \bmod n} A_{(c + d) \bmod n}$ is in $\Delta$.
Similarly, a rule $(a, d) \rightarrow t$ is in $\Delta'$ if and only if the rule $A_{(a + d) \bmod n} \rightarrow r$ is in $\Delta$.
Also, $S' = (0, 0)$ and $S = A_{(0 + 0) \bmod n}$.
So the claim holds by induction.

Since the set of all partial derivations of $P'$ are equal to those of $P$, the completed derivations are as well and $L(\mathcal{S}') = L(\mathcal{S})$. 
So $\mathcal{G}'$ expresses $\mathcal{G}$.
\qed
\end{proof}

\begin{lemma}
\label{lem:IS-express-PG}
For every integer-pair grammar $\mathcal{G}$, there exists an insertion system that expresses $\mathcal{G}$. 
\end{lemma}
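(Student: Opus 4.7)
The plan is to build, for an arbitrary integer-pair grammar $\mathcal{G} = (\Sigma, \Gamma, \Delta, S)$ with $\Gamma = \{(a,d) : 0 \leq a,d < n\}$ and $S = (0,0)$, an insertion system $\mathcal{S}$ in which each non-terminal $(a,d)$ is represented by an insertion site of the fixed form $(A_a, Y)(\overline{Y}, A_d)$, where the $A_i$ are fresh symbols encoding the integers $0, 1, \ldots, n-1$ and $Y$ is a common marker. The initiator $Q = (A_0, Y)$, $R = (\overline{Y}, A_0)$ represents $S = (0,0)$ and is valid because $\overline{Y} = \overline{Y}$.

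Each production of $\mathcal{G}$ is simulated by a short fixed-length batch of insertions. For a non-terminal rule $(a,d) \to (a,b)(c,d)$ I would introduce a minus ``rule'' monomer $(E_{ab}, \overline{A_a}, \overline{A_d}, F_{cd})^-$ together with two plus ``spacer'' monomers $(\overline{Y}, A_b, G_{ab}, \overline{E_{ab}})^+$ and $(\overline{F_{cd}}, H_{cd}, A_c, Y)^+$, where $E, F, G, H$ are fresh junk symbols indexed by the child pairs. The rule monomer fits the parent site by insertion rule~2, and the two resulting type-$1$ sites each then accept exactly one of the spacers, so these three insertions together carve out the two new real child sites $(A_a, Y)(\overline{Y}, A_b)$ and $(A_c, Y)(\overline{Y}, A_d)$ and leave behind two inert garbage sites. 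A terminal rule $(a,d) \to t$ is simulated by a single minus monomer $(T_{ad}, \overline{A_a}, \overline{A_d}, t)^-$, which embeds $t \in \Sigma$ in the polymer and again produces only inert garbage.

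I would then verify that every garbage site is terminal by a finite case analysis, checking that the junk symbols can be chosen fresh and distinct so that no monomer in $\mathcal{S}$ matches the four-symbol pattern at any garbage site. The required constraints, namely that no $A_i$ is the complement of any $A_j$, that no junk symbol coincides with $Y$, $A_i$, or $\sigma \in \Sigma$, and that each junk symbol uniquely identifies its indexing pair, are easy to arrange.

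The main obstacle is the density condition attached to the map $g$ that sends each $\sigma \in \Sigma$ to itself and every other symbol to $\varepsilon$: a single $\kappa$ must work uniformly for every terminal polymer. The key structural observation is that the three-monomer block produced by a non-terminal rule sits strictly between the polymer regions recursively filled in by the rule's two children. Consequently, two adjacent terminal symbols $t_L$ and $t_R$ in any terminal polymer are separated by exactly the $12$ symbols of the rule block of their lowest common ancestor together with the $3$ pre-terminal symbols of the rule producing $t_R$, a constant $15$ independent of the derivation tree; a similar constant bound holds at the polymer boundaries against the initiator, so $\kappa = 16$ suffices. Correctness then follows from a short induction showing that every reachable polymer corresponds to a (possibly partial) derivation of $\mathcal{G}$ and vice versa, which combined with Lemma~\ref{lem:PG-express-CFG} proves Theorem~\ref{thm:IS-express-CFG}.
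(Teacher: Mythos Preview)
Your proposal is correct and takes essentially the same approach as the paper: encode each non-terminal $(a,d)$ as an insertion site of a fixed shape, simulate each rule $(a,d)\to(a,b)(c,d)$ by a three-monomer gadget (one ``rule'' monomer plus two ``spacer'' monomers) that replaces the site by the two child sites together with inert garbage, and handle each terminal rule by a single closing monomer carrying $t$, yielding the periodic $4$-monomer block structure that gives $\kappa=16$. The only differences are cosmetic: you use type-2 sites $(A_a,Y)(\overline{Y},A_d)$ with a minus rule monomer and plus spacers, whereas the paper uses the dual type-1 site $(u^*,s_a^*)(s_d,u)$ with a plus rule monomer and minus spacers; and you index your junk symbols by pairs $E_{ab},F_{cd},G_{ab},H_{cd},T_{ad}$ while the paper gets by with single-indexed symbols $s_b,s_c$ together with two global markers $u,x$.
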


\begin{proof}
Let $\mathcal{G} = (\Sigma, \Gamma, \Delta, S)$.
The integer-pair grammar $\mathcal{G}$ is expressed by an insertion system $\mathcal{S} = (\Sigma', \Delta', Q', R')$ that we now define.
Let $\Sigma' = \{s_a, s_b : (a, b) \in \Gamma\} \cup \{u, x\} \cup \Sigma$.
Let $\Delta' = \Delta_1' \cup \Delta_2' \cup \Delta_3' \cup \Delta_4'$, where
$$\Delta_1' = \{(s_b, u, s_b^*, x)^- : (a, d) \rightarrow (a, b) (c, d) \in \Delta \}$$
$$\Delta_2' = \{(s_a, s_b, s_c^*, s_d^*)^+ : (a, d) \rightarrow (a, b) (c, d) \in \Delta \}$$
$$\Delta_3' = \{(x, s_c, u^*, s_c^*)^- : (a, d) \rightarrow (a, b) (c, d) \in \Delta \}$$
$$\Delta_4' = \{(s_a, t, x, s_d^*)^+ : (a, d) \rightarrow t \in \Delta \}$$

We give each monomer type equal concentration, although the precise concentrations are not important for expressive power.
Let $Q' = (u^*, a^*)$ and $R' = (b, u)$, where $S = (a, b)$.

\textbf{Insertion types.} We start by proving that for any polymer constructed by $\mathcal{S}$, only the following types of insertions of a monomer $m_2$ between two adjacent monomers $m_1 m_3$ are possible:

\begin{enumerate}
\item $m_1 \in \Delta_2'$, $m_2 \in \Delta_3'$, $m_3 \in \Delta_1'$
\item $m_1 \in \Delta_3'$, $m_2 \in \Delta_2' \cup \Delta_4'$, $m_3 \in \Delta_1'$
\item $m_1 \in \Delta_3'$, $m_2 \in \Delta_1'$, $m_3 \in \Delta_2'$
\end{enumerate}

Moreover, we claim that for every adjacent $m_1 m_3$ pair satisfying one of these conditions, an insertion \emph{is} possible.
That is, there is a monomer $m_2$ that can be inserted, necessarily from the monomer subset specified. 

Consider each possible combination of $m_1 \in \Delta_i'$ and $m_3 \in \Delta_j'$, respectively, with $i, j \in \{1, 2, 3, 4\}$.
Observe that for an insertion to occur at insertion site $(a, b) (c, d)$, the symbols $\overline{a}$, $\overline{b}$, $\overline{c}$, and $\overline{d}$ must each occur on some monomer.
Then since $x^*$ and $t^*$ do not appear on any monomers, any $i, j$ with $i \in \{1, 4\}$ or $j \in \{3, 4\}$ cannot occur.
This leaves monomer pairs $(\Delta_i', \Delta_j')$ with $(i, j) \in \{(2, 1), (2, 2), (3, 1), (3, 2)\}$.

Insertion sites between $(\Delta_2', \Delta_1')$ pairs have the form $(s_c^*, s_d^*) (s_b, u)$, so an inserted monomer must have the form $(s_e, s_c, s_u^*, s_f)^-$ and is in $\Delta_3'$.
An insertion site $(s_c^*, s_d^*) (s_b, u)$ implies a rule of the form $(e, d) \rightarrow (e, f) (c, d)$ in $\Delta$, so there exists a monomer $(x, s_c, u^*, s_c^*)^- \in \Delta_3'$ that can be inserted.

Insertion sites between $(\Delta_3', \Delta_2')$ pairs have the form $(u^*, s_c^*) (s_a, s_b)$, so an inserted monomer must have the form $(\underline{~~}, u, s_b^*, \underline{~~})^-$ and thus is in $\Delta_1'$.
An insertion site $(u^*, s_c^*) (s_a, s_b)$ implies a rule of the form $(a, d) \rightarrow (a, b) (e, d)$ in $\Gamma$, so there exists a monomer $(s_b, u, s_b^*, x)^- \in \Delta_1'$ that can be inserted.

Insertion sites between $(\Delta_2', \Delta_2')$ pairs can only occur once a monomer $m_2 \in \Delta_2'$ has been inserted between a pair of adjacent monomers $m_1 m_3$ with either $m_1 \in \Delta_2'$ or $m_3 \in \Delta_2'$, but not both.
But we just proved that all such such possible insertions only permit $m_2 \in \Delta_3' \cup \Delta_1'$. 
Moreover, the initial insertion site between $Q'$ and $R'$ has the form $(u^*, s_a^*) (s_b, u)$ of an insertion site with $m_1 \in \Delta_3'$ and $m_3 \in \Delta_1'$.
So no pair of adjacent monomers $m_1 m_3$ are ever both from $\Delta_2'$ and no insertion site between $(\Delta_2', \Delta_2')$ pairs can ever exist.

Insertion sites between $(\Delta_3', \Delta_1')$ pairs have the form $(u^*, s_c^*) (s_b, u)$, so an inserted monomer must have the form $(s_c, \underline{~~}, \underline{~~}, b^*)^+$ or $(\underline{~~}, u, u^*, \underline{~~})^-$ and is in $\Delta_2'$ or $\Delta_4'$.
We show by induction that for each such insertion site $(u^*, s_c^*) (s_b, u)$ that $(c, b) \in \Gamma$.
First, observe that this is true for the insertion site $(u^*, s_a^*) (s_b, u)$ between $Q'$ and $R'$, since $(a, b) = S \in \Gamma$. 
Next, suppose this is true for all insertion sites of some polymer and a monomer $m_2 \in \Delta_2' \cup \Delta_4'$ is about to be inserted into the polymer between monomers from $\Delta_3'$ and $\Delta_1'$.
Inserting a monomer $m_2 \in \Delta_4'$ only reduces the set of insertion sites between monomers in $\Delta_3'$ and $\Delta_1'$, and the inductive hypothesis holds.
Inserting a monomer $m_2 \in \Delta_2'$ induces new $(\Delta_3', \Delta_2')$ and $(\Delta_2', \Delta_1')$ insertion site pairs between $m_1 m_2$ and $m_2 m_3$.
These pairs must accept two monomers $m_4 \in \Delta_1$ and $m_5 \in \Delta_3$, inducing a sequence of monomers $m_1 m_4 m_2 m_5 m_3$ with adjacent pairs $(\Delta_3', \Delta_1')$, $(\Delta_1', \Delta_2')$, $(\Delta_2', \Delta_3')$, $(\Delta_3', \Delta_1')$.
Only the first and last pairs permit insertion and both are $(\Delta_3', \Delta_1')$ pairs.

Now consider the details of the three insertions yielding $m_1 m_4 m_2 m_5 m_3$, starting with $m_1 m_3$.
The initial insertion site $m_1 m_3$ must have the form $(u^*, s_a^*) (s_d, u)$.
So the sequence of insertions has the following form, with the last two insertions interchangeable.
The symbol $\diamond$ is used to indicate the site being modified and the inserted monomer shown in bold:

$$(u^*, s_a^*) \diamond (s_d, u)$$
$$(u^*, s_a^*) \diamond \bm{(s_a, s_b, s_c^*, s_d^*)} (s_d, u)$$
$$(u^*, s_a^*) \bm{(s_b, u, s_b^*, x)} (s_a, s_b, s_c^*, s_d^*) \diamond (s_d, u)$$
$$(u^*, s_a^*) \diamond (s_b, u, s_b^*, x) (s_a, s_b, s_c^*, s_d^*) \bm{(x, s_c, u^*, s_c^*)} \diamond (s_d, u)$$

Notice the two resulting $(\Delta_3', \Delta_1')$ pair insertion sites $(u^*, s_a^*) (s_b, u)$ and $(u^*, s_c^*) (s_d, u)$.
Assume, by induction, that the monomer $m_2$ must exist.
So there is a rule $(a, d) \rightarrow (a, b) (c, d) \in \Delta$ and $(a, b), (c, d) \in \Gamma$, fulfilling the inductive hypothesis.
So for every insertion site $(u^*, s_c^*) (s_b, u)$ between a $(\Delta_3', \Delta_1')$ pair there exists a non-terminal $(c, b) \in \Gamma$.
So for every adjacent monomer pair $m_1 m_3$ with $m_1 \in \Delta_3'$ and $m_3 \in \Delta_1'$, there exists a monomer $m_2 \in \Delta_2' \cup \Delta_4'$ that can be inserted between $m_1$ and $m_2$. 

\textbf{Partial derivations and terminal polymers.} Next, consider the sequence of insertion sites between $(\Delta_3', \Delta_1')$ pairs in a polymer constructed by a modified version of $\mathcal{S}$ lacking the monomers of $\Delta_4'$.
We claim that a polymer with a sequence $(u^*, s_{a_1}^*) (s_{b_1}, u), (u^*, s_{a_2}^*) (s_{b_2}, u), \dots, (u^*, s_{a_i}^*) (s_{b_i}, u)$ of $(\Delta_3', \Delta_1')$ insertion sites is constructed if and only if there is a partial derivation $(a_1, b_1) (a_2, b_2) \dots (a_i, b_i)$ of a string in $L(\mathcal{G})$.
This follows directly from the previous proof by observing that two new adjacent $(\Delta_3', \Delta_1')$ pair insertion sites $(u^*, s_a^*) (s_b, u)$ and $(u^*, s_c^*) (s_d, u)$ can replace a $(\Delta_3', \Delta_1')$ pair insertion site if and only if there exists a rule $(a, d) \rightarrow (a, b) (c, d) \in \Delta$.

Observe that any string in $L(\mathcal{G})$ can be derived by first deriving a partial derivation containing only non-terminals, then applying only rules of the form $(a, d) \rightarrow t$.
Similarly, since the monomers of $\Delta_4'$ never form half of a valid insertion site, any terminal polymer of $\mathcal{S}$ can be constructed by first generating a polymer containing only monomers in $\Delta_1' \cup \Delta_2' \cup \Delta_3'$, then only inserting monomers from $\Delta_4'$.
Also note that the types of insertions possible in $\mathcal{S}$ imply that in any terminal polymer, any triple of adjacent monomers $m_1 m_2 m_3$ with $m_1 \in \Delta_i'$, $m_2 \in \Delta_j'$, and $m_3 \in \Delta_k'$, that $(i, j, k) \in \{(4, 1, 2), (1, 2, 3), (2, 3, 4), (3, 4, 1)\}$, with the first and last monomers of the polymer in $\Delta_4'$.

\textbf{Expression.} Define the following piecewise function $g : \Sigma' \cup \{ s^* : s \in \Sigma' \} \rightarrow \Sigma \cup \{ \varepsilon \}$ that maps to $\varepsilon$ except for second symbols of monomers in $\Delta_4'$.

\begin{displaymath}
   g(s) = \left\{
     \begin{array}{ll}
       t, & \text{if } t \in \Sigma \\
       \varepsilon, & \text{otherwise}
     \end{array}
   \right.
\end{displaymath}

Observe that every string in $L(\mathcal{S})$ has length $2 + 4 \cdot (4n - 3) + 2 = 16n-8$ for some $n \geq 0$.
Also, for each string $s_1' s_2' \dots s_{16n-8}' \in L(\mathcal{S})$, $g(s_1') g(s_2') \dots g(s_{16n-8}') =  \varepsilon^3 t_1 \varepsilon^{16} t_2 \varepsilon^{16} \dots t_n \varepsilon^5$. 
There is a terminal polymer with string representation in $L(\mathcal{S})$ yielding the sequence $s_1 s_2 \dots s_n$ if and only if the polymer can be constructed by first generating a terminal polymer excluding $\Delta_4'$ monomers with a sequence of $(\Delta_3', \Delta_1')$ insertion pairs $(a_1, b_1) (a_2, b_2) \dots (a_n, b_n)$ followed by a sequence of insertions of monomers from $\Delta_4'$ with second symbols $t_1 t_2 \dots t_n$.
Such a generation is possible if and only if $(a_1, b_1) (a_2, b_2) \dots (a_n, b_n)$ is a partial derivation of a string in $L(\mathcal{G})$ and $(a_1, b_1) \rightarrow t_1, (a_2, b_2) \rightarrow t_2, \dots, (a_n, b_n) \rightarrow t_n \in \Delta$. 
So applying the function $g$ to the string representations of the terminal polymers of $\mathcal{S}$ gives $L(\mathcal{G})$, i.e. $L(\mathcal{S}) = L(\mathcal{G})$.
Moreover, the second symbol in every fourth monomer in a terminal polymer of $\mathcal{S}$ maps to a symbol of $\Sigma$ using $g$. 
So $\mathcal{S}$ expresses $\mathcal{G}$ with the function $g$ and $\kappa = 16$.
\qed
\end{proof}

\section{Positive Results for Polymer Growth}
\label{sec:positive-results}

Dabby and Chen also consider the size and speed of constructing finite polymers.
They give a construction achieving the following result:
 
\begin{theorem}[\cite{Dabby-2013a}] 
\label{thm:dabby-chen-fast}
For any positive integer $r$, there exists an insertion system with $O(r^2)$ monomer types that deterministically constructs a polymer of length $n = 2^{\Theta(r)}$ in $O(\log^3{n})$ expected time.
Moreover, the expected time has an exponentially decaying tail probability.
\end{theorem}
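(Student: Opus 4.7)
The plan is to build an insertion system with a recursive, hierarchical doubling structure organized into $r$ \emph{levels}. The system begins with a constant-length initiator, and for each $i = 1, 2, \dots, r$ a level-$i$ monomer is inserted into every insertion site created during level $i-1$. Consequently the polymer's length roughly doubles at each level, and after level $r$ it has length $\Theta(2^r)$, matching the target $n = 2^{\Theta(r)}$. Each level uses $O(r)$ distinct monomer types --- carrying $O(\log r)$ bits of level-index and positional information --- for a total of $O(r^2)$ monomer types overall.

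I would first define, for each level $i \in \{1, \dots, r\}$, a collection of $O(r)$ monomer types whose bonding symbols include both the level index $i$ and a constant-size piece of positional data (such as the address of the site within its level). The symbols are chosen so that a level-$i$ monomer only fits into an insertion site whose two flanking monomers are level-$(i-1)$ monomers of the complementary roles; this local uniqueness property forces determinism, since every insertion site admits exactly one monomer type. Each of the $O(r^2)$ monomer types is assigned concentration $\Theta(1/r^2)$, so that the total concentration is $O(1)$ and each compatible insertion fires as a rate-$\Theta(1/r^2)$ exponential.

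For correctness I would verify by induction on $i$ that once all level-$i$ insertions have completed, the polymer is a prescribed word of length $\Theta(2^i)$; the final length is therefore $\Theta(2^r)$. For the time bound, the key observation is that the insertion events form a causal tree of depth $r$: every level-$i$ insertion is causally preceded by exactly the $i-1$ ancestor insertions that together produced its site, and these ancestors are independent rate-$\Theta(1/r^2)$ exponentials. Hence the waiting time until any specific leaf insertion fires is a sum of at most $r$ i.i.d.\ exponentials of rate $\Theta(1/r^2)$, which has mean $\Theta(r \cdot r^2) = \Theta(r^3)$. A Chernoff bound for sums of exponentials (equivalently, a Gamma tail estimate) gives that this sum exceeds $(1 + \delta)\,r^3$ with probability at most $e^{-\Omega(\delta r)}$; union-bounding over the $O(2^r)$ leaves then shows that the polymer is complete within $O(r^3) = O(\log^3 n)$ time with probability $1 - e^{-\Omega(r)}$, yielding both the expected-time bound and the exponentially decaying tail asserted in the theorem.

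The main obstacle will be designing the bonding symbols so that the level encoding is watertight: no level-$j$ monomer with $j \neq i$ can accidentally insert into a site intended for level $i$, and no unintended adjacency produced during the hierarchical doubling either spawns a spurious insertion site or blocks a required one. This demands a careful, case-based analysis of each possible adjacent pair of monomer ends, analogous to the bookkeeping carried out in the proof of Lemma~\ref{lem:IS-express-PG}, coupled with an inductive invariant describing the precise structure of the polymer produced after each level.
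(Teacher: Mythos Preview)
This theorem is not proved in the paper; it is quoted from Dabby and Chen~\cite{Dabby-2013a} as background for the paper's own improvements (Theorems~\ref{thm:types-extreme-ub} and~\ref{thm:speed-extreme-ub}), so there is no in-paper proof to compare your proposal against.

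That said, your sketch is sound and follows the same template the paper uses for its stronger results: deterministic level-by-level doubling, uniform concentrations $\Theta(1/k)$ over the $k$ monomer types, and a Chernoff bound on each $\Theta(r)$-length root-to-leaf insertion chain followed by a union bound over the $2^{O(r)}$ leaves---exactly the running-time argument in the proof of Theorem~\ref{thm:types-extreme-ub}. One loose end: your accounting of $O(r)$ monomer types \emph{per level} is not well motivated; the phrase ``$O(\log r)$ bits of level-index and positional information'' actually suggests $O(r)$ types in total, not per level. In fact a constant number of types per level, indexed only by the level $i\in\{1,\dots,r\}$, already suffices for a plain doubling cascade in this model and stays well within the stated $O(r^2)$ budget (the $O(r^2)$ in the Dabby--Chen statement is not tight, as Theorem~\ref{thm:types-extreme-ub} later shows). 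Apart from that, your outline would go through once the insertion-site bookkeeping you flag in your last paragraph is actually carried out.
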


We begin this section by improving on this construction significantly in both polymer length and expected running time (Theorem~\ref{thm:types-extreme-ub}).
In Section~\ref{sec:negative-results} we prove that our construction is the best possible with respect to both the polymer length and construction time for deterministic systems.

\begin{theorem}
\label{thm:types-extreme-ub}
For any positive integer $r$, there exists an insertion system with $O(r^2)$ monomer types that deterministically constructs a polymer of length~$n = 2^{\Theta(r^3)}$ in $O(\log^{5/3}(n))$ expected time.
Moreover, the expected time has an exponentially decaying tail probability.
\end{theorem}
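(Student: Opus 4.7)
The plan is to build on Dabby and Chen's construction (Theorem~\ref{thm:dabby-chen-fast}), which uses $O(r^2)$ monomer types to drive $\Theta(r)$ nested doublings, and to reuse the same type budget to drive $\Theta(r^3)$ nested doublings instead. The slack to exploit is that an insertion site $(a,b)(c,d)$ is labeled by four symbols, whereas an insertable monomer $(\overline{b},e,f,\overline{c})^+$ depends only on the inner two symbols $b,c$ of the site; so with $|\Sigma|=\Theta(r)$ each monomer type can fire at up to $\Theta(r)$ distinct sites, giving the $O(r^2)$ monomer budget enough room to define behavior at $\Theta(r^3)$ different insertion contexts.

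Concretely, I would take $|\Sigma|=\Theta(r)$ with each symbol encoding two small fields, and partition the $O(r^2)$ monomer types into three groups implementing three nested counters. The innermost group of $O(r)$ types is a Dabby--Chen-style counter that builds a sub-polymer of length $2^{\Theta(r)}$ over $\Theta(r^2)$ insertions along its critical path. The middle group launches a fresh instance of the innermost counter at each of its insertions, producing sub-polymers of length $2^{\Theta(r^2)}$ over $\Theta(r^3)$ insertions along its critical path. The outermost group plays the analogous role one level up, yielding the final polymer of length $n=2^{\Theta(r^3)}$ over $\Theta(r^5)$ insertions along its critical path. The three groups can share the same monomer budget because the ``level'' of a site is read off from its outer symbols while the same inner-symbol pattern is reused across levels. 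Determinism is then checked site-by-site: each reachable site's four symbols encode a unique (level, sub-state) label matched by exactly one monomer type.

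For the time bound I would use that each insertion is an independent exponential waiting time of rate $\Omega(1)$. Along any path in the insertion tree from the initiator to a terminal-polymer position, at most $\Theta(r^5)$ insertions occur sequentially by the three-level structure, so the waiting time on any single such path has expectation $\Theta(r^5)$ and exponentially decaying tail. A union bound over the $n=2^{\Theta(r^3)}$ root-to-leaf paths, combined with a standard Chernoff-type concentration estimate for sums of independent sub-exponential random variables, still gives total expected time $O(r^5)=O(\log^{5/3}(n))$ together with the claimed exponentially decaying tail. The main obstacle I expect is the monomer design itself: three nested counters must share a single pool of $O(r^2)$ monomers without ever letting a monomer intended for one level fire at a site belonging to another, and each level must hand off cleanly to the next once its counter exhausts. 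Packing this level-bookkeeping into only a handful of symbol fields while keeping the alphabet at $O(r)$ is the combinatorial heart of the argument.
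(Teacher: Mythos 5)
You correctly identify the crucial slack: an inserted monomer is constrained by only two of the four symbols at an insertion site, so $O(r^2)$ monomer types can service $\Theta(r^3)$ distinct counter states provided each type is indifferent to one of the three loop variables. This is exactly the mechanism the paper uses (each monomer in its triple for-loop either does not depend on the outer variable $a$ or does not depend on the middle variable $c$). However, both your architecture and your timing analysis break down. The claim that ``each insertion is an independent exponential waiting time of rate $\Omega(1)$'' is false: all $\Theta(r^2)$ monomer types must be assigned positive concentrations summing to at most~$1$, and in a deterministic system each reachable site accepts exactly one type, so a typical insertion has rate $O(1/r^2)$ and expected waiting time $\Omega(r^2)$. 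Consequently a critical path of $\Theta(r^5)$ sequential insertions, which is what your ``launch a fresh inner counter at every step of the next level'' scheme produces, would cost $\Omega(r^{7})=\Omega(\log^{7/3}(n))$ expected time rather than $O(\log^{5/3}(n))$; the paper's own Lemmas~\ref{lem:usable-sites-ub} and~\ref{lem:speed-lb} (combined as in Theorem~\ref{thm:deterministic-lb}) make this quantitative. Worse, with $|\Sigma|=\Theta(r)$ there are only $O(r^{3})$ distinct insertion sites, so an insertion sequence of length $\Theta(r^5)$ must repeat a site, and a repeated site along an insertion sequence can be pumped to grow the polymer without bound, contradicting deterministic construction of a finite polymer.

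The repair, and what the paper actually does, is to keep the critical path at $\Theta(\log n)=\Theta(r^{3})$: a single triple for-loop counter over states $(a,b,c)\in\{0,\dots,r\}^{3}$ is advanced one increment at a time, each increment implemented by $O(1)$ insertions, and at every inner increment the active site is \emph{duplicated} so that the number of live counters doubles at a constant fraction of the $\Theta(r^{3})$ steps. The length $2^{\Theta(r^{3})}$ thus comes from the breadth of the insertion tree, not from nesting counters in depth, and the expected time is the depth $\Theta(r^{3})$ times the per-insertion cost $\Theta(r^{2})$, i.e.\ $O(r^{5})=O(\log^{5/3}(n))$. Your final step --- a Chernoff bound on each root-to-leaf path followed by a union bound over the $2^{\Theta(r^{3})}$ leaves --- is sound and is how the paper obtains the exponentially decaying tail, but it only yields the claimed bound once the depth is brought down to $\Theta(r^{3})$.
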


\begin{proof}
The approach is to implement a three variable counter where each variable ranges over the values $0$ to $r$, effectively carrying out the execution of a triple for-loop. 
Insertion sites of the form $(s_a, s_b) (s_c, s_a^*)$ are used to encode the state of the counter, where $a$, $b$, and $c$ are the variables of the outer, inner, and middle loops, respectively. 
Three types of variable increments are carried out by the counter:

\begin{enumerate}[leftmargin=2cm] \itemsep5pt
\item[Inner:] If $b < r$, then $(s_a, s_b) (s_c, s_a^*) \leadsto (s_a, s_{b+1}) (s_c, s_a^*)$.
\item[Middle:] If $b = r$ and $c < r$, then $(s_a, s_b) (s_c, s_a^*) \leadsto (s_a, s_0) (s_{c+1}, s_a^*)$.
\item[Outer:] If $b = c = r$ and $a < r$, then $(s_a, s_b) (s_c, s_a^*) \leadsto (s_{a+1}, s_0) (s_0, s_{a+1}^*)$.
\end{enumerate} 

A site is \emph{modified} by a sequence of monomer insertions that yields a new usable site where all other sites created by the insertion sequence are unusable.
For instance, we modify a site $(s_a, \bm{s_b}) (s_c, s_a^*)$ to become $(s_a, \bm{s_d}) (s_c, s_a^*)$, written $(s_a, s_b) (s_c, s_a^*) \leadsto (s_a, s_d) (s_c, s_a^*)$, by adding the monomer types $(s_b^*, x, u, s_c^*)^+$ and $(x, u^*, s_a, s_d)^-$ to the system, where $x$ is a special symbol whose complement is not found on any monomer.
These two monomer types cause the following sequence of insertions, using $\diamond$ to indicate the site being modified and the inserted monomer shown in bold:
$$ (s_a, s_b) \diamond (s_c, s_a^*) $$
$$ (s_a, s_b) \bm{(s_b^*, x, u, s_c^*)} \diamond (s_c, s_a^*) $$
$$ (s_a, s_b) (s_b^*, x, u, s_c^*) \bm{(x, u^*, s_a, s_d)} \diamond (s_c, s_a^*) $$

We call this simple modification, where a single symbol in the insertion site is replaced with another symbol, a \emph{replacement}. 
There are four types of replacements, seen in Table~\ref{tab:replacements}, that can each be implemented by a pair of corresponding monomers.

\renewcommand{\arraystretch}{1.25}

\begin{table}[ht!]
\begin{center}
\begin{tabular}{| c | c |}
\hline
Replacement & Monomers \\
\hline
$(s_a, \bm{s_b}) (s_c, s_a^*) \leadsto (s_a, \bm{s_d}) (s_c, s_a^*)$ & $(s_b^*, x, u, s_c^*)^+$, $(x, u^*, s_a, s_d)^-$   \\
$(s_a, s_b) (\bm{s_c}, s_a^*) \leadsto (s_a, s_b) (\bm{s_d}, s_a^*)$ & $(s_b^*, u, x, s_c^*)^+$, $(s_d, s_a^*, u^*, x)^-$ \\
$(\bm{s_b}, s_a) (s_a^*, s_c) \leadsto (\bm{s_d}, s_a) (s_a^*, s_c)$ & $(x, s_b^*, s_c^*, u)^-$, $(u^*, x, s_d, s_a)^+$   \\
$(s_b, s_a) (s_a^*, \bm{s_c}) \leadsto (s_b, s_a) (s_a^*, \bm{s_d})$ & $(u, s_b^*, s_c^*, x)^-$, $(s_a^*, s_d, x, u^*)^+$ \\ 
\hline
\end{tabular}
\end{center}
\caption{The four types of replacement steps and monomer pairs that implement them.
The symbol $u$ can be any symbol, and $x$ is a special symbol whose complement does not appear on any monomer.}
\label{tab:replacements}
\end{table}

Each of the three increment types are implemented using a sequence of site modifications.
The resulting triple for-loop carries out a sequence of $\Theta(r^3)$ insertions, constructing a $\Theta(r^3)$-length polymer.
A $2^{\Theta(r^3)}$-length polymer is achieved by simultaneously duplicating each site during each inner increment.
Because the for-loop runs for $\Theta(r^3)$ steps and duplicates at a constant fraction of these steps (those with $0 \leq b < r$), the number of counters reaching the final $a = b = c = r$ state is $2^{\Theta(r^3)}$.
In the remainder of the proof, we detail the implementation of each increment type, starting with the simplest: middle increments.

\textbf{Middle increment.}
A middle increment of a site $(s_a, s_b) (s_c, s_a^*)$ occurs when the site has the form $(s_a, s_r) (s_c, s_a^*)$ with $0 \leq c < r$, performing the modification $(s_a, s_r) (s_c, s_a^*) \leadsto (s_a, s_0) (s_{c+1}, s_a^*)$.
We implement middle increments using a sequence of three replacements:
$$ (s_a, s_r) (s_c, s_a^*) \overset{1}{\leadsto} (s_a, s_r) (s_{f_1(c)}, s_a^*) \overset{2}{\leadsto} (s_a, s_0) (s_{f_1(c)}, s_a^*) \overset{3}{\leadsto} (s_a, s_0) (s_{c+1}, s_a^*) $$

where $f_i(n) = n + 2ir^2$.
The use of $f$ is to avoid unintended interactions between monomers, since for any $n_1, n_2$ with $0 \leq n_1, n_2 \leq r$, $f_i(n_1) \neq f_j(n_2)$ for all $i \neq j$.
Compiling this sequence of replacements into monomer types gives the following monomers:

\begin{enumerate}[label=Step \arabic*:, leftmargin=2cm]
\item $(s_r^*, s_{f_2(c)}, x, s_c^*)^+$ and $(s_{f_1(c)}, s_a^*, s_{f_2(c)}^*, x)^-$.
\item $(s_r^*, x, s_{f_3(c)}, s_{f_1(c)}^*)^+$ and $(x, s_{f_3(c)}^*, s_a, s_0)^-$.
\item $(s_0^*, s_{f_4(c+1)}, x, s_{f_1(c)}^*)^+$ and $(s_{c+1}, s_a^*, s_{f_4(c+1)}^*, x)^-$. 
\end{enumerate}

This set of monomers results in the following sequence of six insertions:
$$ (s_a, s_r) \diamond (s_c, s_a^*) $$
$$ (s_a, s_r) \diamond \bm{(s_r^*, s_{f_2(c)}, x, s_c^*)} (s_c, s_a^*) $$
$$ (s_a, s_r) \diamond \bm{(s_{f_1(c)}, s_a^*, s_{f_2(c)}^*, x)} (s_r^*, s_{f_2(c)}, x, s_c^*) (s_c, s_a^*) $$
$$ (s_a, s_r) \diamond (s_{f_1(c)}, s_a^*) $$
$$ (s_a, s_r) \bm{(s_r^*, x, s_{f_3(c)}, s_{f_1(c)}^*)} \diamond (s_{f_1(c)}, s_a^*) $$
$$ (s_a, s_r) (s_r^*, x, s_{f_3(c)}, s_{f_1(c)}^*) \bm{(x, s_{f_3(c)}^*, s_a, s_0)} \diamond (s_{f_1(c)}, s_a^*) $$
$$ (s_a, s_0) \diamond (s_{f_1(c)}, s_a^*) $$
$$ (s_a, s_0) \diamond \bm{(s_0^*, s_{f_4(c+1)}, x, s_{f_1(c)}^*)} (s_{f_1(c)}, s_a^*) $$
$$ (s_a, s_0) \diamond \bm{(s_{c+1}, s_a^*, s_{f_4(c+1)}^*, x)} (s_0^*, s_{f_4(c+1)}, x, s_{f_1(c)}^*) (s_{f_1(c)}, s_a^*) $$
$$ (s_a, s_0) \diamond (s_{c+1}, s_a^*) $$

Since each inserted monomer has an instance of $x$, all other insertion sites created are unusable.
This is true of the insertions used for outer increments and duplications as well.

\textbf{Outer increment.}
An outer increment of the site $(s_a, s_b) (s_c, s_a^*)$ occurs when the site has the form $(s_a, s_r) (s_r, s_a^*)$ with $0 \leq a < r$.
We implement this step using a two-phase sequence of three normal replacements and a special quadruple replacement (Step 3):
$$ (s_a, s_r) (s_r, s_a^*) \overset{1}{\leadsto} (s_a, s_{f_5(a)}) (s_r, s_a^*) \overset{2}{\leadsto} (s_a, s_{f_5(a)}) (s_{f_5(a)}^*, s_a^*) $$ 
$$ (s_a, s_{f_5(a)}) (s_{f_5(a)}^*, s_a^*) \overset{3}{\leadsto} (s_{a+1}, s_{f_5(0)}) (s_0, s_{a+1}^*) \overset{4}{\leadsto} (s_{a+1}, s_0) (s_0, s_{a+1}^*) $$ 

At each step, a (necessary) complementary pair of symbols is maintained, which results in a sequence of more than~4 replacements.
As with inner and middle increments, we compile replacement steps~1,~2, and~4 into monomers using Table~\ref{tab:replacements}.
Step~3 is a special pair of monomers.

\begin{enumerate}[label=Step \arabic*:, leftmargin=2cm]
\item $(s_r^*, x, s_{f_6(r)}, s_r^*)^+$ and $(x, s_{f_6(r)}^*, s_a, s_{f_5(a)})^-$.
\item $(s_{f_5(a)}^*, s_{f_7(r)}^*, x, s_r^*)^+$ and $(s_{f_5(a)}^*, s_a^*, s_{f_7(r)}, x)^-$. 
\item $(s_{f_5(a)}^*, x, s_{a+1}, s_{f_5(a)})^+$ and $(s_0, s_{a+1}^*, s_a, x)^-$.
\item $(s_{f_5(a)}^*, x, s_{f_7(r)}, s_0^*)^+$ and $(x, s_{f_7(r)}^*, s_{a+1}, s_0)^-$.  
\end{enumerate} 

Here is the sequence of insertions, using $\diamond$ to indicate the site being modified and the inserted monomer shown in bold:
$$ (s_a, s_r) \diamond (s_r, s_a^*) $$ 
$$ (s_a, s_r) \bm{(s_r^*, x, s_{f_6(r)}, s_r^*)} \diamond (s_r, s_a^*) $$ 
$$ (s_a, s_r) (s_r^*, x, s_{f_6(r)}, s_r^*) \bm{(x, s_{f_6(r)}^*, s_a, s_{f_5(a)})} \diamond (s_r, s_a^*) $$ 
$$ (s_a, s_{f_5(a)}) \diamond (s_r, s_a^*) $$ 
$$ (s_a, s_{f_5(a)}) \diamond \bm{(s_{f_5(a)}^*, s_{f_7(r)}^*, x, s_r^*)} (s_r, s_a^*) $$ 
$$ (s_a, s_{f_5(a)}) \diamond \bm{(s_{f_5(a)}^*, s_a^*, s_{f_7(r)}, x)} (s_{f_5(a)}^*, s_{f_7(r)}^*, x, s_r^*) (s_r, s_a^*) $$ 
$$ (s_a, s_{f_5(a)}) \diamond (s_{f_5(a)}^*, s_a^*) $$
$$ (s_a, s_{f_5(a)}) \bm{(s_{f_5(a)}^*, x, s_{a+1}, s_{f_5(a)})} \diamond (s_{f_5(a)}^*, s_a^*) $$
$$ (s_a, s_{f_5(a)}) (s_{f_5(a)}^*, x, s_{a+1}, s_{f_5(a)}) \diamond \bm{(s_0, s_{a+1}^*, s_a, x)} (s_{f_5(a)}^*, s_a^*) $$
$$ (s_{a+1}, s_{f_5(a)}) \diamond (s_0, s_{a+1}^*) $$
$$ (s_{a+1}, s_{f_5(a)}) \bm{(s_{f_5(a)}^*, x, s_{f_7(r)}, s_0^*)} \diamond (s_0, s_{a+1}^*) $$
$$ (s_{a+1}, s_{f_5(a)}) (s_{f_5(a)}^*, x, s_{f_7(r)}, s_0^*) \bm{(x, s_{f_7(r)}^*, s_{a+1}, s_0)} \diamond (s_0, s_{a+1}^*) $$
$$ (s_{a+1}, s_0) \diamond (s_0, s_{a+1}^*) $$

\textbf{Inner increment.}
The inner increment has two phases.
The first phase (Steps~1-2) performs duplication, modifying the initial site to a pair of sites: $(s_a, s_b) (s_c, s_a^*) \leadsto (s_a, s_b) (s_{f_8(c)}, s_a^*) \dots (s_a, s_{b+1}) (s_c, s_a^*)$, yielding an incremented version of the original site and one other site.
The second phase (Steps~3-5) is $(s_a, s_b) (s_{f_8(c)}, s_a^*) \leadsto (s_a, s_{b+1}) (s_c, a^*)$, transforming the second site into an incremented version of the original site.

For the first phase, we use the three monomers:

\begin{enumerate}[leftmargin=2cm]
\item[Step 1:] $(s_b^*, s_{f_8(c)}, s_{f_8(b+1)}, s_c^*)^+$.
\item[Step 2:] $(s_{f_8(c)}, s_a^*, s_{f_8(c)}^*, x)^-$ and $(x, s_{f_8(b+1)}^*, s_a, s_{b+1})^-$.
\end{enumerate}

Here is the sequence of insertions:
$$ (s_a, s_b) \diamond (s_c, s_a^*) $$
$$ (s_a, s_b) \diamond \bm{(s_b^*, s_{f_8(c)}, s_{f_8(b+1)}, s_c^*)} \diamond (s_c, s_a^*) $$ 
$$ (s_a, s_b) \diamond \bm{(s_{f_8(c)}, s_a^*, s_{f_8(c)}^*, x)} (s_b^*, s_{f_8(c)}, s_{f_8(b+1)}, s_c^*) \diamond (s_c, s_a^*) $$
$$ (s_a, s_b) \diamond (s_{f_8(c)}, s_a^*, s_{f_8(c)}^*, x) (s_b^*, s_{f_8(c)}, s_{f_8(b+1)}, s_c^*) \bm{(x, s_{f_8(b+1)}^*, s_a, s_{b+1})} \diamond (s_c, s_a^*) $$
$$ (s_a, s_b) \diamond (s_{f_8(c)}, s_a^*) \dots (s_a, s_{b+1}) \diamond (s_c, s_a^*) $$

The last two insertions occur independently and may happen in the opposite order of the sequence depicted here.
In the second phase, the site $(s_a, s_b) (s_{f_8(c)}, s_a^*)$ is transformed into $(s_a, s_{b+1}) (s_c, s_a^*)$ by a sequence of replacement steps:
$$ (s_a, s_b) (s_{f_8(c)}, s_a^*) \overset{3}{\leadsto} (s_a, s_{f_9(b)}) (s_{f_8(c)}, s_a^*) \overset{4}{\leadsto} (s_a, s_{f_9(b)}) (s_c, s_a^*) \overset{5}{\leadsto} (s_a, s_{b+1}) (s_c, s_a^*) $$ 

As with previous sequences of replacement steps, we compile this sequence into a set of monomers:

\begin{enumerate}[leftmargin=2cm]
\item[Step 3:] $(s_b^*, x, s_{f_{10}(b)}, s_{f_8(c)}^*)^+$ and $(x, s_{f_{10}(b)}^*, s_a, s_{f_9(b)})^-$. 
\item[Step 4:] $(s_{f_9(b)}^*, s_{f_{11}(c)}, x, s_{f_8(c)}^*)^+$ and $(s_c, s_a^*, s_{f_{11}(c)}^*, x)^-$. 
\item[Step 5:] $(s_{f_9(b)}^*, x, s_{f_{12}(b+1)}, s_c^*)^+$ and $(x, s_{f_{12}(b+1)}^*, s_a, s_{b+1})^-$.
\end{enumerate}

Here is the resulting sequence of insertions:
$$ (s_a, s_b) \diamond (s_{f_8(c)}, s_a^*) $$
$$ (s_a, s_b) \bm{(s_b^*, x, s_{f_{10}(b)}, s_{f_8(c)}^*)} \diamond (s_{f_8(c)}, s_a^*) $$
$$ (s_a, s_b) (s_b^*, x, s_{f_{10}(b)}, s_{f_8(c)}^*) \bm{(x, s_{f_{10}(b)}^*, s_a, s_{f_9(b)})} \diamond (s_{f_8(c)}, s_a^*) $$
$$ (s_a, s_{f_9(b)}) \diamond (s_{f_8(c)}, s_a^*) $$
$$ (s_a, s_{f_9(b)}) \diamond \bm{(s_{f_9(b)}^*, s_{f_{11}(c)}, x, s_{f_8(c)}^*)} (s_{f_8(c)}, s_a^*) $$
$$ (s_a, s_{f_9(b)}) \diamond \bm{(s_c, s_a^*, s_{f_{11}(c)}^*, x)} (s_{f_9(b)}^*, s_{f_{11}(c)}, x, s_{f_8(c)}^*) (s_{f_8(c)}, s_a^*) $$
$$ (s_a, s_{f_9(b)}) \diamond (s_c, s_a^*) $$
$$ (s_a, s_{f_9(b)}) \bm{(s_{f_9(b)}^*, x, s_{f_{12}(b+1)}, s_c^*)} \diamond (s_c, s_a^*) $$
$$ (s_a, s_{f_9(b)}) (s_{f_9(b)}^*, x, s_{f_{12}(b+1)}, s_c^*) \bm{(x, s_{f_{12}(b+1)}^*, s_a, s_{b+1})} \diamond (s_c, s_a^*) $$
$$ (s_a, s_{b+1}) \diamond (s_c, s_a^*) $$

When combined, the two phases of duplication modify $(s_a, s_b) (s_c, s_a^*)$ to become $(s_a, s_{b+1}) (s_c, s_a^*) \dots (s_a, s_{b+1}) (s_c, s_a^*)$, where all sites between the duplicated sites are unusable.
Notice that although we need to duplicate $\Theta(r^3)$ distinct sites, only $\Theta(r^2)$ monomers are used in the implementation since each monomer either does not depend on $a$, e.g. $(s_b^*, x, s_{f_{10}(b)}, s_{f_8(c)}^*)^+$, or does not depend on $c$, e.g. $(x, s_{f_{10}(b)}^*, s_a, s_{f_9(b)})^-$.

\renewcommand{\arraystretch}{1.25}

\begin{table}
\begin{center}
\begin{tabular}{| c | l l |}
\hline
Step     & \multicolumn{2}{|c|}{Inner monomer types ($b < r$)}                        \\
\hline
1        & \multicolumn{2}{|c|}{$(s_b^*, s_{f_8(c)}, s_{f_8(b+1)}, s_c^*)^+$}                             \\ 
2        & $(s_{f_8(c)}, s_a^*, s_{f_8(c)}^*, x)^-$           & $(x, s_{f_8(b+1)}^*, s_a, s_{b+1})^-$     \\ 
3        & $(s_b^*, x, s_{f_{10}(b)}, s_{f_8(c)}^*)^+$        & $(x, s_{f_{10}(b)}^*, s_a, s_{f_9(b)})^-$ \\
4        & $(s_{f_9(b)}^*, s_{f_{11}(c)}, x, s_{f_8(c)}^*)^+$ & $(s_c, s_a^*, s_{f_{11}(c)}^*, x)^-$      \\
5        & $(s_{f_9(b)}^*, x, s_{f_{12}(b+1)}, s_c^*)^+$      & $(x, s_{f_{12}(b+1)}^*, s_a, s_{b+1})^-$  \\ [3pt]
\hline
Step     & \multicolumn{2}{|c|}{Middle monomer types ($c < r$)} \\
\hline
1        & $(s_r^*, s_{f_2(c)}, x, s_c^*)^+$          & $(s_{f_1(c)}, s_a^*, s_{f_2(c)}^*, x)^-$  \\
2        & $(s_r^*, x, s_{f_3(c)}, s_{f_1(c)}^*)^+$   & $(x, s_{f_3(c)}^*, s_a, s_0)^-$           \\
3        & $(s_0^*, s_{f_4(c+1)}, x, s_{f_1(c)}^*)^+$ & $(s_{c+1}, s_a^*, s_{f_4(c+1)}^*, x)^-$   \\ [3pt]
\hline
Step     & \multicolumn{2}{|c|}{Outer monomer types ($a < r$)} \\
\hline
1        & $(s_r^*, x, s_{f_6(r)}, s_r^*)^+$          & $(x, s_{f_6(r)}^*, s_a, s_{f_5(a)})^-$   \\
2        & $(s_{f_5(a)}^*, s_{f_7(r)}^*, x, s_r^*)^+$ & $(s_{f_5(a)}^*, s_a^*, s_{f_7(r)}, x)^-$ \\
3        & $(s_{f_5(a)}^*, x, s_{a+1}, s_{f_5(a)})^+$ & $(s_0, s_{a+1}^*, s_a, x)^-$             \\
4        & $(s_{f_5(a)}^*, x, s_{f_7(r)}, s_0^*)^+$   & $(x, s_{f_7(r)}^*, s_{a+1}, s_0)^-$      \\ [3pt]
\hline
\end{tabular}
\end{center}
\caption{The set of all monomer types used to deterministically construct a monomer of size $2^{\Theta(r^3)}$ using $O(r^2)$ monomer types.}
\label{tab:all-monomers-types-extreme-ub}
\end{table}

\textbf{Putting it together.}
The system starts with the intiator $(s_0, s_0) (s_0, s_0^*)$.
Each increment of the counter occurs either through a middle increment, outer increment, or a duplication.
The total set of monomers is seen in Table~\ref{tab:all-monomers-types-extreme-ub}.
There are at most $(r+1)^2$ monomer types in each family (each row of Table~\ref{tab:all-monomers-types-extreme-ub}) and $O(r^2)$ monomer types total.

The system is deterministic if no pair of monomers can be inserted into any insertion site appearing during construction.
It can be verified by an inspection of Table~\ref{tab:all-monomers-types-extreme-ub} that any pair of positive monomers have a distinct pair of first and fourth symbols, and any pair of negative monomers have a distinct pair of second and third symbols.
So all insertion sites with only one pair of complementary symbols have at most one insertable monomer, and moreover create only those sites listed via the insertion sequences.

The first insertion site with two pairs of complementary symbols must be one created by a sequence of insertions into sites with a single pair of complementary symbols (that have only one insertable monomer).
So this site has the form $(s_a, s_{f_6(a)}) (s_{f_6(a)}^*, s_a^*)$ for some $0 \leq a < r$, created during the outer increment step.
For each such site, a positive monomer $(s_{f_6(a)}^*, x, s_{a+1}, s_{f_6(a)})^+$ can attach, but no negative monomer has second and third symbols $s_a^*$ and $s_a$.
So these sites have only one insertable monomer and are the only insertion sites with two pairs of complementary symbols.
So all sites with two pairs of complementary symbols also have at most one insertable monomer.
Then since all sites occurring in the system have at most one insertable monomer, the system is deterministic.

The size $P_i$ of a subpolymer with an initiator encoding some value $i$ between $0$ and $(r+1)^3-1$ can be bounded by $2P_{i+2} + 9 \leq P_i \leq 2P_{i+1} + 9$, since either $i+1$ or $i+2$ is an inner increment step and no step inserts more than 9 monomers. 
Moreover, $P_{(r+1)^3-2} \geq 1$.
So $P_0 + 2$, the size of the terminal polymer, is $2^{\Theta(r^3)}$. 
  
\textbf{Running time.}
Define the concentration of each monomer type to be equal.
There are $12r^2 + 24r + 3 \leq 39r^2$ monomer types, so each monomer type has concentration at least $1/(39r^2)$.
The polymer is complete as soon as every counter's variables have reached the value $a = b = c = r$, i.e. every site encoding a counter has been modified to become $(s_r, s_r) (s_r, s_r^*)$ and the monomer $(s_r^*, x, s_{f_7(r)}, s_r^*)^+$ has been inserted.

There are fewer than $2^{r^3}$ such insertions, and each insertion requires at most $9 \cdot (r+1)^3 \leq 72r^3$ previous insertions to occur.
So an upper bound on the expected time $T_r$ for each such insertion is described as a sum of $72r^3$ random variables, each with expected time $39r^2$.
The Chernoff bound for exponential random variables implies the following upper bound on $T_r$:

$\begin{aligned}
{\rm Prob}[T_r > 39r^2 \cdot 72r^3(1 + \delta)] &\leq e^{-39 \cdot 72r^5 \delta^2 / (2 + \delta)} \\
&\leq e^{-r^5 \delta^2 / (2 + \delta)} \\
&\leq e^{-r^5 \delta^2 / (2\delta)} \rm{~for~all~} \delta \geq 2 \\
&\leq e^{-r^5 \delta / 2} \\
\end{aligned}$

Let $T_{\mathcal{S}_r}$ be the total running time of the system. 
Then we can bound $T_{\mathcal{S}_r}$ from above using the bound for $T_r$:

$\begin{aligned}
{\rm Prob}[T_{\mathcal{S}_r} > 39r^2 \cdot 72r^3(1 + \delta)] &\leq 2^{r^3} \cdot e^{-r^5 \delta / 2} \\
&\leq 2^{r^3} 2^{-r^5 \delta/2} \\ 
&\leq 2^{r^3 - r^5 \delta/2} \\
&\leq 2^{r^5 \delta/4 - r^5 \delta/2} \text{~for~all~} \delta \geq 4 \\
&\leq 2^{-r^5\delta/4}
\end{aligned}$

So ${\rm Prob}[T_{\mathcal{S}_r} > 39r^2 \cdot 72r^3(1 + \delta)] \leq 2^{-r^5\delta/4}$ for all $\delta \geq 4$.
So the expected value of $T_{\mathcal{S}_r}$, the construction time, is $O(r^5) = O(\log^{5/3}(n))$ with an exponentially decaying tail probability.
\qed
\end{proof}

It is natural to ask whether faster construction of polymers is possible in non-deterministic systems: systems that do not construct a single terminal polymer.
A two-monomer-type insertion system consisting of the initiator $(s_1, s_2) (s_2^*, s_1^*)$ and monomer types $(s_2^*, s_1^*, s_1, s_2)^+$, and $(s_2^*, x, x, s_2)^+$ simultaneously constructs polymers of all lengths $n \geq 3$ in expected time $O(\log{n})$ via balanced insertion sequences of logarithmic length.
Moreover, any polymer in any system has $\Omega(\log{n})$ expected construction time, since every insertion takes $\Omega(1)$ expected time, and constructing a polymer of length $n$ requires an insertion sequence of length at least $\lfloor \log_2(n-2) \rfloor$. 
So if assembling anything is permitted, then this two-monomer-type system is asymptotically optimal.

What about a milder form of non-determinism: permitting system to construct a finite total number of polymers?
Our next result proves that even this relaxation is sufficient to improve construction time.
The key idea is allow large sets of monomer types to ``compete'' to insert into a common insertion site first.
This competition increases the total concentration of insertable monomer types, reducing the expected insertion time, but results in a non-deterministic system.
 
\begin{theorem}
\label{thm:speed-extreme-ub}
For any positive, odd integer $r$, there exists an insertion system constructing a finite set of polymers with $O(r^2)$ monomer types that constructs a polymer of length~$n = 2^{\Theta(r^2)}$ in $O(\log^{3/2}(n))$ expected time.
Moreover, the expected time has an exponentially decaying tail probability.
\end{theorem}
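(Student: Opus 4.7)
The plan is to adapt the triple--for--loop counter of Theorem~\ref{thm:types-extreme-ub} to a double--for--loop counter with variables $a, b \in \{0, 1, \ldots, r\}$, and then to exploit non-determinism to accelerate each critical-path insertion. Dropping one loop variable already cuts the critical path to $\Theta(r^2)$ insertions and yields a polymer of length $n = 2^{\Theta(r^2)}$ via duplication at each inner step. The new ingredient is to let several monomer types compete for each insertion on the critical path, so that the expected waiting time per insertion drops from $\Theta(r^2)$ to $\Theta(r)$ and the total expected construction time becomes $\Theta(r^3) = \Theta(\log^{3/2}(n))$.

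Concretely, I would encode the counter state in a balanced site of the form $(s_a, s_b)(s_b^*, s_a^*)$ and rebuild the inner and outer increment/duplication gadgets of Theorem~\ref{thm:types-extreme-ub} but without the middle variable. A purely deterministic version of this already uses $O(r^2)$ monomer types but runs in $\Theta(r^4)$ expected time, which is too slow. To gain a factor of $r$, I would replace each monomer on the critical path by a family of $\Theta(r)$ \emph{sibling} monomer types that share their matching symbols (first and fourth symbols for positive monomers, second and third for negative monomers) but differ in their remaining symbols. All siblings in a family match the same site, so they race to insert and the total matching concentration at a critical site grows to $\Theta(1/r)$. Each sibling commits the construction to a different continuation, and a constant-length cascade of follow-up monomers resynchronizes back to a canonical $(s_{a'}, s_{b'})(s_{b'}^*, s_{a'}^*)$ site. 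Following the indexing discipline of the inner/middle/outer families of Theorem~\ref{thm:types-extreme-ub}, each sibling will be allowed to depend on only one of $a$ or $b$ at a time, so the $\Theta(r)$ siblings per family contribute only $\Theta(r^2)$ monomer types in total.

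The main obstacle is correctness of the non-deterministic construction: every sibling choice must extend to a valid terminal polymer, so that the set of constructed polymers is finite but generally not a singleton, and no unintended insertion site may ever appear. Since many monomer types now share their matching symbols, the case analysis at the end of the proof of Theorem~\ref{thm:types-extreme-ub}---that each site has at most one insertable monomer---must be weakened to the statement that each site has at most one insertable \emph{family} of siblings, all of which are safe to fire. The same $x$-barrier trick invalidates byproduct sites, and a careful choice of $f_i$-style indexing offsets prevents cross-family collisions between the inner, outer, and sibling families. This combinatorial bookkeeping is where essentially all the work lies.

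The running-time analysis is then analogous to that of Theorem~\ref{thm:types-extreme-ub}. With $O(r^2)$ monomer types at roughly equal concentration, each critical-path insertion has expected time $O(r)$; at most $O(r^2)$ insertions precede any single insertion on the critical path; and a Chernoff bound for sums of exponentials, union-bounded over the $2^{\Theta(r^2)}$ critical insertion events, yields total expected construction time $O(r^3)$ with an exponentially decaying tail probability. The oddness of $r$ is expected to enter only as a parity condition making one of the modular-indexing steps balance cleanly, analogous to how $r$ entered the counter construction of Theorem~\ref{thm:types-extreme-ub}.
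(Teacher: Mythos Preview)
Your high-level plan---double for-loop, duplication at inner increments, $\Theta(r)$ monomer types competing at each critical insertion---matches the paper's. The mechanism you propose for the competition, however, is not the one that works, and the difference is the crux of the proof.

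You want every sibling to ``commit the construction to a different continuation'' and then a constant-length cascade to ``resynchronize back to a canonical $(s_{a'}, s_{b'})(s_{b'}^*, s_{a'}^*)$ site.'' This is where the plan breaks. A sibling that matches on, say, $b$ carries a free symbol indexed by its sibling index $i$; after it inserts into $(s_a, s_b)(s_b^*, s_a^*)$ the resulting active site is determined by the triple $(a,b,i)$, and any monomer insertable there has matching symbols depending on a pair that includes $i$. The follow-up monomers therefore form $\Theta(r^2)$ distinct insertion sets, and you cannot assign each of them total concentration $\Theta(1/r)$ within a unit budget. So every resynchronization step waits $\Theta(r^2)$ in expectation, and the factor-$r$ speedup evaporates. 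Having the follow-ups themselves use siblings only pushes the same problem one level deeper.

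What the paper does instead is make the $\Theta(r)$ competitors at a site be \emph{guesses} of the other loop variable: a negative monomer matching on $a$ carries a guessed value $i$ for $b$, and only the monomer with $i=b$ creates a site at which anything can insert. All wrong guesses dead-end immediately. The paper calls such a site \emph{growth-deterministic}: many types may insert, exactly one enables further growth. Because there is no resynchronization, the insertion sets are indexed by a single loop variable, there are only $O(r)$ of them, each can be given total concentration $\Theta(1/r)$, and every critical-path insertion takes $\Theta(r)$ expected time. The parity bit (and hence the oddness of $r$) enters precisely here: it is needed so that an incorrect guess never recreates an earlier live site, which would make the set of constructed polymers infinite. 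Your proposal has no place for this because it has no incorrect guesses; once you switch to the guess/dead-end mechanism, the parity trick becomes necessary.
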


\begin{proof}
We use a construction similar to one used in the proof of Theorem~\ref{thm:types-extreme-ub}, but with a few key differences.
First, this construction carries out the execution of a double for-loop, not a triple for-loop. 
Second, each increment step involves a sequence of ``guesses'': non-deterministic insertions where many different monomer types can be inserted.
Incorrect guesses stop further execution, while 
At these insertions, only one monomer type allows continued execution of the for-loop, and all others cause the loop to break.

We use $a$ and $b$ for the outer and inner for-loop variables, and a parity bit to to encode the parity of $b$.
Insertion sites of the form $(s_{f_p(a)}, s_{f_p(b)}) (s_{f_p(b)}^*, s_{f_p(a)}^*)$ encode counter states, where $p = b \bmod 2$ and $f_i(n) = n + 2ir^2$.
The parity bit is used to avoid an issue with incorrect guesses causing repeated insertion sites (and thus an infinite set of polymers).
Three types of variable increments are carried out by the counter:

\begin{enumerate}[leftmargin=3cm] \itemsep5pt
\item[Inner even-to-odd:] If $b < r$ and $b$ is even, then
$(s_{f_0(a)}, s_{f_0(b)}) (s_{f_0(b)}^*, s_{f_0(a)}^*) \leadsto (s_{f_1(a)}, s_{f_1(b+1)}) (s_{f_1(b+1)}^*, s_{f_1(a)}^*)$.
\item[Inner odd-to-even:] If $b < r$ and $b$ is odd, then
$(s_{f_1(a)}, s_{f_1(b)}) (s_{f_1(b)}^*, s_{f_1(a)}^*) \leadsto (s_{f_0(a)}, s_{f_0(b+1)}) (s_{f_0(b+1)}^*, s_{f_0(a)}^*)$.
\item[Outer:] If $b = r$ and $a < r$, then
$(s_{f_1(a)}, s_{f_1(r)}) (s_{f_1(r)}^*, s_{f_1(a)}^*) \leadsto (s_{f_0(a+1)}, s_{f_0(0)}) (s_{f_0(0)}^*, s_{f_0(a+1)}^*)$.
\end{enumerate} 

As in the construction of Theorem~\ref{thm:types-extreme-ub}, this counter carries out a sequence of $\Theta(r^2)$ insertions, which are used to construct a length $2^{\Theta(r^2)}$ polymer by simultaneously duplicating each site during an inner even-to-odd increment.
A complete set of monomers for the system can be seen in Table~\ref{tab:all-monomers-speed-extreme-ub}.

In the remainder of the proof we detail the implementation of each increment type.
The bulk of the effort lies in proving that every insertion site accepting multiple monomer types has a single \emph{correct} monomer type that continues the execution of the loops, and all other \emph{incorrect} monomer types yield insertion sites that do not accept any monomer types.
We call such an insertion site \emph{growth-deterministic}, and if every insertion site appearing in a system is growth deterministic, the polymer constructed by only inserting correct monomer types is the longest polymer constructed by the system. 

\renewcommand{\arraystretch}{1.25}

\begin{table}
\begin{center}
\begin{tabular}{| c | l l |}
\hline
Step     & \multicolumn{2}{|c|}{Inner even-to-odd monomer types ($b < r$, $b$ even)} \\
\hline
\multirow{2}{*}{1} & \multicolumn{2}{|c|}{$(s_{f_1(b+1)}^*, s_{f_0(a)}^*, s_{f_0(a)}, x)^-$} \\
                   & $(s_{f_0(b)}^*, x, s_{f_1(a)}, s_{f_1(b+1)})^+$     & $(s_{f_1(b+1)}^*, s_{f_1(a)}^*, s_{f_0(a)}, s_{f_2(b+2)})^-$  \\
2                  & $(s_{f_2(b+2)}^*, s_{f_2(a)}^*, x, s_{f_1(b+1)})^+$ & $(s_{f_0(b+2)}^*, s_{f_0(a)}^*, s_{f_2(a)}, x)^-$ \\
3                  & $(s_{f_2(b+2)}^*, x, s_{f_3(a)}, s_{f_0(b+2)})^+$   & $(x, s_{f_3(a)}^*, s_{f_0(a)}, s_{f_0(b+2)})^-$ \\ [3pt] 
\hline
\multicolumn{3}{|c|}{Inner odd-to-even monomer types ($b < r$, $b$ odd)} \\
\hline
\multirow{2}{*}{1}  & $(s_{f_0(b+1)}^*, s_{f_1(a)}^*, s_{f_1(a)}, x)^-$ & $(s_{f_1(b)}^*, x, s_{f_0(a)}, s_{f_0(b+1)})^+$ \\
  & \multicolumn{2}{|c|}{$(s_{f_0(b+1)}^*, s_{f_0(a)}^*, s_{f_1(a)}, x)^-$} \\ [3pt] 
\hline
\multicolumn{3}{|c|}{Outer monomer types ($a < r$)} \\
\hline
1  & $(s_{f_1(r)}^*, x, s_{f_3(a+1)}, s_{f_1(r)})^+$ & $(s_{f_0(0)}^*, s_{f_3(a+1)}^*, s_{f_1(a)}, x)^-$   \\
2  & $(s_{f_1(r)}^*, x, s_{f_0(a+1)}, s_{f_0(0)})^+$ & $(s_{f_0(0)}^*, s_{f_0(a+1)}^*, s_{f_3(a+1)}, x)^-$ \\ [3pt]
\hline
\end{tabular}
\end{center}
\caption{The set of all monomer types used to construct a monomer of size $n$ in $O(\log^{3/2}(n))$ expected time while constructing only finite polymers.}
\label{tab:all-monomers-speed-extreme-ub}
\end{table}

\textbf{Inner odd-to-even increment.}
We implement inner odd-to-even increments in a single step of three insertions.
In an sequence of three correct insertions, the monomer types 
$$(s_{f_0(b+1)}^*, s_{f_1(a)}^*, s_{f_1(a)}, x)^-~~~(s_{f_1(b)}^*, x, s_{f_0(a)}, s_{f_0(b+1)})^+~~~(s_{f_0(b+1)}^*, s_{f_0(a)}^*, s_{f_1(a)}, x)^-$$ 
are used.
This set of monomers results the sequence of insertions
$$ (s_{f_1(a)}, s_{f_1(b)}) \diamond (s_{f_1(b)}^*, s_{f_1(a)}^*) $$ 
$$ (s_{f_1(a)}, s_{f_1(b)}) \diamond \bm{(s_{f_0(b+1)}^*, s_{f_1(a)}^*, s_{f_1(a)}, x)} (s_{f_1(b)}^*, s_{f_1(a)}^*) $$ 
$$ (s_{f_1(a)}, s_{f_1(b)}) \diamond (s_{f_0(b+1)}^*, s_{f_1(a)}^*) $$
$$ (s_{f_1(a)}, s_{f_1(b)}) \bm{(s_{f_1(b)}^*, x, s_{f_0(a)}, s_{f_0(b+1)})} \diamond (s_{f_0(b+1)}^*, s_{f_1(a)}^*) $$
$$ (s_{f_0(a)}, s_{f_0(b+1)}) \diamond (s_{f_0(b+1)}^*, s_{f_1(a)}^*) $$
$$ (s_{f_0(a)}, s_{f_0(b+1)}) \diamond \bm{(s_{f_0(b+1)}^*, s_{f_0(a)}^*, s_{f_1(a)}, x)} (s_{f_0(b+1)}^*, s_{f_1(a)}^*) $$
$$ (s_{f_0(a)}, s_{f_0(b+1)}) \diamond (s_{f_0(b+1)}^*, s_{f_0(a)}^*) $$

The first insertion site $(s_{f_1(a)}, s_{f_1(b)}) (s_{f_1(b)}^*, s_{f_1(a)}^*)$ accepts any monomer of the form $(\underline{~~}, s_{f_1(a)}^*, s_{f_1(a)}, \underline{~~})^-$ or $(s_{f_1(b)}^*, \underline{~~}, \underline{~~}, s_{f_1(b)})^+$.
The only such monomer types in the system are $(s_{f_0(i+1)}^*, s_{f_1(a)}^*, s_{f_1(a)}, x)^-$, where $i$ is an odd integer with $0 \leq i < r$ (inner odd-to-even monomers).
So the insertion has the form
$$ (s_{f_1(a)}, s_{f_1(b)}) \diamond (s_{f_1(b)}^*, s_{f_1(a)}^*) $$ 
$$ (s_{f_1(a)}, s_{f_1(b)}) \bm{(s_{f_0(i+1)}^*, s_{f_1(a)}^*, s_{f_1(a)}, x)} (s_{f_1(b)}^*, s_{f_1(a)}^*) $$ 

Only the left resulting site is insertable.
Moreover, this site is only insertable if $i = b$, since the only monomers of the form $(s_{f_1(b)}^*, \underline{~~}, \underline{~~}, s_{f_0(i+1)})^+$ in the system have $i = b$. 
So only one monomer type inserted into the first site enables further growth: $(s_{f_0(b+1)}^*, s_{f_1(a)}^*, s_{f_1(a)}, x)^-$.
Thus the first site is growth-deterministic.

The second insertion site accepts monomers of the form $(s_{f_1(b)}^*, \underline{~~}, \underline{~~}, s_{f_0(b+1)})^+$.
The only such monomer types in the system are $(s_{f_1(b)}^*, x, s_{f_0(i)}, s_{f_0(b+1)})^+$ with $0 \leq i \leq r$ (inner odd-to-even monomers). 
So the insertion has the form
$$ (s_{f_1(a)}, s_{f_1(b)}) \diamond (s_{f_0(b+1)}^*, s_{f_1(a)}^*) $$
$$ (s_{f_1(a)}, s_{f_1(b)}) \bm{(s_{f_1(b)}^*, x, s_{f_0(i)}, s_{f_0(b+1)})} (s_{f_0(b+1)}^*, s_{f_1(a)}^*) $$

Only the right resulting site is insertable.
Moreover, this site is only insertable if $i = a$, since the only monomers of the form $(\underline{~~}, s_{f_0(i)}^*, s_{f_1(a)}, \underline{~~})^-$ in the system have $i = a$.
So only one monomer inserted into the second site enables further growth: $(s_{f_1(b)}^*, x, s_{f_0(a)}, s_{f_0(b+1)})^+$.
So the second site is growth-deterministic. 

The third insertion site accepts monomers of the form $(\underline{~~}, s_{f_0(a)}^*, s_{f_1(a)}, \underline{~~})^-$.
The only such monomer types in the system are $(s_{f_0(i+1)}^*, s_{f_0(a)}^*, s_{f_1(a)}, x)^-$ where $i$ is an odd integer with $0 \leq i < r$ (inner odd-to-even monomers). 
So the insertion has the form
$$ (s_{f_0(a)}, s_{f_0(b+1)}) \diamond (s_{f_0(b+1)}^*, s_{f_1(a)}^*) $$
$$ (s_{f_0(a)}, s_{f_0(b+1)}) \bm{(s_{f_0(i+1)}^*, s_{f_0(a)}^*, s_{f_1(a)}, x)} (s_{f_0(b+1)}^*, s_{f_1(a)}^*) $$

Only the left resulting site is insertable.
Moreover, this site is only insertable if $i = b$, since the only monomers of the form $(s_{f_0(b+1)}, \underline{~~}, \underline{~~}, s_{f_0(i+1)}^*)^+$ or $(\underline{~~}, s_{f_0(a)}^*, s_{f_0(a)}, \underline{~~})^-$ in the system are the latter, which require $i = b$ to insert.
So only one monomer inserted into the second site enables further growth: $(s_{f_0(b+1)}^*, s_{f_0(a)}^*, s_{f_1(a)}, x)^-$.
So the third site is growth-deterministic.
\vspace{2px} 

\textbf{Outer increment.}
We implement outer increments in two steps:
$$ (s_{f_1(a)}, s_{f_1(r)}) (s_{f_1(r)}^*, s_{f_1(a)}^*) \overset{1}{\leadsto} (s_{f_3(a+1)}, s_{f_1(r)}) (s_{f_0(0)}^*, s_{f_3(a+1)}^*) $$
$$ (s_{f_3(a+1)}, s_{f_1(r)}) (s_{f_0(0)}^*, s_{f_3(a+1)}^*) \overset{2}{\leadsto} (s_{f_0(a+1)}, s_{f_0(0)}) (s_{f_0(0)}^*, s_{f_0(a+1)}^*) $$

In a sequence of four correct insertions, the monomer types 
\begin{enumerate}[leftmargin=2cm]
\item[Step 1:] $(s_{f_1(r)}^*, x, s_{f_3(a+1)}, s_{f_1(r)})^+$ and $(s_{f_0(0)}^*, s_{f_3(a+1)}^*, s_{f_1(a)}, x)^-$.
\item[Step 2:] $(s_{f_1(r)}^*, x, s_{f_0(a+1)}, s_{f_0(0)})^+$ and $(s_{f_0(0)}^*, s_{f_0(a+1)}^*, s_{f_3(a+1)}, x)^-$.
\end{enumerate}
are used.
This set of monomers results in the sequence of insertions 
$$ (s_{f_1(a)}, s_{f_1(r)}) \diamond (s_{f_1(r)}^*, s_{f_1(a)}^*) $$
$$ (s_{f_1(a)}, s_{f_1(r)}) \bm{(s_{f_1(r)}^*, x, s_{f_3(a+1)}, s_{f_1(r)})} \diamond (s_{f_1(r)}^*, s_{f_1(a)}^*) $$
$$ (s_{f_3(a+1)}, s_{f_1(r)}) \diamond (s_{f_1(r)}^*, s_{f_1(a)}^*) $$
$$ (s_{f_3(a+1)}, s_{f_1(r)}) \diamond \bm{(s_{f_0(0)}^*, s_{f_3(a+1)}^*, s_{f_1(a)}, x)} (s_{f_1(r)}^*, s_{f_1(a)}^*) $$
$$ (s_{f_3(a+1)}, s_{f_1(r)}) \diamond (s_{f_0(0)}^*, s_{f_3(a+1)}^*) $$
$$ (s_{f_3(a+1)}, s_{f_1(r)}) \bm{(s_{f_1(r)}^*, x, s_{f_0(a+1)}, s_{f_0(0)})} \diamond (s_{f_0(0)}^*, s_{f_3(a+1)}^*) $$
$$ (s_{f_0(a+1)}, s_{f_0(0)}) \diamond (s_{f_0(0)}^*, s_{f_3(a+1)}^*) $$
$$ (s_{f_0(a+1)}, s_{f_0(0)}) \diamond \bm{(s_{f_0(0)}^*, s_{f_0(a+1)}^*, s_{f_3(a+1)}, x)} (s_{f_0(0)}^*, s_{f_3(a+1)}^*) $$
$$ (s_{f_0(a+1)}, s_{f_0(0)}) \diamond (s_{f_0(0)}^*, s_{f_0(a+1)}^*) $$

The first insertion site $(s_{f_1(a)}, s_{f_1(r)}) (s_{f_1(r)}^*, s_{f_1(a)}^*)$ accepts any monomer of the form $(\underline{~~}, s_{f_1(a)}^*, s_{f_1(a)}, \underline{~~})^-$ or $(s_{f_1(r)}^*, \underline{~~}, \underline{~~}, s_{f_1(r)})^+$.
There are two such monomer types in the system: $(s_{f_0(i+1)}^*, s_{f_1(a)}^*, s_{f_1(a)}, x)^-$ where $i$ is an odd integer with $0 \leq i < r$ (inner odd-to-even monomers) and $(s_{f_1(r)}^*, x, s_{f_3(i+1)}, s_{f_1(r)})^+$ with $0 \leq i < r$ (outer monomers).
Insertion of the first type has the form
$$ (s_{f_1(a)}, s_{f_1(r)}) \diamond (s_{f_1(r)}^*, s_{f_1(a)}^*) $$ 
$$ (s_{f_1(a)}, s_{f_1(r)}) (s_{f_0(i+1)}^*, s_{f_1(a)}^*, s_{f_1(a)}, x) (s_{f_1(r)}^*, s_{f_1(a)}^*) $$ 

Neither resulting site is insertable, since the left site requires a monomer of the form $(s_{f_1(r)}^*, \underline{~~}, \underline{~~}, s_{f_0(i+1)})^+$ not found in the system.
Insertion of the second type has the form
$$ (s_{f_1(a)}, s_{f_1(r)}) \diamond (s_{f_1(r)}^*, s_{f_1(a)}^*) $$ 
$$ (s_{f_1(a)}, s_{f_1(r)}) (s_{f_1(r)}^*, x, s_{f_3(i+1)}, s_{f_1(r)}) (s_{f_1(r)}^*, s_{f_1(a)}^*) $$ 

Only the right resulting site is insertable.
Moreover, this site is only insertable if $i = a$, since the only monomer type of the form $(\underline{~~}, s_{f_3(i+1)}^*, s_{f_1(a)}, \underline{~~})^-$ has $i = a$. 
So only one monomer type inserted into the first site enables further growth: $(s_{f_1(r)}, x, s_{f_3(a+1)}, s_{f_1(r)})^+$.
So the first insertion site is growth-deterministic.

The second insertion site $(s_{f_3(a+1)}, s_{f_1(r)})(s_{f_1(r)}^*, s_{f_1(a)}^*)$ accepts any monomer of the form $(\underline{~~}, s_{f_3(a+1)}^*, s_{f_1(a)}, \underline{~~})^-$.
The only such monomer type in the system is $(s_{f_0(0)}^*, s_{f_3(a+1)}^*, s_{f_1(a)}, x)^-$.
So the second insertion site is deterministic and thus growth-deterministic.

The third insertion site $(s_{f_3(a+1)}, s_{f_1(r)})(s_{f_0(0)}^*, s_{f_3(a+1)}^*)$ accepts any monomer of the form $(s_{f_1(r)}^*, \underline{~~}, \underline{~~}, s_{f_0(0)})^+$.
The only such monomer types in the system are $(s_{f_1(r)}, x, s_{f_0(i+1)}, s_{f_0(0)})^+$ with $0 \leq i < r$.
So the insertion has the form
$$ (s_{f_3(a+1)}, s_{f_1(r)}) \diamond (s_{f_0(0)}^*, s_{f_3(a+1)}^*) $$ 
$$ (s_{f_3(a+1)}, s_{f_1(r)}) (s_{f_1(r)}, x, s_{f_0(i+1)}, s_{f_0(0)}) (s_{f_0(0)}^*, s_{f_3(a+1)}^*) $$ 

Only the resulting right site is insertable.
Moreover, this site is only insertable if $i = a$, since the only monomers of the form $(\underline{~~}, s_{f_0(i+1)}, s_{f_3(a+1)}, \underline{~~})^-$ in the system have $i=a$. 
So only one monomer type inserted into the third site enables further growth: $(s_{f_1(r)}, x, s_{f_0(a+1)}, s_{f_0(0)})^+$.
So the third insertion site is growth-deterministic. 

The fourth insertion site $(s_{f_0(a+1)}, s_{f_0(0)})(s_{f_0(0)}^*, s_{f_3(a+1)}^*)$ accepts any monomer of the form $(\underline{~~}, s_{f_0(a+1)}^*, s_{f_3(a+1)}, \underline{~~})^-$.
The only such monomer types in the system are $(s_{f_0(0)}^*, s_{f_0(i+1)}^*, s_{f_3(i+1)}, x)^-$ with $0 \leq i < r$.
So the insertion has the form
$$ (s_{f_0(a+1)}, s_{f_0(0)}) \diamond (s_{f_0(0)}^*, s_{f_3(a+1)}^*) $$
$$ (s_{f_0(a+1)}, s_{f_0(0)}) (s_{f_0(0)}^*, s_{f_0(i+1)}^*, s_{f_3(i+1)}, x) (s_{f_0(0)}^*, s_{f_3(a+1)}^*) $$

Only the resulting left site is insertable.
Moreover, this site is only insertable if $i = a$, since the only monomers of the form $(\underline{~~}, s_{f_0(a+1)}^*, s_{f_0(i+1)}, \underline{~~})^-$ or $(s_{f_0(0)}^*, \underline{~~}, \underline{~~}, s_{f_0(0)})^+$ have $i = a$.
So only one monomer type inserted into the fourth site enables further growth: $(s_{f_0(0)}^*, s_{f_0(i+1)}^*, s_{f_3(i+1)}, x)^-$.
So the fourth insertion site is growth-deterministic.

\textbf{Inner even-to-odd increment.}
We implement inner even-to-odd increments in three steps.
The first step is a sequence of three insertions:
$$ (s_{f_0(a)}, s_{f_0(b)}) \diamond (s_{f_0(b)}^*, s_{f_0(a)}^*) $$ 
$$ (s_{f_0(a)}, s_{f_0(b)}) \diamond \bm{(s_{f_1(b+1)}^*, s_{f_0(a)}^*, s_{f_0(a)}, x)} (s_{f_0(b)}^*, s_{f_0(a)}^*) $$ 
$$ (s_{f_0(a)}, s_{f_0(b)}) \diamond (s_{f_1(b+1)}^*, s_{f_0(a)}^*) $$
$$ (s_{f_0(a)}, s_{f_0(b)}) \bm{(s_{f_0(b)}^*, x, s_{f_1(a)}, s_{f_1(b+1)})} \diamond (s_{f_1(b+1)}^*, s_{f_0(a)}^*) $$
$$ (s_{f_1(a)}, s_{f_1(b+1)}) \diamond (s_{f_1(b+1)}^*, s_{f_0(a)}^*) $$
$$ (s_{f_1(a)}, s_{f_1(b+1)}) \diamond \bm{(s_{f_1(b+1)}^*, s_{f_1(a)}^*, s_{f_0(a)}, s_{f_2(b+2)})} \diamond (s_{f_1(b+1)}^*, s_{f_0(a)}^*) $$

This results in two insertion sites.
The left insertion site is an incremented version of the original site, while the right site is not.
The second step is used to modify the right site into an incremented version of the original site, but incremented \emph{by~2}:
$$ (s_{f_0(a)}, s_{f_2(b+2)}) \diamond (s_{f_1(b+1)}^*, s_{f_0(a)}^*) $$
$$ (s_{f_0(a)}, s_{f_2(b+2)}) \diamond \bm{(s_{f_2(b+2)}^*, s_{f_2(a)}^*, x, s_{f_1(b+1)})}  (s_{f_1(b+1)}^*, s_{f_0(a)}^*) $$
$$ (s_{f_0(a)}, s_{f_2(b+2)}) \diamond (s_{f_2(b+2)}^*, s_{f_2(a)}^*) $$
$$ (s_{f_0(a)}, s_{f_2(b+2)}) \diamond \bm{(s_{f_0(b+2)}^*, s_{f_0(a)}^*, s_{f_2(a)}, x)} (s_{f_2(b+2)}^*, s_{f_2(a)}^*) $$
$$ (s_{f_0(a)}, s_{f_2(b+2)}) \diamond (s_{f_0(b+2)}^*, s_{f_0(a)}^*) $$
$$ (s_{f_0(a)}, s_{f_2(b+2)}) \bm{(s_{f_2(b+2)}^*, x, s_{f_3(a)}, s_{f_0(b+2)})} \diamond (s_{f_0(b+2)}^*, s_{f_0(a)}^*) $$
$$ (s_{f_3(a)}, s_{f_0(b+2)}) \diamond (s_{f_0(b+2)}^*, s_{f_0(a)}^*) $$
$$ (s_{f_3(a)}, s_{f_0(b+2)}) \diamond \bm{(x, s_{f_3(a)}^*, s_{f_0(a)}, s_{f_0(b+2)})} (s_{f_0(b+2)}^*, s_{f_0(a)}^*) $$
$$ (s_{f_0(a)}, s_{f_0(b+2)}) (s_{f_0(b+2)}^*, s_{f_0(a)}^*) $$

In the case that $b+2 > r$, the site is simply uninsertable.

The first site $(s_{f_0(a)}, s_{f_0(b)})(s_{f_0(b)}^*, s_{f_0(a)}^*)$ accepts any monomer of the form $(s_{f_0(b)}^*, \underline{~~}, \underline{~~}, s_{f_0(b)})^+$ or $(\underline{~~}, s_{f_0(a)}^*, s_{f_0(a)}, \underline{~~})^-$.
The only such monomer types in the system are $(s_{f_1(i+1)}^*, s_{f_0(a)}^*, s_{f_0(a)}, x)^-$ with $0 \leq i < r$.
So the insertion has the form
$$ (s_{f_0(a)}, s_{f_0(b)}) \diamond (s_{f_0(b)}^*, s_{f_0(a)}^*) $$
$$ (s_{f_0(a)}, s_{f_0(b)}) (s_{f_1(i+1)}^*, s_{f_0(a)}^*, s_{f_0(a)}, x) (s_{f_0(b)}^*, s_{f_0(a)}^*) $$

Only the resulting left site is insertable.
Moreover, this site is only insertable if $i = b$, since the only monomers of the form $(s_{f_0(b)}^*, \underline{~~}, \underline{~~}, s_{f_1(i+1)})^+$ have $i = b$. 
So only one monomer type inserted into the first site enables further growth: $(s_{f_1(b+1)}^*, s_{f_0(a)}^*, s_{f_0(a)}, x)^+$.
So the first insertion site is growth-deterministic.

The second site $(s_{f_0(a)}, s_{f_0(b)})(s_{f_1(b+1)}^*, s_{f_0(a)}^*)$ accepts any monomer of the form $(s_{f_0(b)}^*, \underline{~~}, \underline{~~}, s_{f_1(b+1)})^+$.
The only such monomer types in the system are $(s_{f_0(b)}, x, s_{f_1(i)}, s_{f_1(b+1)})^+$ with $0 \leq i \leq r$.
So the insertion has the form
$$ (s_{f_0(a)}, s_{f_0(b)}) \diamond (s_{f_1(b+1)}^*, s_{f_0(a)}^*) $$
$$ (s_{f_0(a)}, s_{f_0(b)}) (s_{f_0(b)}, x, s_{f_1(i)}, s_{f_1(b+1)}) (s_{f_1(b+1)}^*, s_{f_0(a)}^*) $$

Only the right resulting site is insertable.
Moreover, this site is only insertable if $i = a$, since the only monomers of the form $(\underline{~~}, s_{f_1(i)}^*, s_{f_0(a)}, \underline{~~})^-$ have $i = a$. 
So only one monomer type inserted into the second site enables further growth: $(s_{f_0(b)}, x, s_{f_1(a)}, s_{f_1(b+1)})^+$.
So the second site is growth-deterministic. 

The third site $(s_{f_1(a)}, s_{f_1(b+1)})(s_{f_1(b+1)}^*, s_{f_0(a)}^*)$ accepts any monomer of the form $(\underline{~~}, s_{f_1(a)}^*, s_{f_0(a)}, \underline{~~})^-$.
The only such monomer types in the system are $(s_{f_1(i+1)}^*, s_{f_1(a)}^*, s_{f_0(a)}, s_{f_2(i+2)})^-$ with $0 \leq i < r$.
So the insertion has the form
$$ (s_{f_1(a)}, s_{f_1(b+1)}) \diamond (s_{f_1(b+1)}^*, s_{f_0(a)}^*) $$
$$ (s_{f_1(a)}, s_{f_1(b+1)}) (s_{f_1(i+1)}^*, s_{f_1(a)}^*, s_{f_0(a)}, s_{f_2(i+2)}) (s_{f_1(b+1)}^*, s_{f_0(a)}^*) $$

Both of the resulting sites are insertable.
The left site is only insertable if $i = b$, since the only monomers of the form $(s_{f_1(b+1)}^*, \underline{~~}, \underline{~~}, s_{f_1(i+1)})^+$ or $(\underline{~~}, s_{f_1(a)}^*, s_{f_1(a)}, \underline{~~})^-$ require $i = b$ to insert.
So only one monomer type inserted into the third site enables further growth: $(s_{f_1(b+1)}^*, s_{f_1(a)}^*, s_{f_0(a)}, s_{f_2(b+2)})^-$.

The right site is also only insertable if $i = b$, since the only monomers of the form $(s_{f_2(i+2)}^*, \underline{~~}, \underline{~~}, s_{f_1(b+1)})^+$ have $i = b$.
So only one monomer type inserted into the third site enables further growth: $(s_{f_1(b+1)}^*, s_{f_1(a)}^*, s_{f_0(a)}, s_{f_2(b+2)})^-$.
So the third site is growth-deterministic.

The fourth site $(s_{f_0(a)}, s_{f_2(b+2)})(s_{f_1(b+1)}^*, s_{f_0(a)}^*)$ accepts any monomer of the form $(s_{f_2(b+2)}^*, \underline{~~}, \underline{~~}, s_{f_1(b+1)})^+$.
The only such monomer types in the system are $(s_{f_2(b+2)}^*, s_{f_2(i)}^*, x, s_{f_1(b+1)})^+$ with $0 \leq i \leq r$.
So the insertion has the form
$$ (s_{f_0(a)}, s_{f_2(b+2)}) \diamond (s_{f_1(b+1)}^*, s_{f_0(a)}^*)$$
$$ (s_{f_0(a)}, s_{f_2(b+2)}) (s_{f_2(b+2)}^*, s_{f_2(i)}^*, x, s_{f_1(b+1)}) (s_{f_1(b+1)}^*, s_{f_0(a)}^*)$$

Only the left resulting site is insertable.
Moreover, this site is only insertable if $i = a$, since the only monomers of the form $(\underline{~~}, s_{f_0(a)}^*, s_{f_2(i)}, \underline{~~})^-$ have $i = a$.
So only one monomer type inserted into the fourth site enables further growth: $(s_{f_2(b+2)}^*, s_{f_2(a)}, x, s_{f_1(b+1)})^+$.
So the fourth site is growth-deterministic.

The fifth site $(s_{f_0(a)}, s_{f_2(b+2)})(s_{f_2(b+2)}^*, s_{f_2(a)}^*)$ accepts any monomer of the form $(\underline{~~}, s_{f_0(a)}^*, s_{f_2(a)}, \underline{~~})^-$.
The only such monomer types in the system are $(s_{f_0(i+2)}^*, s_{f_0(a)}, s_{f_2(a)}, x)^-$ with $0 \leq i < r$.
So the insertion has the form 
$$ (s_{f_0(a)}, s_{f_2(b+2)}) \diamond (s_{f_2(b+2)}^*, s_{f_2(a)}^*) $$
$$ (s_{f_0(a)}, s_{f_2(b+2)}) (s_{f_0(i+2)}^*, s_{f_0(a)}, s_{f_2(a)}, x) (s_{f_2(b+2)}^*, s_{f_2(a)}^*) $$

Only the left resulting site is insertable.
Moreover, this site is only insertable if $i = b$, since the only monomers of the form $(s_{f_2(b+2)}^*, \underline{~~}, \underline{~~}, s_{f_0(i+2)})^+$ have $i = b$.
So only one monomer type inserted into the fifth site enables further growth: $(s_{f_0(b+2)}^*, s_{f_0(a)}, s_{f_2(a)}, x)^-$.
So the fifth site is growth-deterministic.

The sixth site $(s_{f_0(a)}, s_{f_2(b+2)})(s_{f_0(b+2)}^*, s_{f_0(a)}^*)$ accepts any monomer type of the form $(s_{f_2(b+2)}^*, \underline{~~}, \underline{~~}, s_{f_0(b+2)})^-$.
The only such monomer types in the system are $(s_{f_2(b+2)}^*, x, s_{f_3(i)}, s_{f_0(b+2)})^+$ with $0 \leq i \leq r$.
So the insertion has the form
$$(s_{f_0(a)}, s_{f_2(b+2)}) \diamond (s_{f_0(b+2)}^*, s_{f_0(a)}^*)$$ 
$$(s_{f_0(a)}, s_{f_2(b+2)}) (s_{f_2(b+2)}^*, x, s_{f_3(i)}, s_{f_0(b+2)}) (s_{f_0(b+2)}^*, s_{f_0(a)}^*)$$ 

Only the right resulting site is insertable.
Moreover, this site is only insertable if $i = a$, since the only monomers of the form $(\underline{~~}, s_{f_3(i)}, s_{f_0(a)}, \underline{~~})^-$ have $i = a$.
So only one monomer type inserted into the sixth site enables further growth: $(x, s_{f_3(a)}, s_{f_0(a)}, s_{f_0(b+2)})^-$.
So the sixth site is growth-deterministic.

The seventh site $(s_{f_3(a)}, s_{f_0(b+2)})(s_{f_0(b+2)}^*, s_{f_0(a)}^*)$ accepts any monomer type of the form $(\underline{~~}, s_{f_3(a)}^*, s_{f_0(a)}, \underline{~~})^-$.
The only such monomer types in the system are $(x, s_{f_3(a)}^*, s_{f_0(a)}, s_{f_0(i+2)})^-$ with $0 \leq i < r$.
So the insertion has the form
$$ (s_{f_3(a)}, s_{f_0(b+2)}) \diamond (s_{f_0(b+2)}^*, s_{f_0(a)}^*) $$
$$ (s_{f_3(a)}, s_{f_0(b+2)}) (x, s_{f_3(a)}^*, s_{f_0(a)}, s_{f_0(i+2)}) (s_{f_0(b+2)}^*, s_{f_0(a)}^*) $$
 
Only the right resulting site is insertable.
Moreover, this site is only insertable if $i = b$, since the only monomers of the form $(s_{f_0(i+2)}, \underline{~~}, \underline{~~}, s_{f_0(b+2)})^+$ or $(\underline{~~}, s_{f_0(a)}^*, s_{f_0(a)}, \underline{~~})^-$ require $i = b$ to insert.
So only one monomer type inserted into the seventh site enables further growth: $(x, s_{f_3(a)}^*, s_{f_0(a)}, s_{f_0(b+2)})^-$.
So the seventh site is growth-deterministic.

\textbf{Putting it together.}
Since all insertion sites are growth-deterministic, and the correct monomer types carry out the execution of a double for-loop, the system constructs only polymers of length at most the length of the polymer where all double for-loops are executed to completion.
The size $P_i$ of a subpolymer with initiator encoding some value between $0$ and $(r+1)^2-1$ can be bounded by $2P_{i+2} + 7 \leq P_i \leq 2P_{i+1} + 7$, since either $i+1$ or $i+2$ is an inner even-to-odd increment step (inserting~7 monomers) and no step inserts more than~7 monomers.
Moreover, $P_{(r+1)^2-2} > 0$.
So $P_0 + 2$, the size of the longest terminal polymer of the system, is $2^{\Theta(r^2)}$.

\textbf{Running time.}
Recall that the purpose of this construction was to achieve reduced running time by allowing multiple monomer types to ``compete'' for common sites.
Each monomer form listed in Table~\ref{tab:all-monomers-speed-extreme-ub} corresponds to a 1, $r$, or $r+1$ sets of monomer types, with each set consisting of types insertable into common sites.
For instance, the form $(s_{f_1(b+1)}^*, s_{f_0(a)}^*, s_{f_0(a)}, x)^-$ corresponds to $r+1$ sets of monomer types, one for each value of $a$, where all types in a set can be inserted into exactly the sites of the form $(s_{f_0(a)}, c) (\overline{c}, s_{f_0(a)}^*)$.

We assign equal concentrations to each set of monomer types, with the concentrations of the types within the set distributed equally.
There are $2 + 6r + 6(r+1) \leq 2r + 6r + 12r = 20r$ total sets, so the total concentration of the types in each set is at least $1/(20r)$.
We consider the expected construction time of the longest polymer of the system.
The construction of this polymer is complete as soon as every counter's variables have reached the value $a = b = r$, i.e. the insertion site has been modified to be $(s_{f_1(r)}, s_{f_1(r)}) (s_{f_1(r)}^*, s_{f_1(r)}^*)$ and the monomer $(s_{f_1(r)}^*, x, s_{f_3(a+1)}, s_{f_1(r)})^+$ has been inserted.
There are fewer than $2^{r^2}$ such insertions, and each insertion can occur once at most $7 \cdot (r+1)^2 \leq 28r^2$ previous insertions have occurred.

So an upper bound on the expected time $T_r$ for each such insertion is described as a sum of $28r^2$ random variables, each with expected time $20r$.
The Chernoff bound for exponential random variables implies the following upper bound on $T_r$:

$\begin{aligned}
{\rm Prob}[T_r > 20r \cdot 28r^2(1 + \delta)] &\leq e^{-20 \cdot 28r^3 \delta^2 / (2 + \delta)} \\
&\leq e^{-r^3 \delta^2 / (2 + \delta)} \\
&\leq e^{-r^3 \delta^2 / (2\delta)} \rm{~for~all~} \delta \geq 2 \\
&\leq e^{-r^3 \delta / 2} \\
\end{aligned}$

Let $T_{\mathcal{S}_r}$ be the total running time of the system. 
Then we can bound $T_{\mathcal{S}_r}$ from above using the bound for $T_r$:

$\begin{aligned}
{\rm Prob}[T_{\mathcal{S}_r} > 20r \cdot 28r^2(1 + \delta)] &\leq 2^{r^2} \cdot e^{-r^3 \delta / 2} \\
&\leq 2^{r^2} 2^{-r^3 \delta/2} \\ 
&\leq 2^{r^2 - r^3 \delta/2} \\
&\leq 2^{r^3 \delta/4 - r^3 \delta/2} \text{~for~all~} \delta \geq 4 \\
&\leq 2^{-r^3\delta/4}
\end{aligned}$

So ${\rm Prob}[T_{\mathcal{S}_r} > 20r \cdot 28r^2(1 + \delta)] \leq 2^{-r^3\delta/4}$ for all $\delta \geq 4$.
So the expected value of $T_{\mathcal{S}_r}$, the construction time, is $O(r^3) = O(\log^{3/2}(n))$ with an exponentially decaying tail probability.
\qed
\end{proof}

\section{Negative Results for Polymer Growth}
\label{sec:negative-results}

Here we show that the constructions in the previous section are asymptotically the best possible.
This is done in Theorems~\ref{thm:deterministic-lb},~\ref{thm:types-extreme-lb} and~\ref{thm:speed-extreme-lb}.
A collection of intervening lemmas are used to prove bounds on the number of monomer types and expected time to carry out an \emph{insertion sequence}: a sequence of monomer insertions where each insertion is into a site created by the previous insertion.

Observe that if two monomer types of the same sign are insertable into a common site, then the set of sites each can be inserted into is equal.
Nearly all of the lemmas involve consideration of not only monomer types, but \emph{insertion sets}: maximal sets of same-signed monomer types sharing a common set of insertion sites each can be inserted into. 
The system described in Theorem~\ref{thm:types-extreme-ub} uses $\Theta(\log^{2/3}(n))$ singleton insertion sets.
On the other hand, the construction of Theorem~\ref{thm:speed-extreme-ub} uses $\Theta(\sqrt{\log{n}})$ insertion sets, each of $\Theta(\sqrt{\log{n}})$ monomer types, e.g. $\{(s_{f_1(b+1)}^*, s_{f_0(a)}^*, s_{f_0(a)}, x)^- : 0 \leq b < r \}$ in the first row of Table~\ref{tab:all-monomers-speed-extreme-ub}.
These larger sets decrease construction time by increasing the total concentration of monomer types insertable into each site.

The first several lemmas of the section are used to prove Theorem~\ref{thm:deterministic-lb} and Lemma~\ref{lem:trade-off-lb}, a lemma describing the trade-off between the number of monomer types and expected construction time for systems constructing finite polymer sets.
This lemma is combined with extremal bounds on the minimum number of monomer types and insertion sets to prove that the constructions of Theorem~\ref{thm:types-extreme-ub} and~\ref{thm:speed-extreme-ub} are optimal at both ends of this trade-off.
 
\begin{lemma}
\label{lem:positive-only-sites}
Any insertion sequence of length $l$ with no repeated insertion sites has $\Theta(l)$ sites of the form $(a, b) (\overline{c}, \overline{a})$ with $b \neq c$.
\end{lemma}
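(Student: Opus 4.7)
The plan is to classify every insertion site by its complementarity conditions. Call a site $(a,b)(c,d)$ \emph{positive} if $\overline{a}=d$ (so rule~1 applies and a $+$-monomer can be inserted), \emph{negative} if $\overline{b}=c$ (so rule~2 applies), and \emph{dual} if both conditions hold. The sites of the form $(a,b)(\overline{c},\overline{a})$ with $b\neq c$ counted by the lemma are precisely the \emph{positive-only} sites, so the task is to show that $\Theta(l)$ of the sites produced along the sequence are positive-only. The $O(l)$ direction is immediate since the sequence produces at most $2l+1$ distinct sites; all the work is in the $\Omega(l)$ lower bound.

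I would begin with two structural observations that follow directly from inspecting the insertion rules: a rule-1 insertion at any site produces two child sites each satisfying the negative condition, and a rule-2 insertion produces two children each satisfying the positive condition. Consequently the rule applied along the chain is forced to alternate except at dual sites, where the adversary may choose either rule, and every dual site has the rigid palindromic shape $(a,b)(\overline{b},\overline{a})$.

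The central technical step is to use the no-repeated-sites hypothesis to control dual sites. Here I would carry out the key parameter calculation: given a dual parent $(a,b)(\overline{b},\overline{a})$ and either rule, each of the two children is itself dual \emph{only} when the inserted monomer's free parameters are pinned to values that make the child coincide, as a 4-tuple, with the parent. Hence no two consecutive chain sites can both be dual, so at most $\lceil l/2 \rceil$ chain sites are dual. A parallel calculation shows that if a dual $s_{i-1}$ is followed in the chain by a negative-only $s_i$, then the rule-2 insertion at $s_i$ cannot produce a dual child without recreating $s_{i-1}$ as a 4-tuple; therefore at least one of the two children produced at $s_i$ is positive-only.

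Combining these, a constant fraction of the chain's rule-2 steps each yield a positive-only site (either as the next chain site or as the accompanying sibling site in the polymer), producing $\Omega(l)$ positive-only sites overall. The main obstacle I anticipate is the casework and bookkeeping in the middle step: the no-repeat constraint must be traced through chain segments of length two or three to rule out all dual-child configurations, and the final count has to be robust against the adversary's freedom to choose either rule at each dual site. The cleanest way to handle this last point is probably a telescoping argument along the chain that tracks the signed surplus $(\#\text{positive-only created}) - O(\#\text{dual chain sites})$ and shows it grows by a constant per step.
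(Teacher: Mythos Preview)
Your central claim has a gap. You assert that if a dual $s_{i-1}$ is followed in the chain by a negative-only $s_i$, then any dual child produced by the rule-2 insertion at $s_i$ must recreate $s_{i-1}$. This holds for only one of the two children. Concretely, take $s_{i-1}=(a,b)(\overline b,\overline a)$ and, via rule~1, $s_i=(a,b)(\overline b,\overline c)$ with $c\neq a$. Rule~2 at $s_i$ inserts $(e,\overline a,c,f)^-$, producing children $(a,b)(e,\overline a)$ and $(c,f)(\overline b,\overline c)$. The left child is dual only when $e=\overline b$, and then it does equal $s_{i-1}$; but the right child is dual when $f=b$, giving $(c,b)(\overline b,\overline c)$, which is \emph{not} $s_{i-1}$ since $c\neq a$. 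With $e=\overline b$ and $f=b$ both children are dual and neither is positive-only; the chain may continue with the fresh dual site $s_{i+1}=(c,b)(\overline b,\overline c)$.

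The paper's proof has the identical unjustified step: at this point it asserts ``$e,f\neq b$ since sites cannot repeat,'' but $f=b$ yields $(c,b)(\overline b,\overline c)$, which repeats no earlier chain site. Iterating this escape gives what appears to be a counterexample to the lemma as stated. The chain
\[
(d_0,b)(\overline b,\overline{d_0}),\ (d_1,b)(\overline b,\overline{d_0}),\ (d_1,b)(\overline b,\overline{d_1}),\ (d_2,b)(\overline b,\overline{d_1}),\ (d_2,b)(\overline b,\overline{d_2}),\ \dots
\]
(alternating rule~1 at dual sites, taking the right child, with rule~2 at negative-only sites, taking the left child) has no repeated sites whenever $d_0,d_1,d_2,\dots$ are pairwise distinct, yet every chain site is dual or negative-only. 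Your telescoping surplus cannot rescue this: along such a chain the count of positive-only sites---whether chain sites or siblings, since the siblings can all be arranged to coincide with earlier dual chain sites---never increases.
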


\begin{proof}
Insertion sites have one of three forms: 

\begin{enumerate}[leftmargin=2cm] \itemsep5pt
\item[Positive:] $(a, b) (\overline{c}, \overline{a})$ with $b \neq c$.
\item[Mixed:] $(a, b) (\overline{b}, \overline{a})$.
\item[Negative:] $(b, a) (\overline{a}, \overline{c})$ with $b \neq c$.
\end{enumerate}

We prove that every sequence of four consecutive insertion sites has at least one positive site.
Consider such a sequence of four sites (and the three intervening insertions).
If the first site is positive, we're done.
If the first site is mixed, then the first monomer type inserted may be negative or positive.
If a negative monomer type is inserted:

$$ (a, b) \diamond (\overline{b}, \overline{a}) $$ 
$$ (a, b) (\overline{c}, \overline{a}, a, d) (\overline{b}, \overline{a}) $$ 

Since the sequence does not repeat sites, $b \neq c, d$ and either second site, $(a, b)(\overline{c}, \overline{a})$ or $(a, d) (\overline{b}, \overline{a})$, is positive.
If a positive monomer type is inserted:

$$ (a, b) \diamond (\overline{b}, \overline{a}) $$ 
$$ (a, b) (\overline{b}, \overline{c}, d, b) (\overline{b}, \overline{a}) $$ 

As before, $a, d \neq c$ since sites cannot repeat. 
So the next insertion must use a negative monomer type.
We assume the left site is used next in the sequence (a symmetric argument works if the right site is used instead).
The entire insertion sequence has the form: 

$$ (a, b) \diamond (\overline{b}, \overline{a}) $$ 
$$ (a, b) \diamond (\overline{b}, \overline{c}, d, b) (\overline{b}, \overline{a}) $$ 
$$ (a, b) \diamond (\overline{b}, \overline{c}) $$
$$ (a, b) (\overline{e}, \overline{a}, c, f) (\overline{b}, \overline{c}) $$

As before, $e, f \neq b$ since sites cannot repeat.
So the next site, either $(a, b) (\overline{e}, \overline{a})$ with $b \neq e$ or $(c, f) (\overline{b}, \overline{c})$ with $f \neq b$ is positive.
So the third site in the sequence is positive.
Finally, if the intial site is negative then the first monomer type inserted is negative:

$$ (b, a) \diamond (\overline{a}, \overline{c}) $$ 
$$ (b, a) \diamond (\overline{d}, \overline{b}, c, e) (\overline{a}, \overline{c}) $$ 
$$ (b, a) \diamond (\overline{d}, \overline{b}) $$

We assume that the left site is used next in the sequence (a symmetric argument works if the right side is used instead).
If $a \neq d$ then the second site is positive.
Otherwise the second site is mixed, and by previous argument, at most two more insertions (a total of three) will take place until a positive site appears.
So in the entire sequence of length $l$, a positive site appears at least once in every sequence of four consecutive sites.
\qed
\end{proof}

\begin{lemma}
\label{lem:usable-sites-ub}
Any insertion sequence with no repeated insertion sites using $k$ monomer types forming $m$ insertion sets has length $O(m\sqrt{k})$.
\end{lemma}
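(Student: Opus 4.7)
The plan is to apply Lemma~\ref{lem:positive-only-sites} to reduce the problem to counting positive-site insertions, then bound the insertions contributed per insertion set by $O(\sqrt{k})$ and sum.

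First, by Lemma~\ref{lem:positive-only-sites}, at least a constant fraction of the sites in any insertion sequence with no repeated sites are positive sites of the form $(a, b)(\bar{c}, \bar{a})$ with $b \neq c$. Hence it suffices to show that any such sequence contains $O(m\sqrt{k})$ positive-site insertions. Each positive-site insertion at $(a, b)(\bar{c}, \bar{a})$ uses a positive monomer with first symbol $\bar{b}$ and fourth symbol $c$, and the collection of all such monomers forms a single positive insertion set. Partitioning the positive-site insertions by the set used gives $\sum_S U_S$, where $U_S$ counts uses of $S$ and the sum ranges over the at most $m$ positive insertion sets.

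The key step is to show $U_S = O(\sqrt{k})$ for each positive insertion set $S$ with characteristic symbols $(x, w)$. Distinct uses of $S$ consume distinct sites of the form $(a, \bar{x})(w, \bar{a})$, so the values of $a$ used across uses of $S$ are all distinct. Each such site must be created by the immediately preceding insertion in the sequence, so the preceding monomer either has last pair $(a, \bar{x})$ (when $S$'s site is produced as the right new site of the previous insertion) or first pair $(w, \bar{a})$ (when it is the left new site). I would perform a case analysis on the sign and insertion set of this preceding monomer, associating each distinct value of $a$ with a specific preceding monomer type or a pair thereof. A Zarankiewicz-style incidence bound between $a$-values and preceding monomer types, exploiting the total cap of $k$ monomer types in the system, should then yield $U_S = O(\sqrt{k})$.

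The hardest part will be establishing this per-set bound. The difficulty lies in exploiting the no-repeat condition \emph{globally}: each candidate $a$-context must be certified by a distinct preceding monomer whose own creating site has not appeared previously, and chaining this constraint back along the sequence is what forces the $\sqrt{k}$ (rather than $k$) dependence. I expect the proof to combine an explicit case analysis of how sites are created and consumed with an extremal-combinatorics inequality of Zarankiewicz type. Once the per-set bound $U_S = O(\sqrt{k})$ is in hand, summing over the at most $m$ positive insertion sets gives $\sum_S U_S = O(m\sqrt{k})$, and Lemma~\ref{lem:positive-only-sites} then yields $l = O(m\sqrt{k})$.
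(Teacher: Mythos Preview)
Your reduction to positive sites via Lemma~\ref{lem:positive-only-sites} and the partition by positive insertion set are fine, but the crux of the argument---the per-set bound $U_S = O(\sqrt{k})$---is not established, and I do not believe it can be. You correctly observe that distinct uses of $S_{(b,c)}$ occur at distinct diagonals $a_1,\ldots,a_N$, and that for each $a_i$ the preceding monomer has either last pair $(a_i,b)$ or first pair $(\overline{c},\overline{a_i})$. Since the $a_i$ are distinct, these preceding monomers are distinct \emph{types}, but that only gives $N \le k$, not $N = O(\sqrt{k})$. The appeal to a ``Zarankiewicz-style incidence bound'' and ``chaining this constraint back along the sequence'' does not supply the missing factor: there is no bipartite incidence structure here with both sides of size $O(\sqrt{k})$, and going one step further back just produces another collection of $N$ distinct types. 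In fact the uniform per-set bound appears to be false: nothing prevents a single positive set from being used $\Theta(k)$ times while the remaining sets are each used $O(1)$ times, with the total still $O(m\sqrt{k})$. The $\sqrt{k}$ in the lemma is a bound on the \emph{average} use, not the maximum.

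The paper's proof does not attempt a per-set bound. Instead it indexes positive sites by the diagonal symbol $s_i$ (the $a$ in your notation) and defines $L_i$, $R_i$ as the sets of monomer types with third symbol $s_i$ and second symbol $\overline{s_i}$, respectively. The number of positive sites at diagonal $i$ is at most $\min(m,|L_i|\cdot|R_i|)$, and clearly $\sum_i \max(|L_i|,|R_i|) \le 2k$. The key step is a K\"onig-theorem argument showing $\sum_i \min(|L_i|,|R_i|) \le 3m$: each positive site is followed by an insertion into one of its two resulting sites, and the insertion set required there is determined (up to the ``unique $d$ for $c$'' constraint) by the pair $(i,b)$ or $(c,i)$; covering all such sites with at most $m$ insertion sets is a bipartite vertex cover whose size is bounded by a matching of size $\sum_i \min(|L_i|,|R_i|)$. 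With these three constraints in hand, the paper solves a small optimization problem to conclude $\sum_i \min(m,|L_i||R_i|) = O(m\sqrt{k})$. This is a genuinely global argument that your per-set decomposition cannot reproduce.
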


\begin{proof}
Let $\mathcal{S} = (\Sigma, \Delta, Q, R)$ be the insertion system containing the sequence.
Relabel the symbols in $\Sigma \cup \{s^* : s \in \Sigma\}$ as $s_1, s_2, \dots, s_{4k}$, with some of these symbols possibly unused.
Note that this implies that for every $s_i$, $\overline{s_i} = s_j$ for some $j \in 1, 2 \dots, 4k$.
Let $l$ be the length of the sequence.
By Lemma~\ref{lem:positive-only-sites}, $\Theta(l)$ sites are \emph{positive}: they have the form $(s_a, s_b) (\overline{s_c}, \overline{s_a})$ with $b \neq c$.

\textbf{A bound of $\sum_{i=1}^{4k}{\rm{min}(|L_i|, |R_i|)} \leq 3m$.}
Let $L_i$ and $R_i$ be the sets of monomer types of the forms $(\underline{~~}, \underline{~~}, s_i, \underline{~~})^{\pm}$ and $(\underline{~~}, \overline{s_i}, \underline{~~}, \underline{~~})^{\pm}$, respectively, used in the insertion sequence.
Each positive site $(s_i, s_b) (\overline{s_c}, \overline{s_i})$ consists of a left monomer in $L_i$ and right monomer in $R_i$.
Every occurrence of a positive site in the sequence is followed by the use of the left or right resulting site, e.g.:

$$ (s_i, s_b) \diamond (\overline{s_c}, \overline{s_i}) $$
$$ (s_i, s_b) \diamond (\overline{s_b}, \overline{s_d}, s_e, s_c) (\overline{s_c}, \overline{s_i}) $$
$$ (s_i, s_b) \diamond (\overline{s_b}, \overline{s_d}) $$

It is the case that $d$ is unique for $c$, i.e. no two insertions into positive sites using the left resulting sites both use monomers of the form $(\overline{s_b}, \overline{s_d}, \underline{~~}, \underline{~~})^+$, since such a pair of monomers implies the sequence repeats the insertion site $(s_a, s_b) (\overline{s_b}, \overline{s_d})$.
A similar claim holds for $e$ and $b$ in the case that the right site is used.
So inserting into the resulting site requires a monomer from a distinct insertion set $\{(\underline{~~}, \overline{s_i}, s_d, \underline{~~})^- \in \Delta \}$ or, in the special case that $i = d$, $\{(\overline{s_b}, \underline{~~}, \underline{~~}, s_b)^+ \in \Delta \}$.

The resulting sites require monomers from a number of distinct insertion sets equal to the sum of two values.
First, the number of times the left side is used with a distinct $c$ and a monomer is inserted into a site $(s_i, s_b)(\overline{s_b}, \overline{s_d})$ with $d$ unique for $c$.
Second, the number of times the right side is used with a distinct $b$ and a monomer is inserted into a site $(s_e, s_c)(\overline{s_c}, \overline{s_i})$ with $e$ unique for $b$.
An assignment of left and right side usage that minimizes the number of distinct insertion sets needed is nearly equivalent to a minimum vertex covering of the following bipartite graph:

\begin{itemize}[label=\textbullet] \itemsep5pt
\item A node $L_{(i, b)}$ for every site $(s_i, s_b)(\overline{s_c}, \overline{s_i})$ in the insertion sequence.
\item A node $R_{(c, i)}$ for every site $(s_i, s_b)(\overline{s_c}, \overline{s_i})$ in the insertion sequence.
\item An edge $(L_{(i, b)}, R_{(c, i)})$ for every site $(s_i, s_b) (\overline{s_c}, \overline{s_i})$ in the insertion sequence. 
\end{itemize}

Selecting a vertex to cover a given edge corresponds to using the resulting left or right site of the edge's site, e.g. selecting $R_{(c, i)}$ to cover the edge $(L_{(i, b)}, R_{(c, i)})$ corresponds to using the resulting left site and inserting a monomer type of the form $(\underline{~~}, \overline{s_i}, s_d, \underline{~~})^-$, where $d$ is unique for $c$.
 By K\"{o}nig's theorem (see~\cite{Diestel-2005a,Bondy-1976a}), since the graph is bipartite, the size of a minimum vertex covering is equal to the size of a maximum matching, which is bounded from above by $\sum_{i=1}^{4k}\rm{min}(|L_i|, |R_i|)$.

However, an insertion set $\{(\underline{~~}, \overline{s_e}, s_d, \underline{~~})^- \in \Delta\}$ corresponds to selecting both $R_{(c, i)}$, where $d$ is unique for $c$, and $L_{(j, b)}$, where $e$ is unique for $b$. 
So the number of insertion sets needed may be as little as half the size of the vertex cover of the bipartite graph.
Additionally, one site may not be inserted into.
So $\sum_{i=1}^{4k}\rm{min}(|L_i|, |R_i|) - 1 \leq 2m$ and $\sum_{i=1}^{4k}\rm{min}(|L_i|, |R_i|) \leq 3m$.

\textbf{Maximizing insertion sequence length.}
Consider the number of positive sites $y$ accepting some monomer type.
We proved that $\Omega(l) = y $ and it is easily observed that $y \leq \sum_{i=1}^{4k}{\rm{min}(m, |L_i|\cdot|R_i|)}$.
We also proved that $\sum_{i=1}^{4k}{\rm{min}(|L_i|, |R_i|)} \leq 3m$ and it is easily observed that $\sum_{i=1}^{4k}{\rm{max}(|L_i|, |R_i|)} \leq 2k$, since each monomer type is in at most one $L_i$ and one $R_i$.
This gives the following set of constraints:

\begin{enumerate} \itemsep5pt
\item $\Omega(l) = \sum_{i=1}^{4k}{\rm{min}(m, |L_i|\cdot|R_i|)}$.
\item $\sum_{i=1}^{4k}{\rm{min}(|L_i|, |R_i|)} \leq 3m$.
\item $\sum_{i=1}^{4k}{\rm{max}(|L_i|, |R_i|)} \leq 2k$.
\end{enumerate}

Observe that $|L_i|\cdot|R_i| = \rm{min}(|L_i|, |R_i|) \cdot \rm{max}(|L_i|, |R_i|)$.
Define two new variables $y_i = \rm{min}(|L_i|, |R_i|)$ and $z_i = \rm{max}(|L_i|, |R_i|)$ for an alternate formulation of the previous constraints:

\begin{enumerate} \itemsep5pt
\item $\Omega(l) = \sum_{i=1}^{4k}{\rm{min}(m, y_iz_i)}$.
\item $\sum_{i=1}^{4k}{y_i} \leq 3m$.
\item $\sum_{i=1}^{4k}{z_i} \leq 2k$.
\end{enumerate}

Relax $y_i$, $z_i$ to be real-valued and let $W = \{ i : y_iz_i > 0\}$. 
If $0 < y_iz_i, y_jz_j < m$ for some $i \neq j$ and $y_i = \rm{max}(y_i, z_i, y_j, z_j)$, then $\rm{min}(m, y_iz_i) + \rm{min}(m, y_jz_j) < \rm{min}(m, y_i(z_i + \varepsilon)) + \rm{min}(m, y_j(z_j - \varepsilon))$ for sufficiently small $\varepsilon > 0$.
More generally, if $0 < y_iz_i, y_jz_j < m$ for some $i \neq j$ then the values of $y_i, z_i, y_j, z_j$ can be modified to increase $\sum_{i=1}^{4k}{\rm{min}(m, y_iz_i)}$.
Therefore the maximum value is achieved when $m = y_iz_i$ for all but at most one $i \in W$.

We claim that it cannot be that $y_iz_i = m$ for $6\sqrt{k}$ distinct values of $i$.
By contradiction, assume so.
So $|W| \geq 6\sqrt{k}$ and the average value of $y_i$ for $i \in W$ must be less than $3m/(6\sqrt{k}) = m/(2\sqrt{k})$.
So for a subset $W' \subseteq W$ with $|W'| \geq |W|/2 \geq 3\sqrt{k}$, $y_i \leq 2 \cdot m/(2\sqrt{k}) = m/\sqrt{k}$ for all $i \in W'$.
For every $i \in W'$, because $y_i \leq m/\sqrt{k}$ and $y_iz_i = m$, it must be the case that $z_i \geq \sqrt{k}$.
So $\sum_{i=1}^{4k}{z_i} \geq |W'| \cdot \sqrt{k} \geq 3k$, a contradiction with the constraint that $\sum_{i=1}^{4k}{z_i} \leq 2k$.

So the maximum value is achieved when $m = y_iz_i$ for all but at most one $i \in W$, with $|W| + 1 < 6\sqrt{k} + 1 < 7\sqrt{k}$.
So $\sum_{i=1}^{4k}{\rm{min}(m, y_iz_i)} \leq (|W|+1)m < 7m\sqrt{k}$.
So $\Omega(l) = 7m\sqrt{k}$ and $l = O(m\sqrt{k})$. 
\qed
\end{proof}

\begin{lemma}
\label{lem:speed-lb}
An insertion sequence of length $l$ using monomer types from $m$ insertion sets with no repeated insertion sites takes $\Omega(ml)$ expected time.
\end{lemma}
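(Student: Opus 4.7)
The plan is to bound the expected time of the sequence by the sum of mean waiting times at each of the $l$ insertion sites, then lower bound this sum using the structure of insertion sets and the concentration budget. After the $(j-1)$-th insertion in the sequence the $j$-th site is created; the waiting time until some monomer is inserted there is exponential with rate $R_j$ equal to the sum of concentrations of all monomer types insertable at that site, with mean $1/R_j$. Since the sequence's insertions occur in order, $\mathbb{E}[\text{total time}] \geq \sum_{j=1}^{l} 1/R_j$, and the task reduces to showing $\sum_j 1/R_j = \Omega(ml)$.

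The first structural step is to observe that each insertion site accepts monomers from at most two insertion sets. By Rule~1, if $\overline{a}=d$ then any insertable positive monomer $(\overline{b},e,f,\overline{c})^+$ has its first and fourth symbols fixed by the site, so all insertable positive monomers at site $j$ share a common $(\text{first},\text{fourth})$ pair and hence belong to a single positive insertion set; an analogous argument under Rule~2 gives a single negative insertion set. Therefore $R_j \leq C_{i_+(j)} + C_{i_-(j)}$, where $C_i$ denotes the total concentration of insertion set $i$.

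The next step is to set up an AM--HM / Cauchy--Schwarz lower bound. Let $n_i$ be the number of insertions in the sequence whose inserted monomer belongs to set $i$; since $m$ distinct sets appear, $n_i \geq 1$ for each of those $m$ sets and $\sum_i n_i = l$. Attributing each insertion to its used set and using $R_j \leq 2\max(C_{i_+(j)},C_{i_-(j)})$ gives
\[
\sum_{j=1}^{l} \frac{1}{R_j} \;\geq\; \tfrac{1}{2}\sum_{i=1}^{m} \frac{n_i}{C_i}.
\]
Subject to the budget $\sum_i C_i \leq 1$, Cauchy--Schwarz (achieved at $C_i \propto \sqrt{n_i}$) yields $\sum_i n_i/C_i \geq (\sum_i \sqrt{n_i})^2$, so it suffices to prove $(\sum_i \sqrt{n_i})^2 = \Omega(ml)$.

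The main obstacle is this last inequality, because the bare constraints $n_i \geq 1$ and $\sum_i n_i = l$ permit the adversarial choice $n_1 = l-m+1$, $n_2 = \cdots = n_m = 1$, which gives only $\Theta(l)$. To rule this out I would invoke the no-repeated-sites hypothesis: each insertion set is parameterized by a fixed pair of symbols and so can accept only a bounded family of sites (indexed by a single free flanking symbol), which prevents any one $n_i$ from being too large in a non-repeating sequence. Combining this cap on individual $n_i$ with the bipartite-graph / vertex-cover bookkeeping from Lemma~\ref{lem:usable-sites-ub} --- now applied to the distribution of sites among insertion sets rather than the total sequence length --- should force $n_i = O(l/m)$ on average, whence $\sum_i \sqrt{n_i} = \Omega(\sqrt{ml})$ and the desired bound follows. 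Carrying out this refinement of the bipartite argument is where I expect the bulk of the work to lie.
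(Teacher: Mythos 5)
Your setup mirrors the paper's: linearity of expectation, a waiting time of $1/R_j$ per site, and the reduction to minimizing $\sum_i x_i/c_i$ subject to $\sum_i c_i \le 1$ and $\sum_i x_i = \Omega(l)$. (The paper gets there slightly differently: by Lemma~\ref{lem:positive-only-sites} it restricts to the $\Theta(l)$ \emph{positive} sites, each of which accepts exactly one insertion set, so $R_j$ is exactly one set's total concentration. Your ``at most two sets per site'' version has a small technical flaw --- if the set actually used at site $j$ is the lower-concentration one of the two insertable there, then $1/R_j \ge \tfrac{1}{2}\cdot 1/C_{i(j)}$ can fail --- but this is repairable by working only with positive sites as the paper does.)

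The genuine gap is the one you named yourself. After Cauchy--Schwarz the minimum over concentrations is $\left(\sum_i \sqrt{n_i}\right)^2$, and with only $n_i \ge 1$ and $\sum_i n_i = \Omega(l)$ this can be as small as roughly $l + m^2$, which is not $\Omega(ml)$. Your proposed repair --- that no single insertion set can absorb more than $O(l/m)$ of the sites --- is not justified by the structure you cite: a positive insertion set is determined by its fixed first and fourth symbols $(\overline{s_b}, s_c)$ and can be inserted into every site of the form $(s_a, s_b)(\overline{s_c}, \overline{s_a})$, one per choice of $a$, so in a non-repeating sequence $n_i$ is bounded only by the alphabet size (about $4k$), not by $l/m$. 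You should also know that the paper does not close this gap your way: it asserts, via a pairwise perturbation of the $c_i$, that the constrained minimum of $\sum_i x_i/c_i$ is attained when all ratios $x_i/c_i$ are equal, whence each term equals $\sum_i x_i / \sum_i c_i = \Omega(l)$ and the sum is $\Omega(ml)$. That assertion is exactly what your Cauchy--Schwarz computation contradicts: the true minimizer is $c_i \propto \sqrt{x_i}$ (for $m=2$, $x_1 = 100$, $x_2 = 1$ the equal-ratio point gives $202$ while $c_i \propto \sqrt{x_i}$ gives $121$), and the second case of the paper's perturbation claim is false (take $x_i/c_i > x_j/c_j$ with $x_i/c_i^2 > x_j/c_j^2$). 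So your argument is incomplete precisely where you said it is, and the missing ingredient --- a structural reason the $x_i$ cannot be concentrated on a single insertion set --- is not supplied by the paper's own proof either.
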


\begin{proof}
By linearity of expectation, the total expected time of the insertions is equal to the sum of the expected time for each insertion.
By Lemma~\ref{lem:positive-only-sites}, $\Theta(l)$ sites are both \emph{positive}, i.e.\ they have the form $(s_a, s_b) (\overline{s_c}, \overline{s_a})$ with $b \neq c$, and accept the monomer types of a positive, non-empty insertion set.

Let $m$ be the number of insertion sets formed by the monomer types inserted into these $\Omega(l)$ sites.
Let $c_1, c_2, \dots, c_m$ be the sums of the concentrations of the monomer types in these sets,
and $x_1, x_2, \dots, x_m$ be the number of times a monomer from each set is inserted in the subsequence. 
Then the total expected time for all of the insertions in the subsequence is $\sum_{i=1}^{m}x_i/c_i$.
Moreover, these variables are subject to the following constraints:

\begin{enumerate} \itemsep5pt
\item $\sum_{i=1}^{m}x_i = \Omega(l)$ (total number of insertions is $\Omega(l)$). 
\item $\sum_{i=1}^{m}c_i \leq 1$ (total concentration is at most 1).
\end{enumerate}

\textbf{Minimizing expected time.}
We now consider minimizing the total expected time subject to these constraints, starting with proving that $x_i/c_i = x_j/c_j$ for all $1 \leq i, j \leq m$.
That is, that the ratio of the number of sites that accept an insertion set to the total concentrations of the monomer types in the set is equal for all sets.
Assume, without loss of generality, that $x_i/c_i > x_j/c_j$ and $c_i, c_j > 0$.
Then it can be shown algebraically that the following two statements hold:

\begin{enumerate}
\item If $c_j \geq c_i$, then for sufficiently small $\varepsilon > 0$, $\frac{x_i}{c_i} + \frac{x_j}{c_j} > \frac{x_i}{c_i + \varepsilon} + \frac{x_j}{c_j - \varepsilon}$.
\item If $c_j < c_i$, then for sufficiently small $\varepsilon > 0$, $\frac{x_i}{c_i} + \frac{x_j}{c_j} > \frac{x_i}{c_i - \varepsilon} + \frac{x_j}{c_j + \varepsilon}$.
\end{enumerate} 

Since the ratios of every pair of monomer types are equal, 

$$\frac{c_i}{1} \leq \frac{c_i}{\sum_{i=1}^{m}{c_i}} = \frac{x_i}{\sum_{i=1}^{m}{x_i}} = O(x_i/l)$$

So $\Omega(l) = x_i/c_i$ and $\Omega(ml) = \sum_{i=1}^{m}x_i/c_i$.
\qed
\end{proof}

We now combine these two lemmas to prove a lower bound for deterministic systems matching the construction of Theorem~\ref{thm:types-extreme-ub} and a trade-off lower bound for systems constructing a finite set of polymers used to prove that the constructions of Theorems~\ref{thm:types-extreme-ub} and~\ref{thm:speed-extreme-ub} are optimal at both ends of the length-speed trade-off curve. 

\begin{theorem}
\label{thm:deterministic-lb}
Any polymer of length $n$ constructed by a deterministic insertion system with $k$ monomer types takes $\Omega(\log^{5/3}(n))$ expected time.
\end{theorem}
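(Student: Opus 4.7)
The plan is to combine the two structural lemmas, Lemma~\ref{lem:usable-sites-ub} and Lemma~\ref{lem:speed-lb}, applied to a long insertion sequence extracted from the construction of the target polymer $P$. The key leverage from determinism is that every insertion set is a singleton, which forces $m=k$ throughout the argument.

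First I would establish two structural facts. (i) In a deterministic system each insertion site accepts at most one monomer type: otherwise two distinct monomers at the same position would yield two intermediate polymers that both must extend to the unique target $P$, impossible since $P$ contains a single monomer at each position. Hence every used insertion set is a singleton, so $m'=k'$ whenever a sequence uses $k'$ monomer types. (ii) Consider the insertion tree for the construction of $P$: a tree of $n-2$ nodes, each with at most two children (the insertions into the new sites the node creates). I claim no root-to-leaf path of this tree repeats a site-form. Indeed, if site-form $s$ occurred at an ancestor $v$ and a descendant $w$ on the same path, then by determinism the subtrees rooted at $v$ and at $w$ would each be completely determined by $s$, hence isomorphic and in particular of equal cardinality, contradicting the fact that the subtree at $v$ strictly contains the subtree at $w$.

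Next, since the insertion tree is binary with $n-2$ nodes, it has depth $L = \Omega(\log n)$, so some root-to-leaf path is an insertion sequence of length $L$ with no repeated sites. Applying Lemma~\ref{lem:usable-sites-ub} with $m'=k'$ gives $L = O(k'^{3/2})$, hence $k' = \Omega(\log^{2/3}(n))$. Applying Lemma~\ref{lem:speed-lb} to the same sequence yields expected completion time $\Omega(m' L) = \Omega(k' L) = \Omega(\log^{5/3}(n))$. Since the insertions in an insertion sequence must occur sequentially (each is into a site created by its predecessor), the expected construction time of $P$ is at least the expected completion time of this path, giving the claimed lower bound.

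I expect the main obstacle to be the no-repeated-sites claim for root-to-leaf paths: without it the upper-bound Lemma~\ref{lem:usable-sites-ub} cannot be invoked and the counting argument collapses. The singleton-insertion-sets observation and the $\Omega(\log n)$ tree-depth bound are routine once that claim is in hand.
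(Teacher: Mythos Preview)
Your proposal follows essentially the same route as the paper: establish that deterministic systems have only singleton insertion sets (so $m=k$), extract an insertion sequence of length $\Omega(\log n)$ with no repeated sites, and then combine Lemma~\ref{lem:usable-sites-ub} with Lemma~\ref{lem:speed-lb}. The paper's proof is terser---it does not spell out the insertion-tree argument you give for (ii), instead invoking the no-repeated-sites fact implicitly (it is stated explicitly only in the later proofs of Theorems~\ref{thm:types-extreme-lb} and~\ref{thm:speed-extreme-lb}). Your subtree-isomorphism argument for (ii) is a clean way to make this step explicit; an equivalent phrasing is that the fill $F(s)$ between the boundaries of a site $s$ is uniquely determined by the site type in a deterministic system, so a repetition along a path would force $F(s)$ to be a proper contiguous subsequence of itself.

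One small point worth tightening: your justification for (i) says the two intermediate polymers cannot both extend to $P$ ``since $P$ contains a single monomer at each position,'' but after further insertions the two competing monomers $m_1$ and $m_2$ need not land at the same position of $P$---they could in principle both lie in the segment of $P$ between the site's boundaries. The paper argues this differently, by passing to the subsystems $\Delta\setminus\{m_1\}$ and $\Delta\setminus\{m_2\}$ and observing that each constructs a polymer (necessarily also constructed by $\mathcal{S}$) that the other cannot. Either line can be made rigorous; the conclusion $m=k$ is correct, and once it is in hand your argument and the paper's are identical.
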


\begin{proof}
Suppose a system $\mathcal{S} = (\Sigma, \Delta, Q, R)$ has an insertion set with monomer types $m_1$ and $m_2$, and inserts $m_1$ into some polymer.
Then all polymers constructed by systems $(\Sigma, \Delta - \{m_1\}, Q, R)$ and $(\Sigma, \Delta - \{m_2\}, Q, R)$ are constructed by $\mathcal{S}$ and both have polymers not constructed by the other containing $m_2$ and $m_1$, respectively.
So any deterministic system with no unused monomer types has exclusively singleton insertion sets, i.e.\ $m = k$.
By Lemma~\ref{lem:usable-sites-ub}, $n = 2^{O(k\sqrt{k})}$ and $m = k = \Omega(\log^{2/3}(n))$.
So by Lemma~\ref{lem:speed-lb}, the expected time to construct a polymer of length $n$ is $\Omega(ml) = \Omega(\log^{5/3}(n))$.
\end{proof}

\begin{lemma}
\label{lem:trade-off-lb}
Any polymer of length $n$ constructed by an insertion system with $k$ monomer types constructing a finite set of polymers takes $\Omega(\log^2(n)/\sqrt{k})$ expected time.
\end{lemma}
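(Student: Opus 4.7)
The plan is to extract, from the construction of the length-$n$ polymer, a single insertion sequence (a chain of insertions each made into a site created by its predecessor) that is long, non-repeating in its insertion sites, and then to apply Lemmas~\ref{lem:usable-sites-ub} and~\ref{lem:speed-lb} to this sequence.

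First I would argue that in any system producing only finitely many polymers, no insertion sequence can use the same insertion site twice. The reason is that if an insertion sequence takes site $\sigma$ to site $\sigma$ again, then by translating the same chain of insertions into the second copy of $\sigma$, and then again, and so on, one obtains polymers of unbounded length, contradicting finiteness. So every insertion sequence appearing in the construction of any polymer of the system has no repeated insertion sites, and both Lemmas~\ref{lem:usable-sites-ub} and~\ref{lem:speed-lb} apply to it.

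Next I would produce a long insertion sequence. View the construction of the target polymer $P$ as a binary tree: the root is the initiator's single insertion site, and each internal node corresponds to an insertion that splits a site into two children sites (at most one of which may be terminal). Since $P$ has length $n$, this tree has $\Theta(n)$ nodes, hence depth at least $\lfloor \log_2(n-2) \rfloor$. A root-to-leaf path in this tree is exactly an insertion sequence of length $l = \Omega(\log n)$; by the previous paragraph its sites are distinct.

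Let $m$ denote the number of insertion sets in the system. Applying Lemma~\ref{lem:usable-sites-ub} to this sequence yields $\log n = O(l) = O(m\sqrt{k})$, hence $m = \Omega(\log(n)/\sqrt{k})$. Applying Lemma~\ref{lem:speed-lb} to the same sequence gives an expected time of $\Omega(ml) = \Omega\!\left(\log(n)/\sqrt{k} \cdot \log n\right) = \Omega(\log^2(n)/\sqrt{k})$, as required. The only substantive obstacle is justifying the no-repeat property from finiteness; the rest is a direct plug-in of the two preceding lemmas applied to a logarithmic-depth path in the insertion tree.
\qed
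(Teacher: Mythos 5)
Your proposal is correct and follows essentially the same route as the paper: the paper's proof is just the two plug-ins of Lemmas~\ref{lem:usable-sites-ub} and~\ref{lem:speed-lb}, asserting without elaboration that the construction involves an insertion sequence of length $\Omega(\log n)$ with no repeated sites. Your additional justifications (the pumping argument for non-repetition under finiteness, and the depth-$\Omega(\log n)$ path in the insertion tree) are sound and simply make explicit what the paper leaves implicit.
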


\begin{proof}
By Lemma~\ref{lem:usable-sites-ub}, $n = 2^{O(m\sqrt{k})}$.
So $m = \Omega(\log{n}/\sqrt{k})$.
Constructing any polymer of length $n$ requires an insertion system of length $l = \Omega(\log{n})$. 
Then by Lemma~\ref{lem:speed-lb}, the expected time to construct any polymer of length $n$ is $\Omega(ml) = \Omega(\log^2(n)/\sqrt{k})$. 
\end{proof}

Now Lemma~\ref{lem:trade-off-lb} is combined with additional bounds on the minimum values of $k$ and $m$ to prove that the constructions in Section~\ref{sec:positive-results} are optimal in both monomer types and expected construction time for systems that construct a finite set of polymers and a polymer using the fewest monomer types (Theorem~\ref{thm:types-extreme-lb}) and the least expected time (Theorem~\ref{thm:speed-extreme-lb}).

\begin{theorem}
\label{thm:types-extreme-lb}
Any polymer constructed by an insertion system with $k$ monomer types constructing a finite set of polymers has length $2^{O(k^{3/2})}$.
Moreover, constructing a polymer of length $n = 2^{\Theta(k^{3/2})}$ takes $\Omega(\log^{5/3}(n))$ expected time.
\end{theorem}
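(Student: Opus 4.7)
The plan is to derive the length bound from Lemma~\ref{lem:usable-sites-ub} together with the obvious inequality $m\leq k$, and then derive the time bound by substituting the resulting value of $k$ into Lemma~\ref{lem:trade-off-lb}.

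First I would record the standing observation that in a system constructing only finitely many polymers, no insertion sequence contains a repeated insertion site. The reason is that a repeat would give a cycle in the ``sequence'' relation: starting from the polymer just before the first occurrence of the site, one could apply the intervening block of insertions, reach a polymer containing that same site type, and then iterate the block indefinitely. Each iteration strictly increases the polymer length (the cycle inserts at least one monomer), so this would yield polymers of unbounded length, contradicting finiteness. This observation is exactly what lets us feed our system into Lemma~\ref{lem:usable-sites-ub}.

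Next I would bound the polymer length. Applying Lemma~\ref{lem:usable-sites-ub} and using the trivial inequality $m \leq k$, every insertion sequence in the system has length $l = O(m\sqrt{k}) = O(k^{3/2})$. To convert this into a bound on $n$, I would organize the $n-2$ monomer insertions producing a given polymer into a ``creation forest'' in which each insertion is the child of the insertion that created the site it occupies (roots are insertions into the initial site $QR$). Each insertion destroys one site and creates at most two, so the forest is binary, and by construction its depth is at most the maximum insertion sequence length $l$. Hence $n-2 \leq 2^{l+1}$, giving $n = 2^{O(l)} = 2^{O(k^{3/2})}$, which is the first half of the theorem; equivalently, $k = \Omega(\log^{2/3}(n))$.

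Finally, for the time bound, I would invoke Lemma~\ref{lem:trade-off-lb}, which says the expected construction time is $\Omega(\log^2(n)/\sqrt{k})$. In the extremal regime $n = 2^{\Theta(k^{3/2})}$, we have $k = \Theta(\log^{2/3}(n))$, so $\sqrt{k} = \Theta(\log^{1/3}(n))$, and substitution yields
\[
\Omega\!\left(\frac{\log^2(n)}{\log^{1/3}(n)}\right) = \Omega\!\left(\log^{5/3}(n)\right),
\]
as required. The only non-routine point in this whole argument is the creation-forest step turning an insertion-sequence bound into a polymer-length bound; everything else is a direct appeal to the earlier lemmas plus bookkeeping. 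I expect the writeup to fit in a short paragraph.
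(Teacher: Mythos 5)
Your proposal is correct and follows essentially the same route as the paper: both arguments combine Lemma~\ref{lem:usable-sites-ub} with $m \leq k$ to bound insertion-sequence length by $O(k^{3/2})$, relate that to $n$ via the binary creation tree (the paper states the contrapositive direction, that a length-$n$ polymer forces a repeat-free sequence of length at least $\log_2 n$), and then substitute $k = \Theta(\log^{2/3} n)$ into Lemma~\ref{lem:trade-off-lb}. Your explicit justification that finiteness forbids repeated insertion sites is a detail the paper leaves implicit, but it is sound.
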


\begin{proof}
First, observe that constructing any polymer of length $n$ in a system constructing a finite set of polymers involves an insertion sequence of length at least $\log_2(n)$ with no repeated insertion sites.
By Lemma~\ref{lem:usable-sites-ub}, $\log_2(n) = O(m\sqrt{k})$, where $m$ and $k$ are the number of monomer types used in the insertion sequence.
Since the number of insertion sets $m$ is at most $k$ (if each monomer type forms a singleton set), $m \leq k$ and $\log_2(n) = O(k\sqrt{k}) = O(k^{3/2})$.
So $n = 2^{\Theta(k^{3/2})}$ and $k = \Theta(\log^{2/3}(n))$.
By Lemma~\ref{lem:trade-off-lb}, such a polymer requires $\Omega(\log^2(n)/\sqrt{\log^{2/3}(n)}) = \Omega(\log^{5/3}(n))$ expected construction time. 
\qed
\end{proof}

Before proing an expected-time lower bound for all systems constructing finite polymer sets, we prove a helpful lemma showing that the number of insertion sets cannot be too much smaller than the number of monomer types:

\begin{lemma}
\label{lem:m-k-helper-bound}
Any insertion sequence of length $l$ with no repeated insertion sites using $k$ monomer types forming $m$ insertion sets has $m = \Omega(\sqrt{k})$.
\end{lemma}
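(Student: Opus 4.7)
The plan is to derive this as an immediate consequence of Lemma~\ref{lem:usable-sites-ub}. The key observation is that any insertion sequence that uses $k$ distinct monomer types must have length at least $k$: each monomer type contributes at least one insertion to the sequence, since a ``used'' monomer type is by definition one that appears in the sequence. Hence $l \geq k$.

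Next, I would apply Lemma~\ref{lem:usable-sites-ub} to the same sequence. That lemma gives the upper bound $l = O(m\sqrt{k})$ for any insertion sequence with no repeated insertion sites using $k$ monomer types forming $m$ insertion sets. Combining the two bounds yields
$$k \;\leq\; l \;=\; O(m\sqrt{k}),$$
and dividing both sides by $\sqrt{k}$ gives $\sqrt{k} = O(m)$, which is exactly $m = \Omega(\sqrt{k})$.

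I do not anticipate any real obstacle here; the content of the lemma lies entirely in Lemma~\ref{lem:usable-sites-ub}, and the remaining step is just to notice the trivial length lower bound $l \geq k$ and rearrange. The only minor subtlety is to make clear that ``using $k$ monomer types'' is interpreted as the sequence actually containing all $k$ types (consistent with how $k$ is used in Lemma~\ref{lem:usable-sites-ub}), so that the bound $l \geq k$ is justified.
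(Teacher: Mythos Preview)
Your proof is correct but takes a different route from the paper. Both arguments begin with the same trivial observation $k \leq l$. From there, you simply invoke Lemma~\ref{lem:usable-sites-ub} ($l = O(m\sqrt{k})$) and rearrange. The paper instead gives an independent argument establishing the bound $l = O(m^2)$: every positive site $(s_i, s_b)(\overline{s_c}, \overline{s_i})$ in the sequence has a negative monomer on at least one side, so only $O(m)$ values of $i$ can occur; for each such $i$ at most $m$ positive insertion sets can be accepted, giving at most $2m^2$ positive sites and hence $l = O(m^2)$. Your approach is more economical, reusing an already-proved lemma rather than building a new counting argument. The paper's approach has the slight advantage of producing the bound $l = O(m^2)$ directly (with no reference to $k$), which is what is needed downstream in Theorem~\ref{thm:speed-extreme-lb} --- although the paper in fact re-derives $l = O(m^2)$ there by combining Lemmas~\ref{lem:usable-sites-ub} and~\ref{lem:m-k-helper-bound}, so this independent proof ends up being redundant with your shorter derivation.
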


\begin{proof}
Notice that this bound can only be obtained by assuming the monomer types are used to carry out an insertion sequence, since it is possible to have an arbitrarily large set of monomer types belonging to a single insertion set.
The number of monomer types used is at most the length of the insertion sequence ($k \leq l$), and the remainder of the proof is spent proving that the number of insertion sites in a system with $m$ insertion sets is $O(m^2)$ ($l = O(m^2)$), giving the desired inequality.

Let $\mathcal{S} = (\Sigma, \Delta, Q, R)$ be the insertion system containing the sequence.
Relabel the symbols in $\Sigma \cup \{s^* : s \in \Sigma\}$ as $s_1, s_2, \dots, s_{4k}$, with some of these symbols possibly unused.
By Lemma~\ref{lem:positive-only-sites}, $\Omega(l)$ sites are \emph{positive}: they have the form $(s_a, s_b)(s_c, \overline{s_a})$ with $b \neq c$.
 
Since the second monomer inserted to create the site must be negative, each positive site consists of at least one negative monomer type.
Let $L_i^-$ and $R_i^-$ be the sets of monomer types of the forms $(\underline{~~}, \underline{~~}, s_i, \underline{~~})^-$ and $(\underline{~~}, \overline{s_i}, \underline{~~}, \underline{~~})^-$, respectively, used in the insertion sequence of length $l$.
For a specific $i$, there exists a site of the form $(s_i, s_b)(s_c, \overline{s_i})$ only if $|L_i^-| + |R_i^-| > 0$.
So the number of values of $i$ such that a site of the form $(s_i, s_b)(s_c, \overline{s_i})$ exists is at most $\sum_{i=1}^{4k}|L_i^-| + \sum_{i=1}^{4k}|R_i^-|$.
Since all monomer types of a negative insertion set belong to the same $L_i^-$ and $R_i^-$, $\sum_{i=1}^{4k}|L_i^-| + \sum_{i=1}^{4k}|R_i^-| \leq 2m$.

Next, observe there are at most $m$ sites of the form $(s_i, s_b)(s_c, \overline{s_i})$ that accept a monomer, since each site requires a monomer from a different positive insertion set.
So the total number of positive sites that accept a monomer is at most $2m \cdot m = 2m^2$.
Since there are $\Omega(l)$ positive sites in the insertion sequence, $\Omega(l) = 2m^2$ and $l = O(m^2)$.
\qed
\end{proof}

\begin{theorem}
\label{thm:speed-extreme-lb}
Any polymer of length $n$ constructed by an insertion system constructing a finite set of polymers takes $\Omega(\log^{3/2}(n))$ expected construction time.
Moreover, constructing a polymer of length $n$ in $\Theta(\log^{3/2}(n))$ expected time requires using $\Omega(\log{n})$ monomer types.
\end{theorem}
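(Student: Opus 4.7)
The plan is to combine the three preceding lower bounds (Lemmas~\ref{lem:speed-lb}, \ref{lem:trade-off-lb}, and~\ref{lem:m-k-helper-bound}) via a case split on the number of monomer types $k$. Fix an insertion system constructing a finite set of polymers, let $n$ be the length of a constructed polymer, and let $T$ denote its expected construction time. Let $l$ be the length of an insertion sequence producing this polymer (which, since the polymer set is finite, has no repeated insertion sites) and let $m$ be the number of insertion sets used in this sequence. Observe that $l = \Omega(\log n)$ since each insertion at most doubles polymer length.

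Derive two distinct lower bounds on $T$ from the previous work. First, Lemma~\ref{lem:trade-off-lb} directly gives
\[
T = \Omega\!\left(\log^2(n)/\sqrt{k}\right).
\]
Second, Lemma~\ref{lem:m-k-helper-bound} yields $m = \Omega(\sqrt{k})$, and then Lemma~\ref{lem:speed-lb} combined with $l = \Omega(\log n)$ gives
\[
T = \Omega(ml) = \Omega\!\left(\sqrt{k}\,\log n\right).
\]
The first bound is decreasing in $k$ while the second is increasing in $k$, so their minimum is optimized at their crossing point $k = \Theta(\log n)$, where both equal $\Theta(\log^{3/2}(n))$. Formalize this with a two-case argument: if $k \leq \log n$, the first bound gives $T = \Omega(\log^2(n)/\sqrt{\log n}) = \Omega(\log^{3/2}(n))$; if $k > \log n$, the second bound gives $T = \Omega(\sqrt{\log n} \cdot \log n) = \Omega(\log^{3/2}(n))$. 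Either way $T = \Omega(\log^{3/2}(n))$, proving the first half.

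For the second claim, suppose $T = \Theta(\log^{3/2}(n))$. Applying Lemma~\ref{lem:trade-off-lb} again,
\[
\log^{3/2}(n) = \Omega\!\left(\log^2(n)/\sqrt{k}\right),
\]
which rearranges to $\sqrt{k} = \Omega(\sqrt{\log n})$, i.e.\ $k = \Omega(\log n)$, as required.

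The argument is essentially a packaging of existing lemmas, so no step looks technically difficult. The only place requiring care is ensuring the case split is correctly phrased: specifically, the quantities $m, l, k$ must refer to the ingredients of a single insertion sequence realizing the length-$n$ polymer, so that Lemmas~\ref{lem:usable-sites-ub}, \ref{lem:speed-lb}, and~\ref{lem:m-k-helper-bound} all apply simultaneously with consistent parameters. Once that is set up, the balancing of the two bounds at $k = \Theta(\log n)$ is immediate, and both halves of the theorem fall out together. This also matches the upper bound of Theorem~\ref{thm:speed-extreme-ub}, which uses $\Theta(\log n)$ monomer types to achieve $O(\log^{3/2}(n))$ expected time, confirming tightness.
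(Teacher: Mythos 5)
Your proof is correct and uses the same ingredients as the paper's: the paper simply composes Lemma~\ref{lem:usable-sites-ub} ($l = O(m\sqrt{k})$) with Lemma~\ref{lem:m-k-helper-bound} ($m = \Omega(\sqrt{k})$) to obtain $\log_2 n \leq l = O(m^2)$, hence $m = \Omega(\sqrt{\log n})$ and $T = \Omega(ml) = \Omega(\log^{3/2}(n))$ in one line --- which is exactly your two-case balancing argument with $k$ eliminated --- and handles the second claim identically via Lemma~\ref{lem:trade-off-lb}. The one point to tighten is that your case split must be on the number of monomer types appearing in the insertion sequence (since Lemma~\ref{lem:m-k-helper-bound} only bounds $m$ in terms of that quantity), so in the case $k \leq \log n$ you should invoke the sequence-level bound $T = \Omega(\log^2(n)/\sqrt{k})$ from the \emph{proof} of Lemma~\ref{lem:trade-off-lb} (via Lemmas~\ref{lem:usable-sites-ub} and~\ref{lem:speed-lb}) rather than its statement, which is phrased in terms of the system's total monomer count.
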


\begin{proof}
First, observe that constructing a polymer of length $n$ in a system constructing a finite set of polymers involves an insertion sequence of length $\log_2(n) \leq l$ with no repeated sites.
By Lemmas~\ref{lem:usable-sites-ub} and~\ref{lem:m-k-helper-bound}, $\log_2(n) = O(m^2)$ and so $m = \Omega(\sqrt{\log{n}})$
Then by Lemma~\ref{lem:speed-lb}, carrying out the insertion sequence and completing the construction of the polymer takes $\Omega(ml) = \Omega(\log^{3/2}(n))$ expected time and by Lemma~\ref{lem:trade-off-lb}, $k = \Omega(\log{n})$.
\qed
\end{proof}

\section{Open Problems}
\label{sec:open-problems}

The results of Sections~\ref{sec:positive-results} and~\ref{sec:negative-results} describe the landscape of efficient polymer construction using insertion systems.
Trivial systems of just a few polymers can construct polymers of arbitrary length in optimal time, but with the caveat that the growth is uncontrolled and the systems construct infinite set of polymers. 
On the other hand, deterministically constructing a polymer of length $n$ requires $\Omega(\log^{2/3}(n))$ monomer types and $\Omega(\log^{5/3}(n))$ expected time, and both of these are achievable simultaneously. 
The intermediate situation of constructing finite sets of polymers is more intricate -- polymers can be constructed faster, but with the trade-off of using more monomer types \emph{and} non-determinism.

In our system achieving $O(\log^{3/2}(n))$ expected construction time (Theorem~\ref{thm:speed-extreme-ub}), an exponential number ($2^{\Theta(n\log\log{n})}$) of ``junk'' terminal polymers are constructed.
Since achieving such speed requires significantly fewer insertion sets than monomer types, some junk is necessary -- but how much?
One approach to proving a lower bound is to prove that insertion sites accepting large insertion sets imply a large number of terminal polymers.
We have been unable to prove such an implication even in the simplest case:

\begin{conjecture}
Every deterministic system with no unused monomer types has exclusively singleton insertion sets.
\end{conjecture}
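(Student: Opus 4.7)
My plan is to proceed by contradiction. Suppose $\mathcal{S}$ is deterministic with unique terminal polymer $T$ and has no unused monomer types, and assume that two distinct monomer types $m_1 \neq m_2$ lie in the same insertion set. I would first establish that both $m_1$ and $m_2$ appear in $T$: since neither is unused, each is inserted in some polymer constructed by $\mathcal{S}$, and every constructed polymer either equals $T$ or is non-terminal and extends to $T$, so every monomer that is ever inserted survives in $T$.

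Next I would prove the following key lemma: at any site $s$ in any derivation of $T$ at which a monomer from $\{m_1, m_2\}$ is inserted, the sub-polymer of $T$ lying strictly between the two flanking monomers of $s$ must contain both an $m_1$ and an $m_2$. This follows from the defining property of insertion sets: $m_1$ and $m_2$ share insertion sites, so both are insertable at $s$; inserting either yields a polymer that must extend to $T$ by determinism, forcing $T$ to contain both copies between the flanks.

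With this lemma in hand, I would attempt an infinite-descent argument. For any $\{m_1, m_2\}$-monomer $m^\star$ at position $p$ of $T$, consider the site formed between the right end of $T$'s monomer at position $p-1$ and the left end of $T$'s monomer at position $p+1$. In the first case, this site is itself an $\{m_1, m_2\}$-insertion site; then there exists a valid derivation in which $m^\star$ is inserted last at $p$ with its immediate $T$-neighbors as flanks, so the sub-polymer between the flanks is just $\{m^\star\}$, contradicting the key lemma. In the second case, this immediate-neighbor site does not accept $\{m_1, m_2\}$, so every actual insertion of $m^\star$ has flanks strictly farther apart than its $T$-neighbors, and the sub-polymer between the flanks strictly contains $\{m^\star\}$; one then iterates the key lemma on this enlarged sub-polymer to generate further $\{m_1, m_2\}$-monomers that must themselves be ``protected'' by the same mechanism.

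The principal obstacle is closing this descent rigorously, since the key lemma is only a necessary condition and does not obviously prevent an adversarially designed system from sustaining the second case as a steady state. The most promising refinement I see is to combine the descent with a sub-system argument in the spirit of the proof of Theorem~\ref{thm:deterministic-lb}: for any internal site $(L, R)$ appearing in some derivation, the sub-system with initiator $(L, R)$ and monomer types restricted to those actually used in the $(L, R)$-sub-polymer of $T$ is itself deterministic with strictly smaller terminal, since any two distinct terminals of the sub-system would lift to two distinct terminals of $\mathcal{S}$. If that sub-polymer contains both $m_1$ and $m_2$, one obtains a clean induction on $|T|$; the residual cases, in which the two monomers fall on opposite sides of every split at the first $\{m_1, m_2\}$-insertion, would then be ruled out separately by applying the key lemma to the second derivation step and tracking which flanks appear in both the $m_1$-first and $m_2$-first variants.
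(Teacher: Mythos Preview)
This statement appears in the paper as a \emph{conjecture} in the Open Problems section; the authors explicitly write ``We have been unable to prove such an implication even in the simplest case'' immediately before it, so there is no paper proof to compare your attempt against. (A two-line argument inside the proof of Theorem~\ref{thm:deterministic-lb} appears to assert the claim, but that argument only shows that the subsystems $\mathcal{S}\setminus\{m_1\}$ and $\mathcal{S}\setminus\{m_2\}$ construct different polymers; it does not produce two distinct \emph{terminal} polymers of $\mathcal{S}$, which is what non-determinism would require. That the authors subsequently pose the statement as open suggests they are aware that argument is incomplete.)

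On its own merits: your key lemma is correct and is a genuine advance over the paper's sketch. Insertions inside and outside a pair of flanks evolve independently, and since $T$ is unique the interval of $T$ between the flanks is determined by the site alone, hence must contain whichever of $m_1,m_2$ was inserted first --- so it contains both. However, the remainder of the argument has gaps beyond the one you flag. In your ``first case,'' the fact that the ends of $T[p-1]$ and $T[p+1]$ form an $\{m_1,m_2\}$-accepting site does not by itself yield a derivation in which $m^\star$ is inserted between them: there is no general reordering principle guaranteeing that $T[p-1]$ and $T[p+1]$ are ever adjacent in some derivation of $T$. In your sub-system induction, the restricted system with initiator $(L,R)$ need not be deterministic in a way that lifts back to $\mathcal{S}$, since a terminal polymer of the restriction may still admit insertions from monomer types you excluded; and even if it is deterministic, the inductive hypothesis need not apply, because although the $L$--$R$ interval of $T$ contains both $m_1$ and $m_2$, no single derivation of that interval is required to \emph{use} both. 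In short, your proposal identifies the right invariant but does not close the conjecture, and the obstacle you name is essentially the content of the open problem.
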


Since assembling a polymer in $o(\log^{5/3}(n))$ expected time requires $\Omega(\log{n})$ insertions along most insertion sequences are non-deterministic, the previous conjecture implies that any improvement in speed comes with an exponential number of junk terminal polymers:

\begin{conjecture}
Any system constructing a polymer of length $n$ in $O(\log^{3/2}(n))$ expected time constructs a set of $2^{\Omega(n)}$ polymers.
\end{conjecture}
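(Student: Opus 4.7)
The plan is to deduce the conjecture from the preceding one (singleton insertion sets characterize deterministic systems) via a branching argument: fast construction forces many of the insertion events producing a length-$n$ polymer to draw from non-singleton insertion sets, and each non-singleton event should contribute an independent bit to the identity of the resulting terminal polymer.

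\textbf{Step 1 (forcing non-singleton sets).} Suppose $\mathcal{S}$ constructs a polymer $P$ of length $n$ in expected time $O(\log^{3/2}(n))$. The longest chain of dependent insertions in any construction of $P$ has length $l = \Omega(\log n)$, so Lemma~\ref{lem:speed-lb} forces the number of insertion sets used along that chain to be $m = O(\sqrt{\log n})$, and Lemma~\ref{lem:usable-sites-ub} in turn forces the number of monomer types used to be $k = \Omega(\log n)$. Since $k > m$, at least one insertion set is non-singleton, and by the preceding conjecture $\mathcal{S}$ is non-deterministic.

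\textbf{Step 2 (perturbations yield distinct polymers).} Fix a construction of $P$ and organize its $n-2$ insertion events into a rooted binary tree $\mathcal{T}$, where each node $v$ represents one insertion and its (at most two) children are the insertion sites $v$ creates. At any $v$ whose insertion drew from a non-singleton set $S_v$, substituting a different monomer $m'_v \in S_v$ is legal; since insertions only add material and never alter existing symbols, the monomer at position $v$ in any resulting terminal polymer is exactly the one chosen at $v$. Given any antichain $A \subseteq \mathcal{T}$ of non-singleton nodes, the subtrees rooted at elements of $A$ evolve causally independently -- insertions in one never affect insertion sites in another -- so substitutions across any subset $U \subseteq A$ can be applied simultaneously, yielding a terminal polymer whose monomers at the positions of $A$ record $U$ exactly. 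Hence $\mathcal{S}$ constructs at least $2^{|A|}$ distinct terminal polymers.

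\textbf{Step 3 (finding a large antichain).} We need $|A| = \Omega(n)$. The depth of $\mathcal{T}$ is polylogarithmic in $n$: any root-to-leaf path is an insertion sequence, and by Lemma~\ref{lem:speed-lb} combined with the time bound its length is $O(\log^{3/2} n)$. Combined with a refinement of Step~1 showing that $\Omega(n)$ of the $n-2$ nodes of $\mathcal{T}$ use non-singleton sets, averaging across depths produces one level of $\mathcal{T}$ -- hence an antichain -- of size $\Omega(n / \log^{3/2} n)$, already giving $2^{\Omega(n / \log^{3/2} n)}$ terminal polymers.

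\textbf{Main obstacle.} The gap to the conjectured $2^{\Omega(n)}$ is the crux. Closing it demands a \emph{local} strengthening of Lemma~\ref{lem:m-k-helper-bound}: every subtree of $\mathcal{T}$ on $s$ nodes must contain $\Omega(s)$ non-singleton insertions, which would rule out pathological constructions in which non-singleton events cluster on a thin frontier while the remainder of $\mathcal{T}$ is deterministic ``filler''. Even the weaker premise used in Step~3 -- that $\Omega(n)$ of the insertions in any length-$n$ construction draw from non-singleton sets -- does not obviously follow from the present machinery. I expect that proving such a local bound will require a genuinely new argument about how deterministic and non-deterministic insertions can interleave within a single construction; this is presumably the same obstacle responsible for both conjectures remaining open.
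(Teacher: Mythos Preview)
This statement is a \emph{conjecture} in the paper's open-problems section; the paper offers no proof, only the heuristic sentence preceding it. So there is no argument to compare your proposal against---you are attacking an open problem, and your own final paragraph correctly identifies it as such.

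On the substance of your sketch: Step~1 is sound, though the appeal to the preceding conjecture is idle---you conclude ``$\mathcal{S}$ is non-deterministic'' but never use that conclusion; the branching argument in Step~2 relies only on the existence of non-singleton sets, which you get directly from $k>m$. Step~2 has a gap you do not flag: distinctness of the $2^{|A|}$ terminal polymers is not established by ``the monomer at position $v$ is exactly the one chosen at $v$.'' Here $v$ is a node of the insertion tree, not a position in the final string; after substituting $m'_v$ for $m_v$, the subtree below $v$ may evolve arbitrarily, and nothing prevents the resulting string interval from coinciding with one produced by a different choice at a different node of $A$. You would need to argue that the choice at each $v\in A$ is recoverable from the final string, which is not obvious once subtree contents are uncontrolled.

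Step~3, as you acknowledge, rests on an unproven premise (that $\Omega(n)$ of the insertions draw from non-singleton sets) and even granting it yields only $2^{\Omega(n/\log^{3/2}n)}$. Your diagnosis of the remaining obstacle---a local version of Lemma~\ref{lem:m-k-helper-bound} asserting that every subtree has a constant fraction of non-singleton insertions---is reasonable, and is consistent with the paper's own framing that this conjecture would follow from a suitable strengthening of the insertion-set machinery. But as written this is a research plan, not a proof, and the paper does not claim otherwise.
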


Setting aside non-determinism, the trade-off between monomer types and construction time has a lower bound (Lemma~\ref{lem:trade-off-lb}) with matching upper bounds only at the extremes.
Does there exist a parameterized system matching the lower bound across the entire range?
 
\begin{conjecture}
For every combination of $n$ and $k$ such that $\log_2^{2/3}(n) \leq k \leq \log_2(n)$, there exists a system with $k$ monomer types that constructs a polymer of length $n$ in $O(\log^2{n}/\sqrt{k})$ time. 
\end{conjecture}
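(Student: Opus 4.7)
The plan is to interpolate between the two extremal constructions of Theorems~\ref{thm:types-extreme-ub} and~\ref{thm:speed-extreme-ub}. Set $r = \sqrt{k}$ and choose $j \in [2,3]$ so that $r^j = \Theta(\log n)$; as $k$ ranges over $[\log_2^{2/3}(n),\log_2(n)]$ the exponent $j$ ranges over $[2,3]$, with $j=3$ recovering the deterministic triple-loop and $j=2$ recovering the doubly-non-deterministic double-loop. For each such pair $(r,j)$ I would build a system $\mathcal{S}_{r,j}$ using $\Theta(r^2)$ monomer types arranged into exactly $m = \Theta(r^{j-1})$ insertion sets of average size $s = r^{3-j}$, whose longest polymer has length $2^{\Theta(r^j)} = n$ and is constructed in expected time $O(\ell\cdot m) = O(r^j \cdot r^{j-1}) = O(r^{2j-1}) = O(\log^2(n)/\sqrt{k})$. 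Note that this target time exactly saturates Lemma~\ref{lem:trade-off-lb} along the entire range, and the insertion-set count matches the bound of Lemma~\ref{lem:usable-sites-ub} applied to sequences of length $\log n$.

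Structurally, my $\mathcal{S}_{r,j}$ is a three-level nested for-loop counter in the style of Theorem~\ref{thm:types-extreme-ub}, but with its outermost loop range reduced from $r$ to $r^{j-2}$ so that the total number of sequential counter steps (and hence, via per-inner-step duplication, $\log$ of the polymer length) is $r^{j-2}\cdot r\cdot r = r^j$. This modification alone already delivers the correct polymer length but leaves the system fully deterministic with $m = k$, yielding time $\Theta(r^{j+2})$; to close the gap I would overlay the non-deterministic competition gadgets of Theorem~\ref{thm:speed-extreme-ub}, selectively fusing each family in Table~\ref{tab:all-monomers-types-extreme-ub} into shared insertion sets of size $s = r^{3-j}$. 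Each such fusion multiplies the total concentration available at its sites by $s$ and thus divides per-insertion time by $s$, while preserving \emph{growth-determinism}: within each fused family only the symbolically ``correct'' monomer continues execution and all other guesses yield terminal dead-end sites, keeping the set of constructed polymers finite.

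The steps in order are: (1) from the input parameters compute $r,j,s$ and round where necessary; (2) write down the three families of monomer types (inner, middle, outer) parameterized by counter variables with reduced outer range, importing the parity-bit trick of Theorem~\ref{thm:speed-extreme-ub} to guarantee that no insertion site can recur along a dead branch; (3) collapse families into insertion sets of size $s$ by removing one coordinate of symbol dependence per fusion level; (4) verify growth-determinism by the same four-case analysis (positive, negative, mixed, initial sites) as in Theorem~\ref{thm:speed-extreme-ub}; (5) prove the polymer-length recurrence $P_0 + 2 = 2^{\Theta(r^j)}$ by bounding each step's contribution between constants, as in both prior theorems; and (6) derive the expected-time bound via the same two-stage Chernoff-plus-union-bound argument, here with $\Theta(r^j)$ prerequisite insertions per correct leaf and per-insertion expected time $\Theta(r^{j-1})$, summing to $\Theta(r^{2j-1})$ with an exponentially decaying tail.

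The main obstacle, as in Theorem~\ref{thm:speed-extreme-ub}, is step~(4): maintaining growth-determinism after a \emph{partial} collapse. The existing size-$r$ competition gadgets are tightly engineered so that each fused family has a unique continuation---the proof repeatedly invokes ``the only such monomers in the system have $i = b$''---and extending this coordination to an intermediate fusion degree $s = r^{3-j}$ requires the indexing functions $f_i$ and the symbol algebra to align just so, without either creating unintended deterministic escape paths (which would inflate $m$) or admitting unintended wild insertions (which would produce infinitely many polymers and violate the finiteness hypothesis of Lemma~\ref{lem:trade-off-lb}). A secondary obstacle is that $j$, $s$, and the outer range $r^{j-2}$ are generally non-integral; I would handle this by interpreting $j$ up to rounding and allowing constant-factor slack in the parameter translation, which is sufficient for the asymptotic statement of the conjecture but requires some care to verify that the rounded system still achieves both bounds simultaneously.
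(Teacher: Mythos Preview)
The statement you are attempting is listed in the paper as an open \emph{conjecture}; the paper provides no proof and explicitly leaves it as an open problem. There is therefore nothing in the paper to compare your proposal against.

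As for the proposal itself: it is a plausible plan of attack, and the parameter arithmetic is consistent with the lower bounds of Lemmas~\ref{lem:usable-sites-ub}, \ref{lem:speed-lb}, and~\ref{lem:trade-off-lb}. But it is not a proof, and you say as much. The decisive step is your step~(4): showing that one can fuse the deterministic monomer families of Theorem~\ref{thm:types-extreme-ub} into insertion sets of an \emph{intermediate} prescribed size $s=r^{3-j}$ while preserving growth-determinism and finiteness. You correctly identify this as the main obstacle and then do not resolve it; you only assert that the indexing functions ``align just so.'' In the two endpoint constructions the competition gadgets work because each guessed coordinate ranges over the full set $\{0,\dots,r\}$ and a single downstream monomer checks the guess exactly; it is not at all clear how to engineer a guess over a set of size $r^{3-j}$ for fractional $3-j$, nor how to simultaneously shrink the outer loop to the (generally non-integral) range $r^{j-2}$ without either reintroducing determinism (inflating $m$) or breaking the uniqueness of the continuing monomer (allowing infinite growth). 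Your remark about ``rounding'' does not address this, since the issue is structural rather than merely arithmetic: the gadgets of Theorems~\ref{thm:types-extreme-ub} and~\ref{thm:speed-extreme-ub} do not obviously admit a one-parameter family interpolating between them.

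In short, what you have written is a restatement of why the conjecture is natural together with a sketch of the target parameters, not a construction. The paper's authors evidently regarded the missing step as nontrivial, which is why the statement appears where it does.
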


In Section~\ref{sec:expressive-power}, we were able to prove that insertion systems are expressively equivalent to context-free grammars.
What minimal changes to the definition of insertion systems are necessary to achieve a Turing-universal model? 
For instance, what if we allow monomers to not only insert between existing monomers in a polymer, but replace or ``kick out'' an existing monomer or sequence of $l$ monomers?
Then we estimate a Chomsky-like hierarchy of expressive power for these ``$l$-insertion systems'':

\begin{conjecture}
The expressive power of $1$-insertion systems is equal to context-sensitive grammars.
\end{conjecture}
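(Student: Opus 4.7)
The plan is to prove the two inclusions separately, starting with the easier upper bound.

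For the upper bound (every $1$-insertion system's language is context-sensitive), I extend Dabby and Chen's construction of a CFG expressing any basic insertion system. In a $1$-insertion system the two operation types --- pure insertion between two monomers, and insertion-with-kick-out of a single adjacent monomer --- both preserve or increase polymer length by one. I build a grammar whose non-terminals are indexed by insertion sites, and whose production rules mirror the operation types: for each insertable monomer $m$ at a site $(a,b)(c,d)$ I include a context-free-style rule, and for each kick-out rule replacing a monomer $m$ between neighbors $(a,b)$ and $(c,d)$ by a new monomer $m'$, I include a rule $(a,b)\, m\, (c,d) \to (a,b)\, m'\, (c,d)$. Since all rules are non-length-decreasing, the resulting grammar is context-sensitive. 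Applying an expression map $g$ in the style of Section~\ref{sec:expressive-power-defn} then realises the language of the $1$-insertion system as the image of a context-sensitive language under a length-preserving homomorphism, which is again context-sensitive.

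For the lower bound (every context-sensitive language is expressed by some $1$-insertion system), I start from a CSG in Kuroda normal form, whose rules have the form $A \to a$, $A \to BC$, or $AB \to CD$. The first two rule types are already handled by the construction of Lemma~\ref{lem:IS-express-PG}, which translates pair-grammar rules into insertion systems; this part requires no kick-out capability and is reused verbatim on an appropriately relabelled copy of the alphabet. The new work is simulating the truly context-sensitive rules $AB \to CD$. For each such rule, I introduce a trigger monomer $T_{AB \to CD}$ whose bonding symbols uniquely match the interface between the encoded $A$ and encoded $B$ in the polymer. Insertion of this trigger between $A$ and $B$ produces a local configuration $\cdots A \mid T_{AB \to CD} \mid B \cdots$ from which a scripted three-move sequence of kick-outs --- (i) replace $A$ by the monomers encoding $C$, (ii) replace $B$ by the monomers encoding $D$, (iii) kick out $T_{AB \to CD}$ itself --- implements the rule. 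Each kick-out is gated by bond symbols found only on the trigger and its intended neighbours, so no kick-out fires without the trigger in place.

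The main obstacle will be ensuring that partial or out-of-order execution of these scripted kick-outs does not create terminal polymers outside $L(\mathcal{G})$. Because insertion-system dynamics are irreversible and proceed concurrently across the polymer, we cannot force the three kick-outs associated with a single trigger to complete atomically; intermediate configurations such as $\cdots C\, T_{AB \to CD}\, B \cdots$ must either always admit a further insertion (hence be non-terminal) or be excluded from appearing at all. I plan to address this by designing the trigger bonds so that every such intermediate state has at least one monomer available to insert, forcing the script to run to completion before any terminal polymer is reached. A secondary obstacle is preventing the trigger families from interfering with the pair-grammar simulation of the $A \to BC$ and $A \to a$ rules; I expect to resolve this by assigning trigger monomers a disjoint bond-symbol alphabet from the pair-grammar monomers, and then composing the two constructions. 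A final $\kappa$-spaced mapping $g$, analogous to the one at the end of Lemma~\ref{lem:IS-express-PG}, pushes the insertion-system strings down to the terminal strings of the grammar and completes the equivalence.
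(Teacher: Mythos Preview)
The statement you are trying to prove is a \emph{conjecture} in the paper, appearing in the Open Problems section. The paper offers no proof and does not even give a precise definition of a $1$-insertion system beyond the informal suggestion that monomers may ``replace or `kick out' an existing monomer.'' There is therefore nothing in the paper to compare your attempt against; any argument here would be original research settling an open problem, and the first obligation would be to fix a precise operational semantics for kick-out before anything can be proved.

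On the substance of your proposal, there is an internal inconsistency that would block the argument as written. In the upper-bound direction you rely on the claim that both operation types ``preserve or increase polymer length by one,'' and use this to conclude that the simulating grammar is non-contracting and hence context-sensitive. But in the lower-bound direction your scripted simulation of $AB \to CD$ ends with step~(iii), ``kick out $T_{AB\to CD}$ itself,'' which strictly decreases the polymer length. You cannot have it both ways: either $1$-insertion systems admit a length-decreasing primitive, in which case your upper bound collapses (non-contracting grammars are exactly the context-sensitive ones, and a length-decreasing step breaks that), or they do not, in which case your trigger monomer can never be removed and will contaminate every terminal polymer. A related issue is your appeal to a homomorphism $g$ ``in the style of Section~\ref{sec:expressive-power-defn}'': that map is erasing (it sends most symbols to $\varepsilon$), and context-sensitive languages are not closed under erasing homomorphisms, so this step does not preserve context-sensitivity in the direction you need.

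Finally, the lower-bound sketch is a plan rather than a proof: you correctly identify atomicity of the three-step script as the main obstacle and propose to make all intermediate states non-terminal, but you do not show this can be done without also creating unintended infinite growth or spurious interactions with other triggers elsewhere in the polymer. Given that the paper leaves this as an open conjecture, these are exactly the hard parts, and the proposal does not yet address them.
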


\begin{conjecture}
The expressive power of $l$-insertion systems for $l \geq 2$ is equal to Turing machines.
\end{conjecture}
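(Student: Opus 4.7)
The plan is to prove the claimed equivalence in both directions, following the template of Theorem~\ref{thm:IS-express-CFG}. The upper bound --- every language expressed by an $l$-insertion system is recursively enumerable --- is routine: a nondeterministic Turing machine keeps the current polymer on its work tape and simulates each insertion or $l$-kick-out step in polynomial time per step, accepting iff the simulated polymer is terminal and its $g$-image matches the input. The substance lies in the lower bound: for every Turing machine $M$, construct a 2-insertion system that expresses $L(M)$.

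My approach for the lower bound is to reduce from unrestricted (Type-0) grammars, first placed in a Kuroda-like normal form in which every production is $A \to a$, $A \to BC$, $AB \to CD$, or $AB \to \varepsilon$. The context-free rules $A \to a$ and $A \to BC$ I would handle exactly as in Lemma~\ref{lem:IS-express-PG}: non-terminal occurrences are encoded in $(\Delta_3', \Delta_1')$ insertion sites of the form $(u^*, s_a^*)(s_b, u)$, with the $\Delta_1', \Delta_2', \Delta_3'$ scaffolding implementing the partial-replacement derivation and $\Delta_4'$ emitting terminals. To handle a context-sensitive rule $AB \to CD$, I would introduce a rule-specific family of monomers that fire only after the scaffolding has produced two adjacent non-terminal sites encoding $A$ and $B$; a single 2-kick-out step then ejects the pair of scaffolding monomers sitting between them and replaces it with fresh scaffolding encoding $C$ followed by $D$. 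Deletion rules $AB \to \varepsilon$ are treated analogously, with the kicked-out pair replaced by a short inert bridge that $g$ sends to $\varepsilon$.

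The main obstacle, inherited from Lemma~\ref{lem:IS-express-PG}, is that ordinary insertion sites are only partially consumed, so the left and right halves of a rewritten site initially belong to different derivation positions. The 2-insertion primitive finally lets us fuse these halves in a single atomic event, but I must carefully show three things: (a) each rule's kick-out monomers fire only at the intended $A$--$B$ adjacency pattern, which I would enforce by embedding rule-specific guard symbols into the scaffolding during the setup phase, so that no spurious $(u^*, s_x^*)(s_y, u)$ pair can ever match a kick-out; (b) within any terminal polymer the density of scaffolding symbols between two consecutive symbols mapped by $g$ into $\Sigma$ is bounded by a fixed $\kappa$ independent of $M$ and the input length, so that the expression definition of Section~\ref{sec:expressive-power-defn} is satisfied; and (c) the terminal polymer language of the constructed system maps under $g$ exactly onto $L(M)$, which I would establish by a bidirectional induction matching maximal 2-insertion sequences with derivations of the grammar. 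Combining the upper and lower bounds then yields the conjecture.
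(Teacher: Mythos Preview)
The statement you are attempting to prove is Conjecture~6.5, listed in the paper's Open Problems section; the paper offers no proof, only the informal suggestion that allowing a monomer to ``kick out'' a length-$l$ block might yield Turing-universality. There is therefore nothing in the paper to compare your argument against, and the model of $l$-insertion systems is not even formally defined there.

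Turning to the content of your proposal, there is a genuine gap in the lower-bound direction centred on the $\kappa$-density requirement of Section~\ref{sec:expressive-power-defn}. In your scheme, every application of a deletion rule $AB \to \varepsilon$ leaves behind an ``inert bridge'' that $g$ maps to $\varepsilon$; more generally, every context-free expansion $A \to BC$ (handled via the $\Delta_1',\Delta_2',\Delta_3'$ scaffolding of Lemma~\ref{lem:IS-express-PG}) deposits a constant amount of scaffolding per non-terminal occurrence. In the context-free setting of Lemma~\ref{lem:IS-express-PG} this is harmless because every non-terminal is eventually cashed in for a terminal via some $A \to a$, bounding scaffolding per terminal by $\kappa = 16$. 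But a Type-0 derivation may expand to an arbitrarily long sentential form and then contract via many $AB \to \varepsilon$ steps to a short terminal string (think of a Turing machine that uses a large work tape and halts with a one-symbol output). In your encoding the resulting terminal polymer carries scaffolding and inert bridges proportional to the \emph{space} used by the derivation, concentrated between a bounded number of terminal symbols; no fixed $\kappa$ can absorb this. A second, related difficulty is that the inert bridges (and, for $AB \to CD$, the replacement scaffolding) sit physically between the neighbours of the rewritten pair, so a subsequent rule that needs, say, $X$ and $C$ to be adjacent after $XAB \Rightarrow XC\cdots$ cannot fire: the leftover material breaks the very adjacency that context-sensitive rewriting depends on. Your items~(a)--(c) acknowledge these obligations but do not indicate any mechanism for discharging them; absent a way to genuinely shrink the polymer and re-establish adjacency (which would require a precise operational semantics for kick-out that the paper does not supply), the construction as sketched does not go through.
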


\bibliographystyle{plain}
\bibliography{insertion_primitive}

\end{document}